\theoremstyle{plain}
\newtheorem{theorem}{Theorem}[section]
\newtheorem{corollary}[theorem]{Corollary}
\newtheorem{lemma}[theorem]{Lemma}
\newtheorem{claim}{Claim}
\theoremstyle{definition}
\newtheorem{definition}{Definition}
\theoremstyle{remark}
\newtheorem{rem}[theorem]{Remark}
\newtheorem{example}{Example}
\newtheorem*{notation}{Notation}
\newcommand{\fc}[1]{\text{Fact}(\mathcal #1)}
\newcommand{\pf}[1]{\text{Pref}(\mathcal #1)}
\newcommand{\pff}[1]{\text{\emph{Pref}}(\mathcal #1)}
\newcommand{\sff}[1]{\text{\em Suff}(\mathcal #1)}
\title{Wheeler Languages}
\author[1]{Jarno Alanko\thanks{jarno.alanko@helsinki.fi}}
\author[2]{Giovanna D'Agostino\thanks{giovanna.dagostino@uniud.it}} 
\author[2]{Alberto Policriti\thanks{alberto.policriti@uniud.it}}
\author[3]{Nicola Prezza\thanks{nprezza@luiss.it}} 
\affil[1]{University of Helsinki, Finland} 
\affil[2]{University of Udine, Italy} 
\affil[3]{Luiss Guido Carli, Rome, Italy}
\begin{document}
 
\maketitle

\tableofcontents

\newpage 

\section*{Introduction}

The Burrows-Wheeler Transform (BWT) of a given string is an invertible transformation with many important and deep properties (see \cite{Burrows94ablock-sorting}). It can be computed on a given string by marking the beginning of the string by the special character \# and reading the first column of the matrix consisting of the co-lexicographically ordered circular permutations of the string (BW-matrix, see Figure \ref{fig-ab}-(a))\footnote{In the ``official'' definition of the transform, the \emph{lexicographic} ordering of circular permutations and a \$-mark of the \emph{end} of the string are used. Working with the co-lexicographic ordering is a bit more natural while studying formal languages and does not make any significant difference.}. 

The fact that the transform is invertible can be seen as one of its most basic and useful features, and it is a consequence of the fact that the BWT (actually the BW-matrix) enjoys the so-called First-Last property (FL-property, more on this below). Being invertible and, at the same time, rich of single-letter runs induced by the co-lexicographic order of prefixes\footnote{The co-lexicographic order of prefixes can be read on the right side of the BW-matrix.}, the BWT becomes the basis for a family of tools needing very little extra data-structures (see \cite{DBLP:journals/csur/NavarroM07}).

The FL-property consists in the observation that in the first and last columns of the BW-matrix, the relative order of different occurrences of the same  character is maintained. Consider, for example, the BWT of the string {\em \#banana}, that is {\em bnn\#aaa}, and notice that the First-Last property can be used to instruct us on how to reconstruct  {\em \#banana}:  
start from \# on the first column, search the occurrence of \# in the last column, move to the first column on the same row, and continue with the corresponding character (i.e. $b$, see Figure \ref{fig-ab}-(b)). 
The correctness of the reconstruction of the original string is a consequence of the FL-property: at each step the character read on the first column corresponds to the one determined on the last column.

\begin{figure}[h!]%
    \centering
    \subfigure[The BWT of the string \emph{\#banana} can be read in the first (F) column of the BW-matrix. Character \# is the lexicographic smallest.] {
      $
\begin{array}{ccccccc}
 F&&&&&&L\\
 b&a&n&a&n&a&\# \\
 n&a&n&a&\#& b&a \\
 n&a&\#& b&a&n&a \\
 \#& b&a&n&a&n&a \\
 a&n&a&n&a&\#& b \\
 a&n&a&\#& b&a&n \\
a&\#& b&a&n&a&n \\
  \\
\end{array} 
    $}%
    \qquad
    \subfigure[The path (automata) encoding the actions to be performed on the BWT (column F) in order to reconstruct the original string, starting from \#.]{
\begin{tikzpicture}[->,>=stealth', semithick, initial text={}, auto, scale=.7]
 \node[state, label=above:{}] (B) at (0,0) {$b$};
 \node[state, label=above:{}] (N1) at (2,0) {$n$};
 \node[state, label=above:{}] (N2) at (4,0) {$n$};
 \node[state,initial , label=above:{}] (S) at (6.5,0) {$\#$};
 \node[state, label=above:{}] (A1) at (8.5,0) {$a$};
 \node[state, label=above:{}] (A2) at (10.5,0) {$a$};
 \node[state, label=above:{}] (A3) at (12.5,0) {$a$};

\draw (S) edge [bend left=30, above] node [bend right, above] {} (B);
\draw (B) edge [bend left=30, below] node [bend right, below] {} (A1);
\draw (A1) edge [bend left=30, above] node [bend right, above] {} (N1);
\draw (N1) edge [bend left=30, above] node [bend right, below] {} (A2);
\draw (A2) edge [bend left=30, above] node [bend right, above] {} (N2);
\draw (N2) edge [bend left=30, below] node [bend right, below] {} (A3);
\end{tikzpicture}
}%
    \caption{Starting from \# in the first (F) column of the BW-matrix in (a), reading the character, and moving to the corresponding position in the last (L) column, the original string can be reconstructed. The full procedure is encoded in the path (linear automaton) in (b).}%
    \label{fig-ab}%
\end{figure}
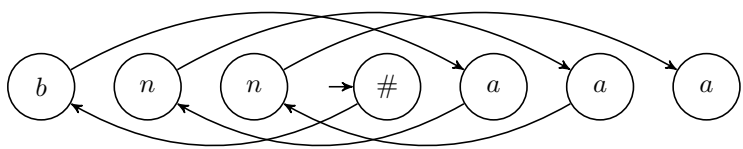

The above graph can be seen as a very simple (linear) state-labelled finite automaton, with node labels organized in the order they appear in the F column (the BWT). With a slight twist, let us now use a different ordering: the one induced by the L column of the BW-matrix. 
The result, reflecting on the linear automaton the nice computational features of the BWT, is depicted in Figure \ref{fig:wheeler path}.  

\begin{figure}[h!]
 \begin{center}
\begin{tikzpicture}[->,>=stealth', semithick, initial text={}, auto, scale=.8]

 \node[state,initial, label=above:{}] (S) at (0,0) {$\#$};
 \node[state, label=above:{}] (B) at (8,0) {$b$};
 \node[state, label=above:{}] (A1) at (2,0) {$a$};
 \node[state, label=above:{}] (N1) at (10,0) {$n$};
 \node[state, label=above:{}] (A2) at (4,0) {$a$};
 \node[state, label=above:{}] (N2) at (12,0) {$n$};
 \node[state, label=above:{}] (A3) at (6,0) {$a$};

\draw (S) edge [bend left=30, above] node [bend right, above] {} (B);
\draw (B) edge [bend left=30, below] node [bend right, below] {} (A1);
\draw (A1) edge [bend left=30, above] node [bend right, above] {} (N1);
\draw (N1) edge [bend left=30, above] node [bend right, below] {} (A2);
\draw (A2) edge [bend left=30, above] node [bend right, above] {} (N2);
\draw (N2) edge [bend left=30, below] node [bend right, below] {} (A3);
\end{tikzpicture}\caption{The path automaton of Figure \ref{fig-ab}, reorganized according to the order of column L.}\label{fig:wheeler path}
\end{center}
\end{figure}
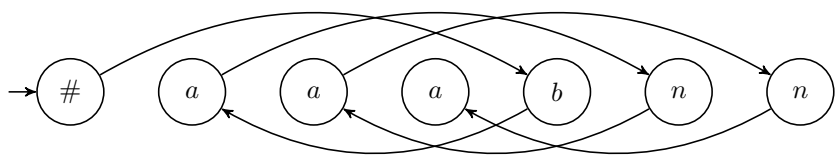
 
In a sense, this layout seems more natural as it orders nodes according to the co-lexicographic ordering of the strings read from the source to each of the nodes. 
%
%
%
The graphs we obtain in this way are precisely paths
encoding the procedure inverting the Burrows-Wheeler transform
of a given string. Much more interestingly,  one may ask the following question: can we generalise our considerations to  the  context of general \emph{ordered} graphs (i.e. not being necessarily  paths)? One may wonder which properties of graphs/automata/orderings enforce the above behaviour.

The objects resulting from this analysis are Wheeler graphs \cite{gagie2017wheeler} and their characterising  properties---working for general ordered graphs---are: 
\begin{itemize}
    \item[ (i)] the ordering of character-labelled states must be coherent with  an  (a priori fixed) order    of characters, and
    \item[(ii)] the ordering of states $ u,v $  bearing the same character-label must be coherent with the ordering of all the predecessor pairs $ u',v' $  with associated arcs $ (u',u), (v',v) $.
\end{itemize}  

\medskip

The main application of Wheeler graphs is that they admit an efficient index data structure for searching for subpaths with a given path label \cite{gagie2017wheeler}. This is in contrast with recent results showing that in general, the subpath search problem can not be solved in subquadratic time, unless the strong exponential time hypothesis is false \cite{equi2019complexity}. The indexing version of the problem was also recently shown to be hard, unless the orthogonal vectors hypothesis is false \cite{equi2020conditional}. The strong exponential time hypothesis implies the orthogonal vectors hypothesis.

In the big picture, Wheeler graphs lift the applicability of the Burrows-Wheeler transform from strings to languages. 

In this paper we study the regular languages 
accepted by automata having a Wheeler graph as transition function.
The study is carried out in both the deterministic and the non-deterministic case and shows that
 Wheeler Automata establish a deep link between  intervals of states---in the Wheeler ordering imposed by the definition---and ``intervals''  
of strings---in the co-lexicographic ordering of prefixes of elements in $ \mathcal L $.  
Our investigation starts from  some results already appeared in \cite{alanko2020regular}, where we proved 
 that the classic characterisation of regular languages based on Myhill-Nerode Theorem can be generalised and adapted to the Wheeler case. The generalisation is proved by introducing equivalence
 classes which are convex sets in the co-lexicographic ordering of prefixes of strings in $ \mathcal L $.  This characterization allows also to prove 
 that the (potential) exponential blow-up in the number of states observed in general  when passing from a 
non-deterministic to a deterministic automaton, cannot take place in the Wheeler case. 
In this paper we apply  these results (which we add with complete proofs for the sake of completeness and readability)  
to find a solution to  the problem  of  effectively testing  for Wheelerness  languages given by a deterministic or non-deterministic automaton.
 In addition, in the deterministic case we can show that the test takes polynomial time. 
The results on testing Wheelerness are based on a theorem that characterises minimal deterministic automata accepting Wheeler languages on a purely graph-theoretic property.

Next, we take the automata's point of view on Wheelerness. More specifically, since   the problem of deciding whether a given NFA can be endowed with a Wheeler order  is obviously decidable, we tackle its  complexity which, although polynomial in special cases (see \cite{alanko2020regular}), is known to be NP-complete in the general case (see \cite{DBLP:conf/esa/GibneyT19}). Here we prove that  over a natural   subclass of NFA, the \emph{reduced} ones---that is, those in which no two states are reachable by the same set of strings---,  the problem can in fact be solved in polynomial time.

Finally, we take a closer look at classical operations among  Wheeler Languages. 
Since Wheeler languages are  a subclass of the class of Ordered Languages (see \cite{thierrin1974ordered_aut}),  they  are star-free, namely they can be generated from finite languages by boolean operations and compositions only. As such,  they are  definable in   the first order theory of linear orders $FO(<)$ (see \cite{DBLP:conf/birthday/DiekertG08} for a survey on FO-definable languages). However, as we shall see, there are very few classical operations  preserving Wheelerness.  
While  regular languages are closed for boolean and regular operations, we  prove that,   with a few exceptions,  this is not  the case  for Wheeler languages. 

\medskip

The paper is organised as follows. 
Section \emph{Basics} contains basic notions and notations.  Section \emph{Wheeler Automata and Covex Sets} introduces the notion of Wheeler Automata and links the natural linear orderings definable on states and strings, respectively. In this section we also  introduce convex equivalences,    allowing us to prove a precise ``Wheeler version'' of the classical Myhill-Nerode Theorem for regular languages.  
Section \emph{Testing Wheelerness} tackles the problem, discussed in two separate subsections, of whether a given language  or a given automaton  is Wheeler. 
The next section,  \emph{Closure Properties for Wheeler Languages},    considers regular operations and   closure properties that are known to  hold for regular languages  and  checks whether they also hold for Wheeler languages. In this section we further consider intervals on the co-lexicographic order, proving that they are Wheeler.  
We conclude the paper with the section  \emph{Conclusions and Open Problems}.

 \section{Basics}
\subsection{Automata}
 If $\Sigma$ is a finite alphabet, we denote by $ \Sigma^*$ ($ \Sigma^+$)  the set of  (non-empty) finite words over $\Sigma$.  If  $\mathcal L \subseteq \Sigma^*$ we denote by $ \text{\em Pref}(\mathcal L),  \text{\em Suff}(\mathcal L), $ and $ \text{\em Fact}(\mathcal L)$ the set of prefixes, suffixes, and factors  of strings in  $\mathcal L$, respectively. More formally:
 \begin{align*}
 \text{\em Pref}(\mathcal L)  & = \{\alpha: \exists \beta \in \Sigma^*~ \alpha \beta\in \mathcal L\}, &  \text{\em Suff}(\mathcal L)  & = \{\beta: \exists \alpha \in \Sigma^*~ \alpha \beta\in \mathcal L\}, & \text{\em  Fact}(\mathcal L)  & = \{\alpha: \exists \beta, \gamma \in \Sigma^*~ \gamma \alpha \beta\in \mathcal L\}.
 \end{align*}

In the following we will denote by  $\mathcal A=(Q, s, \delta,F)$ a finite automaton  (an NFA) accepting strings in $ \Sigma^{*} $, with $ Q $ as set of states, $ s $ \emph{unique} initial state with no incoming transitions, $ \delta(\cdot, \cdot): Q \times \Sigma \rightarrow \mathcal Pow(Q) $ transition function, and $ F\subseteq Q $ final states. Note that assuming that $s$ has no incoming transitions is not restrictive, as any NFA can be made to satisfy this condition by just duplicating $s$ into an initial $s'$ with no incoming transitions and a non-initial $s''$ with all the incoming transitions of the original $s$.

An automaton  $ \mathcal A $  is {\em deterministic}   (a DFA),  if $ |\delta(q,a)|\leq 1 $,  for any $ q \in Q $ and $ a \in \Sigma $. As customary, we extend $ \delta $ to operate on strings as follows: for all $ q \in Q, a\in \Sigma, $ and $ \alpha \in \Sigma^{*} $:
\begin{align*}
 {\delta}(q,\epsilon)=\{q\}, &   \hspace{1cm}
 {\delta}(q,\alpha a) = \bigcup_{v\in{\delta}(q,\alpha)} \delta(v,a).
\end{align*}   
If the automaton is deterministic we  write  ${\delta}(q,\alpha)=q'$  for   the unique $q'$  such that ${\delta}(q,\alpha)=\{q'\}$  (if defined). 
  We denote by   $\mathcal L(\mathcal A)=\{\alpha\in \Sigma^*: \delta(s, \alpha)\cap F\neq \emptyset\}$   the language accepted by the automaton $ \mathcal A $. $ \mathcal A $ is dubbed \emph{complete} if for any $ q \in Q, a\in \Sigma $, $ \delta(q,a) $ is defined. In general, we do not assume $ \delta $ to be {complete}---to see why, wait for Example \ref{incomplete} below---, while we do assume that each state can reach a final state   and also that every state is reachable from the (unique) initial state. 
Hence, $ \text{\em  Pref}(\mathcal L(\mathcal A))$, the collection of prefixes of words accepted by $ \mathcal A $,  consists of the set of  words that can be \emph{read} by $ \mathcal A $.

Using the terminology from~\cite{alanko2020regular}, an {\em  input-consistent automaton} is such that every state has incoming edges labeled by the same character. This class of automata  is the one considered in the original definition  of Wheeler graph in \cite{gagie2017wheeler}. It is  fully general:  any automaton can be converted into an input-consistent one recognizing the same language at the price of increasing $|Q|$ by a multiplicative factor $|\Sigma|$ \cite{alanko2020regular}. 
Moving labels from an edge to its target state, input-consistent automata  can  be  described  as   \emph{state-labeled} automata (see Example (\ref{exw})). In this paper we will therefore use the term \emph{state-labeled} in place of \emph{input-consistent}. Given a state-labeled automaton, we denote by $\lambda : Q \rightarrow \Sigma \cup \{\#\}$ the function that returns the (unique) label of a state, so that  $\delta(u,c)$ is the set of $c$-labelled successors  of $u$. 
To make $\lambda$ complete and to be consistent with the definition of Burrows-Wheeler Transform, we assign $\lambda(s) = \# \notin \Sigma$, where $\#$ is a character not labeling any other state.
When for all  $u, v \in C\subseteq Q$ we have $\lambda(u) = \lambda(v)$, let $\lambda(C)$ be the unique character $c = \lambda(u)$, for any $u\in C$. 
To make notation consistent between edge-labeled and state-labeled automata, given a path   $v_0, \ldots, v_n$ we define 
its label as $\lambda(v_1)\ldots\lambda(v_n)$, 
so that the first  node $v_0$ does not contribute to the string labeling the path.  
All our results dealing with Wheeler automata will use state-labeled automata. In other results, however, we will need to work with standard edge-labeled automata. In this case, we will explicitly say that the automaton is edge-labeled and use the notation $\lambda(u,v)\in \Sigma$ to denote the label of an automaton's edge $(u,v)$ (note that, in the case of edge-labeled automata, no edge is labeled with $\#$).

 
 \subsection{Convex Sets}
 As we shall see,  Wheeler automata and languages  naturally lead to considering \emph{convex subsets} of a linear order. We collect here a few definitions and results that will turn out handy while reasoning on convex sets. 

 \begin{definition}  
 Consider a linear order $(L, <)$.
 \begin{enumerate}
 \item A \emph{convex set}  in  $(L, <)$  is a  $I\subseteq L$  such that
 \[(\forall x,x'\in I)(\forall y \in L) (x< y< x' \rightarrow y\in I).\]
 \item Given $I,J$   convex  in  $(L, <)$ and  $I\subseteq J$, then: 
 \begin{itemize}
\item[-] $I$ is a {\em prefix} of $J$ if $(\forall x \in I )( \forall y \in J\setminus I) ( x< y)$;
\item[-]  $I$   is a    {\em suffix} of $J$ if   $(\forall y \in J \setminus I )( \forall x \in I )( y< x)$.
\end{itemize}
\item A family $\mathcal C$ of non-empty convex sets in  $(L, <)$ is said to have the  {\em prefix/suffix property} if, for all $I,J \in \mathcal C$ such that  $I \subseteq J$, $I$ is either a prefix or a suffix  of $J$.
 \end{enumerate}
  \end{definition}
  
  In particular, if $a,b\in L$ for a linear order $(L,<)$, then we denote by $[a,b]$ the convex set:
  \[ [a,b]=\{c\in L: a\leq c\leq b\}.\]
  $[a,b]$ is called the {\em closed interval based on $a,b$};   other  kinds of intervals, denoted by $(a,b), (a,b], (-\infty, b), \ldots$ are defined as usual. Notice that any  convex set $I$ having  a maximum   and a minimum     is an interval: 
  \[I=[min_< I, max_< I].\] In particular, all convex subsets of a finite linear order are intervals, and we shall use freely both names for them. 
  
  The most convenient feature of a family $ \mathcal C $ enjoying the prefix/suffix property, is the fact that its elements can be easily ordered. 
  
 \begin{definition}
 Let $\mathcal C$ be a family of non-empty convex sets  of a linear order  $(L, <)$ having the prefix/suffix property. Let $<^{i}$ (or simply $ < $) the binary relation over $\mathcal C$ defined by
 \begin{align*}
 I<^{i} J &\Leftrightarrow ( \exists x \in I)( \forall y\in J )( x< y )\lor (\exists y \in J)( \forall x \in I )( x< y).
 \end{align*}
 \end{definition}
The following lemma is easily proved.
 \begin{lemma} \label{convex+order}  
$ (\mathcal C,<^{i}) $ is a strict linear order.  
 \end{lemma}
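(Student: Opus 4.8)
The goal is to establish the three defining properties of a strict linear order for $(\mathcal C,<^i)$: irreflexivity, transitivity, and totality (for distinct $I,J$ at least one of $I<^i J$, $J<^i I$ holds; antisymmetry then gives exactly one). The plan is to first fix a convenient reading of the definition: the first disjunct of $I<^i J$ asserts that $I$ has an element strictly below all of $J$, and the second asserts that $J$ has an element strictly above all of $I$, so $I<^i J$ holds exactly when one of these two witnesses exists. Throughout I would keep the finite picture in mind, where a convex set is the interval $[\min I,\max I]$ (a convex set is determined by its endpoints once they exist) and $I<^i J \iff \min I<\min J \lor \max I<\max J$; this makes the role of the prefix/suffix property transparent. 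I would nonetheless run the formal argument on the quantifier form, so that the proof also covers convex sets lacking a minimum or a maximum.

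Irreflexivity is immediate: either disjunct of $I<^i I$ would require some $x\in I$ with $x<x$. For totality I would argue contrapositively, assuming neither $I<^i J$ nor $J<^i I$ and unfolding the four negated statements. Negating the first disjunct of $I<^i J$ says every element of $I$ has an element of $J$ at or below it, while negating the second disjunct of $J<^i I$ says every element of $I$ has an element of $J$ at or above it. Thus, given $x\in I$, I obtain $y_1,y_2\in J$ with $y_1\le x\le y_2$, so convexity of $J$ forces $x\in J$; hence $I\subseteq J$. The symmetric pair of negated conditions gives $J\subseteq I$, whence $I=J$. I would emphasize that this step uses only convexity and non-emptiness, not the prefix/suffix property.

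For antisymmetry and transitivity I would use the same case-analysis template, and here the prefix/suffix hypothesis becomes essential. For antisymmetry, suppose $I<^i J$ and $J<^i I$ and split on which disjunct supplies each witness. When the two witnesses are of the same type (both ``below'' or both ``above'') I get elements with $x<y$ and $y<x$, an immediate contradiction. In the two mixed cases I obtain, say, one element of $I$ below all of $J$ and another element of $I$ above all of $J$; convexity of $I$ then sandwiches $J$ strictly inside $I$, and the prefix/suffix property applied to $J\subseteq I$ is contradicted by whichever witness lies on the forbidden side of $J$. Transitivity follows the same pattern: from $I<^i J$ and $J<^i K$, the two aligned cases simply chain the witnesses (for instance $x\in I$ below all of $J$ together with $y\in J$ below all of $K$ yields $x$ below all of $K$, so $I<^i K$). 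For the two mixed cases I would assume $\neg(I<^i K)$, run the same negation analysis as in the totality step to produce the extra witness, deduce that $I$ and $J$ are properly nested, and once more contradict the prefix/suffix property.

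The main obstacle is precisely these mixed cases of antisymmetry and transitivity: a naive chasing of disjunctions fails there, and it is only the prefix/suffix hypothesis that rules out a properly nested pair of convex sets sharing neither endpoint — the configuration $\min J<\min I$ together with $\max I<\max J$, which is exactly what would otherwise break antisymmetry and transitivity. The routine bookkeeping is checking, in each mixed case, that the witness produced genuinely falls outside the smaller set on the side forbidden by the prefix/suffix property; everything else is a mechanical unfolding of the quantifiers.
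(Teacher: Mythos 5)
The paper states this lemma without proof (``The following lemma is easily proved''), so there is no argument of record to compare against; judged on its own, your proof is correct and complete. Your quantifier unfolding is accurate (irreflexivity and totality from convexity alone, antisymmetry and transitivity needing the prefix/suffix property only in the mixed-disjunct cases, where a properly nested pair of convex sets arises and is killed by the witness on the forbidden side), and your identification of the configuration $\min J<\min I$, $\max I<\max J$ as the sole obstruction is exactly right.
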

 
Note that whenever  any non-empty convex set $I$  has minimum $m_I$  and maximum $M_I$---which is the case, for example, when the linear order   $(L, <)$ is finite---, the above order  $<^{i}$ can be equivalently described on a family having the prefix/suffix property, by: 
\[
I  <^{i} J   ~\Leftrightarrow ~(m_I <  m_J) \lor [ (m_I=m_J) \wedge (M_I < M_J) ]~ \Leftrightarrow~ m_I+M_I<m_J+M_J. 
\]

  The following lemma will allow us to bound (linearly) the blow-up of the number of states taking place when moving from a  Wheeler NFA to a Wheeler DFA (see Definition \ref{WNFA_WDFA} below). 
\begin{lemma}\label{2n}
	Let $(L, <)$ be a finite linear order of cardinality $|L|=n$, and let  $\mathcal C$ be a {\em prefix/suffix} family of non-empty convex sets in $(L, <)$. Then:
	\begin{enumerate}
		\item $|\mathcal C| \leq 2n-1$.
		\item The upper bound is tight: for every $n$, there exists a prefix/suffix family of size $2n-1$.
	\end{enumerate}
\end{lemma}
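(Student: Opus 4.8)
The plan is to treat the two parts separately, each with a short self-contained argument. Identify $(L,<)$ with $\{1,\dots,n\}$; since $L$ is finite, every $I\in\mathcal C$ is an interval $I=[m_I,M_I]$ with $1\le m_I\le M_I\le n$, determined by its minimum and maximum. For the upper bound the idea is to inject $\mathcal C$ into a fixed set of size $2n-1$. The natural map to use is $f(I)=m_I+M_I$, whose values all lie in $\{2,3,\dots,2n\}$, a set of cardinality exactly $2n-1$. Thus the whole of part 1 reduces to proving that $f$ is injective.

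To prove injectivity I would suppose $I\neq J$ with $m_I+M_I=m_J+M_J$ and derive a contradiction. If $m_I=m_J$ the equal sums force $M_I=M_J$, hence $I=J$; so the minima differ, and by symmetry assume $m_I<m_J$. Equal sums then give $M_I>M_J$, whence $m_I<m_J\le M_J<M_I$, that is $J\subsetneq I$ with strict inequalities at both ends. Therefore $J$ is neither a prefix of $I$ (since $m_J\neq m_I$) nor a suffix of $I$ (since $M_J\neq M_I$), contradicting the prefix/suffix property. This short dichotomy is the crux of the argument and the only place the hypothesis is actually used, so it is the step I would expect to need the most care; equivalently, injectivity is already implicit in the equivalence $I<^i J\Leftrightarrow m_I+M_I<m_J+M_J$ recorded after Lemma~\ref{convex+order}, which exhibits $f$ as a strictly order-preserving (hence injective) map into $\{2,\dots,2n\}$.

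For tightness I would exhibit the explicit family
\[
\mathcal C_0=\{[1,b]:1\le b\le n\}\cup\{[a,n]:1\le a\le n\},
\]
consisting of all left-anchored intervals together with all right-anchored intervals. Each group contains $n$ distinct intervals and the two groups meet only in $[1,n]$, so $|\mathcal C_0|=n+n-1=2n-1$. It then remains to verify the prefix/suffix property on nested pairs: two left-anchored intervals share their minimum (so the smaller is a prefix), two right-anchored intervals share their maximum (so the smaller is a suffix), while a left-anchored interval can be contained in a right-anchored one only when the latter equals $[1,n]$ (and symmetrically), which again reduces to a shared-endpoint case. This verification is routine, so I anticipate no real obstacle in part 2, and together the two parts give the bound and its tightness.
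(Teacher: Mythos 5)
Your proposal is correct and follows essentially the same route as the paper: part 1 is the injection $I\mapsto m_I+M_I$ into $\{2,\dots,2n\}$ (the paper obtains injectivity from the equivalence $I<^i J\Leftrightarrow m_I+M_I<m_J+M_J$ stated after Lemma \ref{convex+order}, whereas you spell out the same dichotomy directly from the prefix/suffix property), and your tight family $\mathcal C_0$ is exactly the paper's family of $[1,n]$ together with all its proper prefixes and suffixes.
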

\begin{proof}
 (1) Since $L$ is finite, for any $I, J\in \mathcal C$ we 
 have 
 \[I< J ~\Leftrightarrow ~ m_I+M_I<m_J+M_J\]
 which implies 
 \[I\neq J ~\Leftrightarrow ~ m_I+M_I\neq m_J+M_J.\]
Since  the possible values of $m_I+M_I$,  for  $I \in \mathcal C$,
range  between $2$ and $2n$, the bound $|\mathcal C| \leq 2n-1$  follows. 

(2) Consider the prefix/suffix family containing just one maximal interval and all its proper prefixes and suffixes:
$\mathcal C = \{  L[1,n], L[1,1], \dots, L[1,n-1], L[2,n], \dots, L[n,n]\}$. This family satisfies $|\mathcal C| = 2n-1$. \hfill
\end{proof} 

 \begin{definition} Consider a linear order $ (L, <) $ and an equivalence relation $ \sim $ over its domain $ L $.
 \begin{enumerate}
 \item We say that $ \sim $ is {\em convex} if its equivalence classes are convex sets in  $(L,<)$.
 \item The {\em convex refinement}  of  $\sim$ over   $(L, <)$,  is  the relation $\sim^{c}$  on $ L $ defined as follows. For all $a,b\in L$:
 \begin{align*}
 a \sim^{c} b & \Leftrightarrow a\sim b \wedge (\forall d \in L) (min\{a, b\}< d <max\{ a, b\} \rightarrow a\sim d).
 \end{align*}
 \end{enumerate}
\end{definition}

 \begin{lemma} \label{wheeler_refinement} 
 The convex refinement $ \sim^{c} $ of an  equivalence relation $\sim$ over  $(L, <)$,  is a convex equivalence relation. 
 \end{lemma}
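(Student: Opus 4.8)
The plan is to check the two defining requirements of the conclusion separately: that $\sim^c$ is an equivalence relation, and that each of its classes is convex in $(L,<)$. Reflexivity is immediate, since $\min\{a,a\}<d<\max\{a,a\}$ is never satisfiable, so the betweenness clause is vacuous and $a\sim^c a$ reduces to $a\sim a$. Symmetry is also routine: $\min$, $\max$ and the relation $\sim$ are all symmetric in $a,b$, and---using the hypothesis $a\sim b$ carried by the definition---the clause ``$a\sim d$'' may be replaced by ``$b\sim d$''. Thus the first substantive step is transitivity.

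For transitivity I would assume $a\sim^c b$ and $b\sim^c c$ and derive $a\sim^c c$. The easy half, $a\sim c$, follows from transitivity of $\sim$. The work lies in proving that every $d$ with $\min\{a,c\}<d<\max\{a,c\}$ satisfies $a\sim d$. After discarding the trivial situations in which two of $a,b,c$ coincide, I would assume (by the already-established symmetry of $\sim^c$) that $a<c$ and split on the location of $b$. If $a<b<c$, the open interval between $a$ and $c$ is covered by the open interval between $a$ and $b$, the singleton $\{b\}$, and the open interval between $b$ and $c$; on the first part $a\sim^c b$ gives $a\sim d$, on $\{b\}$ we have $a\sim b$, and on the last part $b\sim^c c$ gives $b\sim d$, whence $a\sim d$ via $a\sim b$. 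If instead $b<a$ then $(a,c)$ is contained in the interval between $b$ and $c$, so a single use of $b\sim^c c$ (followed by $a\sim b$) suffices; symmetrically, if $c<b$ then $(a,c)$ is contained in the interval between $a$ and $b$ and one use of $a\sim^c b$ settles it. This case analysis on the position of $b$ is where the care is needed and is the main obstacle, though each case is elementary once the three configurations are laid out.

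For convexity of the classes I would fix $a\sim^c a'$ with $a<a'$ (without loss of generality) and an arbitrary $y$ with $a<y<a'$, and show $y\sim^c a$. Applying the betweenness clause of $a\sim^c a'$ to $d=y$ yields $y\sim a$ at once. To upgrade this to $y\sim^c a$ I observe that any $d$ strictly between $a$ and $y$ is also strictly between $a$ and $a'$, so $a\sim d$ by $a\sim^c a'$, and then $y\sim d$ follows from $y\sim a$; this verifies the betweenness clause for the pair $(y,a)$. Hence $y$ lies in the $\sim^c$-class of $a$, so that class is convex.

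Finally, I would note that these two arguments together make transparent the intuitive picture behind the statement: the $\sim^c$-class of $a$ is precisely the largest interval around $a$ that is entirely contained in the $\sim$-class of $a$, which is manifestly convex and on which the equivalence axioms are inherited from $\sim$.
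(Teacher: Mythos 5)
Your proof is correct and complete: the verification of reflexivity, symmetry, the three-way case split on the position of $b$ for transitivity, and the convexity argument are all sound. The paper states this lemma without proof (treating it as routine), so there is nothing to compare against; your argument is exactly the standard verification that would be expected.
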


 In this paper, if $\Sigma$ consist of a finite number of letters ordered by $ \prec $,  we denote, again by $\prec$, the \emph{co-lexicographic} order over $\Sigma^*$, defined  for $\alpha=a_1\ldots a_n, \beta=b_1 \ldots b_k $, as:
 \begin{align*} 
 \alpha\prec \beta & \Leftrightarrow (n < k \wedge (\forall j \leq n) (a_{n-j} = b_{k-j})) \lor (\exists i) (a_{n-i}\prec b_{k-i}  \wedge (\forall j<i ) ~ a_{n-j}= b_{k-j}).
 \end{align*}

\section{Wheeler Automata and Convex Sets} 

Wheeler languages will be defined below to be regular languages accepted by Wheeler automata, that is,  automata equipped with an ordering among states. It will be proved in \ref{MN-subsection} that Wheeler languages are naturally given as finite families of non-empty convex sets on  $ \prec $  enjoying the prefix/suffix property. 

\medskip

Let us begin giving the definition of Wheeler automaton.

 \begin{definition}\label{WNFA_WDFA}
 A Wheeler NFA (WNFA) $\mathcal A=(Q, s, \delta,<,F)$ is an NFA endowed with a binary relation $ < $, such that:
$ (Q,<) $ is a linear order  having the initial state $s$  as minimum, $ s $ has no  in-going edges, and the following  two (Wheeler) properties are satisfied. Let $ v_{1} \in \delta(u_{1},a_{1}) $ and $ v_{2} \in \delta(u_{2},a_{2}) $:
 \begin{enumerate}
 
	\item[(i)] $ a_{1}\prec a_{2}\rightarrow v_{1} < v_{2} $;
	\item[(ii)] $ (a_{1}=a_{2}  \wedge u_{1} < u_{2}) \rightarrow v_{1} \leq v_{2}$.
\end{enumerate} 
 A Wheeler DFA (WDFA) is a WNFA in which the cardinality of $ \delta(u,a) $ is always less than or equal to one.
\end{definition}

\begin{rem}
A consequence of Wheeler property (i) is that $ \mathcal A $ is \emph{input-consistent}, that is all transitions entering a given state $ u\in Q $ bear the same label: if $ u \in \delta(v,a)$  and $u \in \delta(w,b) $, then $ a=b$.
\end{rem}

On the grounds of the above remark, when drawing Wheeler automata we ``move'' labels from edges to nodes and therefore deal with state-labeled automata: all edges entering  a node labelled $e\in \Sigma$ would then be $e$-edges. 
As mentioned in the introduction, to make $\lambda$ complete we set $\lambda(s) = \# \notin \Sigma$, where $\#$ labels just $s$.

Unless explicitly stated, if we use an alphabet  $\Sigma$ containing alphabetical letters,  we implicitly suppose $\Sigma$   ordered  alphabetically. 
\begin{example}  \label{exw}
The following automaton proves that the  language $ax^*b|zx^*d$ is Wheeler (states ordered from left to right): 

\begin{figure}[h!] 
 \begin{center}
\begin{tikzpicture}[->,>=stealth', semithick, auto, scale=.8]

 \node[state,initial, label=above:{}] (S)    at (-4,0)	{$\#$};
 \node[state, label=above:{}] (A)    at (-2,0)		    	{$a$};
 \node[state, accepting, label=above:{}] (B)    at (0,0)	{$b$};
 \node[state, accepting, label=above:{}] (D)    at (2,0)	{$d$};
 \node[state, label=above:{}] (X1)    at (4,0)		{$x$};
 \node[state, label=above:{}] (X2)    at (6,0)		{$x$};
 \node[state, label=above:{}] (Z)    at (8,0)		    	{$z$};
 
\draw (S) edge [bend left, above] node [bend right, above] {} (A);
\draw (A) edge [bend left, above] node [bend right, above] {} (B);
\draw (A) edge  [bend right=30, below] node [bend left=30, below] {} (X1);
\draw (S) edge  [bend right=30, below] node [bend left=30, below] {} (Z);
\draw (X1) edge  [bend right=40, below] node [bend left=40, below] {} (B);
\draw (X1) edge  [loop above] node {} (X1);
\draw (X2) edge  [loop above] node {} (X2);
\draw (Z) edge  [bend right, above] node [above] {} (X2);
\draw (Z) edge  [bend right=50, above] node [above] {} (D);
\draw (X2) edge  [bend left, below] node [below] {} (D);
\end{tikzpicture}

 \end{center}
\end{figure}
\end{example}

A key consequence of (i) and (ii) above (already proved in \cite{gagie2017wheeler}), is the fact that the set of states reachable in a WNFA $ \mathcal A $ while reading a given string $ \alpha $  is an interval in $(Q,<)$. This important fact will be re-proved below---in Lemma \ref{convex_sets}---, together with what we may call a  sort of its   ``dual'', that is,  the the set of strings read while reaching a given state is a convex set. More precisely, if  $\mathcal A=(Q, s, \delta,<,F)$ is a WNFA, $ u\in Q $, and $\alpha\in \Sigma^*$,     let 
$I_\alpha=\delta(s, \alpha),  I_u=\{\alpha: \delta(s,\alpha)=u\};$ then it easily follows that 
\begin{center}
$ \alpha \in I_u  \text{ if and only if } u \in I_\alpha$,
\end{center}
and in Lemma \ref{convex_sets} we shall prove that $ I_\alpha$ is a convex set in    $ (Q,<) $, while  $ I_u$ is  convex  in $(\text{\em  Pref}(\mathcal L(\mathcal A)),\prec)$.

\medskip

Preliminary to our result is the following lemma.
 
 \begin{lemma}\label{prec_versus_minus} \cite{alanko2020regular}
 If $\mathcal A=(Q, s, \delta,<,F)$  is a   WNFA,   $u,v\in Q$   are states, and $\alpha, \beta \in \text{Pref}(\mathcal L(\mathcal A))$, then:
 \begin{enumerate}
\item  if  $\alpha\in I_u, \beta\in I_v$, and  $\{\alpha,\beta\}\not\subseteq I_v\cap I_u$, 
 then $\alpha\prec \beta  \text{ if and only if } u<v$;
\item 
 if  $u\in I_\alpha, v\in I_\beta $, and  $\{u,v\}\not\subseteq I_\beta \cap I_\alpha$,
 then $\alpha\prec \beta \text{ if and only if } u<v$.
 \end{enumerate}
 \end{lemma}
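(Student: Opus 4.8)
The plan is to prove part~1 by induction on word length and then to observe that part~2 is literally the same statement, read through the duality $\alpha\in I_u\Leftrightarrow u\in I_\alpha$. Before the induction I would record two consequences of the hypotheses of part~1. If $u=v$ then $I_u=I_v$, so both $\alpha,\beta$ lie in $I_u=I_u\cap I_v$, contradicting $\{\alpha,\beta\}\not\subseteq I_u\cap I_v$; hence $u\neq v$, and the same hypothesis forces $\alpha\neq\beta$. Since $(Q,<)$ and $(\Sigma^*,\prec)$ are linear orders and the hypotheses are symmetric under the swap $(\alpha,u)\leftrightarrow(\beta,v)$, the biconditional reduces to the single implication $\alpha\prec\beta\Rightarrow u<v$: assuming $u<v$, the alternative $\beta\prec\alpha$ would yield $v<u$ by the implication applied to the swapped pair, so trichotomy forces $\alpha\prec\beta$.

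First I would settle the implication $\alpha\prec\beta\Rightarrow u<v$ by induction on $|\alpha|+|\beta|$. If $\alpha=\epsilon$ then $u\in\delta(s,\epsilon)=\{s\}$, so $u=s$ is the minimum of $(Q,<)$, and as $u\neq v$ we get $u<v$. Otherwise $\alpha$ is non-empty; since $\alpha\prec\beta$ forces $\beta$ non-empty as well, I may write $\alpha=\alpha'a$ and $\beta=\beta'b$ and choose $u_0\in\delta(s,\alpha')$ with $u\in\delta(u_0,a)$ and $v_0\in\delta(s,\beta')$ with $v\in\delta(v_0,b)$ (such predecessors exist by definition of $\delta$ on strings). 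Comparing last characters, if $a\prec b$ then Wheeler property~(i) gives $u<v$ at once, while $b\prec a$ cannot occur as it would give $\beta\prec\alpha$.

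The main work is the case $a=b$, in which $\alpha\prec\beta$ is equivalent to $\alpha'\prec\beta'$. Here I would first check that the inductive hypothesis is available for the shorter pair $(\alpha',u_0,\beta',v_0)$, i.e.\ that $\{\alpha',\beta'\}\not\subseteq I_{u_0}\cap I_{v_0}$. Were this to fail, then $\beta'\in I_{u_0}$ (that is, $u_0\in\delta(s,\beta')$) together with $u\in\delta(u_0,a)$ would give $u\in\delta(s,\beta)$, hence $\beta\in I_u$; symmetrically $\alpha'\in I_{v_0}$ would give $\alpha\in I_v$; combined with $\alpha\in I_u$ and $\beta\in I_v$ this yields $\{\alpha,\beta\}\subseteq I_u\cap I_v$, contradicting the hypothesis. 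Granting the side condition, the inductive hypothesis gives $u_0<v_0$, and then Wheeler property~(ii) applied to the two $a$-labelled edges $u_0\to u$ and $v_0\to v$ gives $u\le v$; since $u\neq v$, we conclude $u<v$. This manufacturing of admissible predecessors out of the side condition $\{\alpha,\beta\}\not\subseteq I_u\cap I_v$ is the only delicate point of the argument; the rest is bookkeeping on the co-lexicographic order (notably the fact that $\alpha'a\prec\beta'a\Leftrightarrow\alpha'\prec\beta'$).

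Finally, part~2 needs no separate proof. Using $\alpha\in I_u\Leftrightarrow u\in I_\alpha$ and $\beta\in I_v\Leftrightarrow v\in I_\beta$, its positive hypotheses coincide with those of part~1; moreover $\{u,v\}\subseteq I_\alpha\cap I_\beta$ unfolds to exactly the four memberships $u,v\in\delta(s,\alpha)$ and $u,v\in\delta(s,\beta)$ that also define $\{\alpha,\beta\}\subseteq I_u\cap I_v$, so the side conditions coincide too. Hence part~2 is part~1 verbatim under this dictionary.
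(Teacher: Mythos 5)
Your proof is correct and follows essentially the same route as the paper's: induction on word length, with Wheeler~(i) handling distinct final characters, Wheeler~(ii) plus the verification that the predecessor pair $(\alpha',u_0,\beta',v_0)$ still satisfies the side condition $\{\alpha',\beta'\}\not\subseteq I_{u_0}\cap I_{v_0}$ handling the equal-character case, the linearity/trichotomy argument for the converse implication, and the duality $\alpha\in I_u\Leftrightarrow u\in I_\alpha$ disposing of part~2. The only (immaterial) differences are that you induct on $|\alpha|+|\beta|$ rather than $\max(|\alpha|,|\beta|)$ and fold the $u=s$ case into the induction base rather than treating it as a preliminary reduction.
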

 
 \begin{proof}  \
 
\begin{enumerate}
\item[(1)]  Suppose   $\alpha\in I_u, \beta\in I_v$ and  $\{\alpha,\beta\}\not\subseteq I_v\cap I_u$. From this we have that  $\alpha\in I_u\setminus I_v$  or   $\beta\in I_v\setminus I_u$,  hence  $u\neq v$ and $\alpha\neq \beta$ follows.

If $u=s$ or $v=s$, either $ \alpha $ or $ \beta $ is the empty string $ \epsilon $ and   the result follows easily. Hence, we suppose $u\neq s \neq v$ and (hence)  $\alpha\neq \epsilon \neq \beta$. 
 
To see the left-to-right implication, assume $\alpha \prec \beta$: we prove that $u<v$ 
 by induction on the maximum betwewn $|\alpha|$ and $|\beta|$. If $|\alpha| =|\beta|=1$, then the property follows from the Wheeler-(i). If $\max(|\alpha|, |\beta|)>1$ and $ \alpha $ and $ \beta $  end with different letters, then again the property follows from  Wheeler-(i). Hence, we are just left with the case in which $\alpha=\alpha' e$ and $\beta=\beta' e$, with $e\in \Sigma$.
Since $\alpha\prec \beta$,   we have $\alpha'\prec\beta'$. Consider  states
$u',v'$ such that $\alpha' \in I_{u'}, \beta'\in I_{v'}$, and $u\in \delta(u',e), v\in \delta(v',e)$. 
Then $\alpha'\in I_{u'}\setminus I_{v'}$ or  $\beta'\in I_{v'}\setminus  I_{u'}$ because otherwise we would have  $\alpha'\in I_{v'}$   and $\beta'\in I_{u'}$   which imply respectively $\alpha\in I_v$ and $\beta\in I_u$.  By induction we have $u'<v'$ and therefore, by Wheeler-(ii), $u\leq v$. From $u\neq v$ it follows  $u<v$.   

 Conversely, for the right-to-left implication, suppose   $u<v$.  Since  $\alpha \neq \beta$,  if it were $\beta\prec \alpha$ then, by the above, we would have $v<u$:  a contradiction. Hence, $\alpha\prec \beta $ holds. 
\item[(2)]  Recall that, by definition,  $ \alpha \in I_u  \text{ if and only if } u \in I_\alpha$ and $ \beta \in I_v  \text{ if and only if } v \in I_\beta$. Hence, the hypothesis that $u\in I_\alpha, v\in I_\beta$ and $\{u,v\} \not\subseteq I_\beta \cap I_\alpha$, is equivalent to say that $ \alpha\in I_u, \beta\in I_v$ and $\{\alpha,   \beta\} \not\subseteq I_v \cap I_u$.
  Therefore, (2) follows from (1).
\end{enumerate} 
 \end{proof}
 
 The following corollary, to be be used in Section \ref{Lwheeler?}, observes that the sequence of states reached  in a WDFA while reading a monotone sequence of strings, must ``stabilise'' to some specific state.  As a matter of fact, it will be proved  in Lemma \ref{compare}  that a similar  property holds also for a WNFA.
 \begin{corollary}\label{monotone}\cite{alanko2020regular}
If $\mathcal A=(Q, \delta, q, <,F)$ is a WDFA,  then, for all $\alpha, \beta \in \pf {L(\mathcal A)}$ it holds 
\begin{align*}
\alpha \prec \beta \Rightarrow \delta(s, \alpha)\leq \delta(s, \beta), &\text{ and }  \delta(s, \alpha)< \delta(s, \beta) \Rightarrow \alpha \prec \beta
\end{align*}
Moreover,  any sequence of states $ (\delta(s, \alpha_{i}))_{\geq 1} $ for $ (\alpha_{i})_{\geq 1} $  monotone sequences   in $(\pf {L(\mathcal A)}, \prec)$, is eventually constant. More precisely, if   $(\alpha_i)_{i\geq 1}$ is a sequence  in $(\pf {L(\mathcal A)}, \prec)$   such that either 
\begin{align*}
\alpha_1\preceq  \alpha_2\preceq \ldots\preceq \alpha_i  \preceq \ldots & \text{ or } 
\alpha_1\succeq \alpha_2\succeq\ldots\succeq \alpha_i  \succeq \ldots 
\end{align*}
 then there exists $u\in Q$ and $n\geq 1$ such that $\delta(s, \alpha_h)=\delta(s,\alpha_k)=u$, for all $h,k\geq n$. 
 \end{corollary}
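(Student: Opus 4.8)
The plan is to derive the entire statement from the deterministic specialisation of Lemma \ref{prec_versus_minus}(1), followed by the elementary observation that a monotone sequence in a finite linear order must stabilise. The key simplification I would exploit is that in a WDFA each $I_\alpha=\delta(s,\alpha)$ is a \emph{singleton}: writing $u=\delta(s,\alpha)$ and $v=\delta(s,\beta)$ we have $\alpha\in I_u$ and $\beta\in I_v$. Moreover, since the automaton is deterministic, the family $\{I_w : w\in Q\}$ partitions $\pf{L(\mathcal A)}$, so that $u\neq v$ forces $I_u\cap I_v=\emptyset$. In particular the side hypothesis $\{\alpha,\beta\}\not\subseteq I_v\cap I_u$ of Lemma \ref{prec_versus_minus}(1) is automatically satisfied whenever $u\neq v$, which is what makes the lemma directly applicable here.

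For the two displayed implications I would argue by cases. If $\delta(s,\alpha)=\delta(s,\beta)$, then $\delta(s,\alpha)\leq\delta(s,\beta)$ holds trivially; otherwise $u\neq v$, Lemma \ref{prec_versus_minus}(1) applies, and it yields the equivalence $\alpha\prec\beta\Leftrightarrow u<v$. From this equivalence both implications follow at once: $\alpha\prec\beta\Rightarrow u\leq v$ (the equal-state case being absorbed into the $\leq$), and conversely $u<v\Rightarrow\alpha\prec\beta$ (here $u<v$ already forces $u\neq v$, so the lemma is in force). The only point that deserves attention is precisely the appearance of $\leq$ rather than $<$ in the first implication, which accommodates the possibility that two distinct, $\prec$-comparable prefixes are routed to the same state.

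For the \emph{moreover} part I would take an increasing chain $\alpha_1\preceq\alpha_2\preceq\cdots$, the decreasing case being symmetric. Applying the first implication term by term (and the trivial equal-prefix case when $\alpha_i=\alpha_{i+1}$) gives $\delta(s,\alpha_i)\leq\delta(s,\alpha_{i+1})$ for every $i$, so the image sequence is non-decreasing in $(Q,<)$. Since $(Q,<)$ is a \emph{finite} linear order, a non-decreasing sequence can strictly increase at most $|Q|-1$ times and is therefore eventually constant: there exist $n$ and $u\in Q$ with $\delta(s,\alpha_h)=u$ for all $h\geq n$, whence $\delta(s,\alpha_h)=\delta(s,\alpha_k)=u$ for all $h,k\geq n$, as required. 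I do not anticipate a genuine obstacle, this being a corollary of Lemma \ref{prec_versus_minus}; the only care needed is the bookkeeping around equalities, namely keeping $\prec$ distinct from $\preceq$ and $<$ distinct from $\leq$, and verifying that the side condition of the lemma is free in the deterministic setting.
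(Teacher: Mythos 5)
Your proposal is correct and follows the same route as the paper's own proof: both derive the two implications from Lemma \ref{prec_versus_minus}(1) using the fact that in a WDFA each $I_\alpha$ is a singleton and the $I_u$'s are pairwise disjoint, and both conclude the stabilisation claim from the finiteness of $(Q,<)$. Your write-up is merely more explicit about the case analysis and about why the side condition of the lemma is automatic in the deterministic setting, which the paper leaves as "easily follows."
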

\begin{proof}   If $\mathcal A=(Q, \delta, q, <,F)$ is a WDFA and $\alpha\in  \pf {L(\mathcal A)}$ then, for all $u\in Q$,  it holds
\[\alpha \in I_u \Leftrightarrow u=\delta(q,\alpha),\]
and \[\alpha \prec \beta \Rightarrow \delta(s, \alpha)\leq \delta(s, \beta), ~~ \text{and}~~  \delta(s, \alpha)< \delta(s, \beta) \Rightarrow \alpha \prec \beta\] easily follows from the previous lemma. 
If  $(\alpha_i)_{i\geq 1}$ is a  monotone sequence  in $(\pf {L(\mathcal A)}, \prec)$,  then the first implication above  implies that  $(\delta(s, \alpha_i))_{i\geq 1}$ is a monotone sequence in $(Q,<)$. Since $Q$ is a finite set, the corollary follows. 

\end{proof}

The following lemma refers to WNFA and proves that  the collection of states reached reading a given string, turns out to be  an interval in the Wheeler order of states. WDFA can be seen a particular case in which intervals degenerate in a single state.    Let $ I_{Q} =\{I_u:~ u\in Q\}$ and  $I_ {\text{\em Pref}(\mathcal L(\mathcal A))}  =\{I_\alpha :~\alpha\in \text{\em Pref}(\mathcal L(\mathcal A)) \}$. 

  \begin{lemma}\label{convex_sets}\cite{alanko2020regular}
 If $\mathcal A=(Q, s, \delta, <,F)$  is a  WNFA  and $\mathcal L = \mathcal L(\mathcal A)$, then:
 \begin{enumerate}
 \item  for all  $u\in Q$,  the set  $I_u$  is  convex   in $( \text{Pref}(\mathcal L(\mathcal A)), \prec)$; 
 \item $ I_{Q}$ is  a prefix/suffix family  of  convex sets  in  $( \text{Pref}(\mathcal L(\mathcal A)), \prec)$;
 \item   for all  $\alpha \in \text{Pref}(\mathcal L(\mathcal A))$,  the set  $I_\alpha$    is  an interval  in $(Q,<)$ (already proved in \cite{gagie2017wheeler}); 
 \item $I_ {\text{Pref}(\mathcal L(\mathcal A))}$  is a prefix/suffix family  of intervals in $(Q,<)$.
 \end{enumerate}
 \end{lemma}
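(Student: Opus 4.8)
The plan is to prove the four parts roughly in the order stated, leaning heavily on Lemma \ref{prec_versus_minus} as the workhorse and using the characterisation of the prefix/suffix order via $m_I + M_I$ already established before Lemma \ref{2n}. The key observation I would keep in mind throughout is the duality $\alpha \in I_u \Leftrightarrow u \in I_\alpha$, which lets me transfer statements about the $I_u$'s (convex in strings) to statements about the $I_\alpha$'s (convex in states) almost mechanically, exactly as the proof of part (2) of Lemma \ref{prec_versus_minus} did.

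First I would establish part (1). Fix a state $u$ and take $\alpha, \alpha' \in I_u$ with $\alpha \prec \beta \prec \alpha'$ for some $\beta \in \text{Pref}(\mathcal L(\mathcal A))$; I must show $\beta \in I_u$. Let $v$ be a state with $\beta \in I_v$. The goal is $v = u$. Suppose not, i.e.\ $v \neq u$; then I would like to apply Lemma \ref{prec_versus_minus}(1), but that lemma requires the hypothesis $\{\cdot,\cdot\} \not\subseteq I_u \cap I_v$. I would argue as follows: from $\alpha \prec \beta$ together with $\alpha \in I_u$, $\beta \in I_v$, if the non-containment hypothesis holds I get $u < v$; from $\beta \prec \alpha'$ with $\beta \in I_v$, $\alpha' \in I_u$, I symmetrically get $v < u$, a contradiction. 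The delicate point is discharging the non-containment side condition, so that I am entitled to invoke the lemma in both directions. I expect that if $\beta$ were simultaneously in $I_u$ we would be done (that is the desired conclusion), so the interesting case is $\beta \notin I_u$, and then $\{\alpha,\beta\} \not\subseteq I_u \cap I_v$ and $\{\beta,\alpha'\} \not\subseteq I_u \cap I_v$ hold because $\beta$ witnesses membership failure; this hands me the clean contradiction $u < v < u$.

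Part (3) is the dual of part (1) and I would prove it the same way with the roles of states and strings exchanged, using Lemma \ref{prec_versus_minus}(2) in place of (1): fixing $\alpha$ and taking $u, u' \in I_\alpha$ with $u < v < u'$, I show $v \in I_\alpha$ by deriving $\alpha \prec \alpha \prec \alpha$-style contradictions from the two comparisons, again after confirming the $\{u,v\} \not\subseteq I_\alpha \cap I_\beta$ side condition. Since $(Q,<)$ is finite and $I_\alpha$ is convex with a minimum and maximum, it is automatically an interval, as noted in the discussion following Lemma \ref{convex+order}.

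For parts (2) and (4), the prefix/suffix property, I would take $I_u \subseteq I_v$ (respectively $I_\alpha \subseteq I_\beta$) and show $I_u$ is either a prefix or a suffix of $I_v$. The natural strategy is to pick any $x \in I_u$ and any $y \in I_v \setminus I_u$ and prove that the sign of the comparison between $x$ and $y$ is the same for every such pair; then $I_u$ sits entirely below or entirely above the rest of $I_v$, which is precisely the prefix/suffix dichotomy. To fix the uniform sign I would again call Lemma \ref{prec_versus_minus}: for $x \in I_u$ and $y \in I_v \setminus I_u$ the non-containment hypothesis is met (because $y \notin I_u$), so the comparison of $x$ and $y$ is governed entirely by the comparison of $u$ and $v$, which is a single fixed relation independent of the choice of $x, y$. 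I expect the main obstacle to be precisely this bookkeeping around the $I_u \cap I_v$ side conditions --- making sure that in each application of Lemma \ref{prec_versus_minus} at least one of the two elements genuinely lies outside the intersection --- rather than any deep structural difficulty; once the side conditions are cleared, all four parts fall out of the reduction to comparing the underlying states or strings.
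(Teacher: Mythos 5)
Your proposal is correct and follows essentially the same route as the paper: convexity of $I_u$ (and dually of $I_\alpha$) is obtained by applying Lemma \ref{prec_versus_minus} twice to the hypothetical witness $\beta \in I_v\setminus I_u$ to get the contradiction $u<v<u$, and the prefix/suffix property is obtained from the same lemma with the side condition discharged by the element lying outside $I_u\cap I_v$. The only cosmetic differences are that the paper phrases parts (2) and (4) as a contradiction ($v<u<v$) rather than your direct uniform-sign argument, and your passing remark that ``the goal is $v=u$'' should read ``the goal is $u\in I_\beta$'' since in an NFA several states may be reached on $\beta$ --- but your actual argument (assume $\beta\notin I_u$ and derive the contradiction) is exactly right.
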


 \begin{proof} \ 
 
 \begin{enumerate}
\item  Suppose $\alpha\prec \beta\prec\gamma$ with $\alpha, \gamma\in I_u$ and $\beta \in  \text{\em Pref}(\mathcal L(\mathcal A))$;  we want to prove that $\beta \in I_u$. From $\beta\in  \text{\em Pref}(\mathcal L(\mathcal A))$ it follows that  there exists  a state $v$ such that
 $\beta \in I_v$.  Suppose, for contradiction, that $\beta\not \in I_u$. Then $\beta \in I_v\setminus I_u$ and from $\alpha\prec \beta$ and  Lemma \ref{prec_versus_minus}, it follows $u<v$. Similarly, applying again Lemma \ref{prec_versus_minus}, from $\beta\prec \gamma$ we have  $v<u$, which is a contradiction. 
 \item   Suppose, for contradiction, that  $I_u, I_v\in {\mathcal I}_{Q}$ are such that $I_u\subsetneq I_v$ and  $I_u$ is neither a prefix nor a suffix of $I_v$.  In these hypotheses  there
must exist $\alpha, \alpha' \in I_v\setminus I_u$ and $\beta \in I_u$ such that  $\alpha \prec \beta \prec \alpha'$. Lemma \ref{prec_versus_minus}  implies
$v<u<v$, which is a contradiction. 
 \end{enumerate}
 Points $(3), (4)$ follow  similarly from Lemma  \ref{prec_versus_minus}. 
 \end{proof}
 
\begin{rem}\label{remark-complete} Clearly, Lemma \ref{convex_sets} given above continues to hold also in the case of \emph{complete} Wheeler automata, with  $ \text{\em Pref}(\mathcal L(\mathcal A)) $ replaced by  $ (\Sigma^{*},\prec) $. 
\end{rem}

Since Definition \ref{WNFA_WDFA} allows the transition function of  Wheeler DFA's to be \emph{incomplete}, one could wonder why not forcing completeness in the definition of Wheeler automaton. We can now show, using the above remark, that incompleteness is somehow necessary: the class of languages would be different if completeness were required.  

\begin{example} \label{incomplete}  {\bf A Wheeler language not  recognised by any  \emph{complete} WDFA}. 

Let  $\mathcal A=(Q, s, \delta,<,F)$ be the  following  (incomplete) WDFA such that $ \mathcal L(\mathcal A)=\mathcal L= b^{+}a $:
\begin{figure}[h!] 
 \begin{center}
\begin{tikzpicture}[->,>=stealth', semithick, auto, scale=.8]

 \node[state,initial, label=above:{}] (S)    at (-2,0)		{$s$};
 \node[state, accepting, label=above:{}] (U)    at (0,0)		{$a$};
 \node[state, label=above:{}] (V)    at (2,0)				{$b$};
 
\draw (S) edge [bend left=50, above] node [bend right=30, above] {} (V);
\draw (V) edge [bend left, above] node [bend right, above] {} (U);
\draw (V) edge  [loop above] node {} (V);
\end{tikzpicture} 
\end{center}
\end{figure}

Suppose, for contradiction,  that $\mathcal L=\mathcal L(\mathcal A')$, where $\mathcal A'=(Q', s', \delta',<',F')$ is a complete  Wheeler DFA.  
Since the set $Q'$ is finite, there exist $i,k\in \mathbb N$ with  $i<k$  and  $ {\delta}(s', b^i)= {\delta}(s', b^k)=u$, for some $u\in Q'$.  From  $b^ia\in \mathcal L$  it follows  $ {\delta}(s', b^ia)=z$ for some $z\in F'$. Consider now  $v\in Q'$ such that ${\delta}(s', ab^i)=v$.  
By  Remark \ref{remark-complete},  $I_u$ is an convex set in  the linear order consisting of all words read by Wheeler  automaton $ \mathcal A' $, ordered co-lexicographically, that is $(\Sigma^*, \preceq)$. Since  
$b^i \prec ab^i  \prec b^k$, and $b^i, b^k\in I_u$ 
implies  $ab^i \in I_u$ and since $\mathcal A'$ is a DFA,   $v=u$ follows.  But then ${\delta}(s', ab^ia)=z\in F'$ and  we would have $ab^ia\  \in  \mathcal L$,  contradicting  $\mathcal L=b^+a$. 
\end{example} 
 
From  Lemma \ref{convex+order} it  follows that   $(I_{Q}, \prec^{i})$  and  $(I_ {\text{\em Pref}(\mathcal L(\mathcal A))}, <^{i})$  are linear orders.   

\begin{lemma}\label{compare}\cite{alanko2020regular} Let  $\mathcal A=(Q, s, \delta,<,F)$ be a WNFA. Consider $ I_{u},I_{v}\in I_{Q} $ and $ I_{\alpha}, I_{\beta} \in I_ {\text{ Pref}(\mathcal L(\mathcal A))}$.  
\begin{enumerate}
\item $ I_u  \prec^{i} I_v $ implies that $ u < v $ and  $ u < v $ implies that  $ I_u  \preceq^{i} I_v $.
\item $ I_{\alpha} <^{i} I_{\beta} $ implies that $ \alpha \prec  \beta $ and  $ \alpha \prec  \beta $ implies that  $ I_{\alpha} \leq^{i} I_{\beta} $.
\item Any sequence of intervals $(I_{\alpha_{i}})_{i \geq 1}$ where $(\alpha_i)_{i\geq 1}$ is a monotone sequence in $(\pf L, \prec)$, is eventually constant. 
\end{enumerate}
\end{lemma}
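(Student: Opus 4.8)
The three assertions are essentially a repackaging of Lemma \ref{prec_versus_minus}, combined with the structural facts from Lemma \ref{convex_sets}: that $I_Q$ is a prefix/suffix family of convex sets in $(\text{Pref}(\mathcal L(\mathcal A)),\prec)$ (part (2)) and that $I_{\text{Pref}(\mathcal L(\mathcal A))}$ is a prefix/suffix family of intervals in $(Q,<)$ (part (4)). These guarantee that $\prec^i$ and $<^i$ are defined on the respective families and, by Lemma \ref{convex+order}, are \emph{strict linear orders}. The plan is therefore uniform: for the two equivalences in (1) and (2) I unfold the definition of the induced order into its two disjuncts and reduce each disjunct to a single application of Lemma \ref{prec_versus_minus}; each converse direction is then obtained from trichotomy in the linear order; and (3) is deduced from (2) by a finiteness argument.

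For (1), assume first $I_u \prec^i I_v$. By definition of $\prec^i$ there is either a witness $\alpha\in I_u$ with $\alpha\prec\beta$ for every $\beta\in I_v$, or a witness $\beta\in I_v$ with $\alpha\prec\beta$ for every $\alpha\in I_u$. In the first case $\alpha\notin I_v$ (else $\alpha\prec\alpha$), so for any chosen $\beta\in I_v$ we have $\{\alpha,\beta\}\not\subseteq I_u\cap I_v$ and $\alpha\prec\beta$; Lemma \ref{prec_versus_minus}(1) then yields $u<v$. The second disjunct is symmetric. For the converse, suppose $u<v$ and note that by Lemma \ref{convex+order} the order $(I_Q,\prec^i)$ satisfies trichotomy: were $I_v\prec^i I_u$, the forward direction just proved would force $v<u$, a contradiction, so $I_u\prec^i I_v$ or $I_u=I_v$, i.e. $I_u\preceq^i I_v$. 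The equality case is genuine — distinct states may be reached by exactly the same set of prefixes — which is precisely why the converse gives only $\preceq^i$ rather than $\prec^i$.

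Part (2) is the dual statement, proved by exactly the same two-case unfolding, now invoking Lemma \ref{prec_versus_minus}(2) in place of (1) and the prefix/suffix property of $I_{\text{Pref}(\mathcal L(\mathcal A))}$ from Lemma \ref{convex_sets}(4): from $I_\alpha <^i I_\beta$ a witness state lying in exactly one of the two intervals yields $\alpha\prec\beta$, and the converse follows from trichotomy in $(I_{\text{Pref}(\mathcal L(\mathcal A))},<^i)$, giving $\leq^i$ because distinct prefixes may reach the same interval of states. For (3) I use that $I_{\text{Pref}(\mathcal L(\mathcal A))}$ is a family of intervals of the finite order $(Q,<)$, hence \emph{finite} (indeed of size at most $2|Q|-1$ by Lemma \ref{2n}). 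If $(\alpha_i)_{i\ge 1}$ is $\prec$-monotone, applying the second implication of (2) to each consecutive pair shows that $(I_{\alpha_i})_{i\ge 1}$ is weakly monotone in $(I_{\text{Pref}(\mathcal L(\mathcal A))},<^i)$ — weakly increasing when $(\alpha_i)$ increases and weakly decreasing when it decreases. A weakly monotone sequence in a finite linear order can strictly change value only finitely often, hence is eventually constant.

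The routine parts are the case-by-case unfoldings; the only point deserving care — and the main potential pitfall — is verifying the hypothesis $\{\cdot,\cdot\}\not\subseteq I_u\cap I_v$ (respectively $\{\cdot,\cdot\}\not\subseteq I_\alpha\cap I_\beta$) \emph{before} each invocation of Lemma \ref{prec_versus_minus}. This is exactly where the asymmetry of the statement originates (strict $\prec^i/<^i$ in the forward direction, but only $\preceq^i/\leq^i$ in the backward one), and it is dispatched uniformly by observing that the witness guaranteed by the definition of the induced order cannot lie in both intervals simultaneously.
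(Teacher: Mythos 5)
Your proof is correct and follows essentially the same route as the paper's: unfold the definition of the induced order into its two disjuncts, check that the witness lies in only one of the two sets so that Lemma \ref{prec_versus_minus} applies, and obtain the converse (weak) implications from linearity of $\prec^i$/$<^i$; part (3) is the same finiteness argument. The only cosmetic difference is that in the converse of (1) you reuse the forward implication plus trichotomy, where the paper re-unfolds the definition to reach the same contradiction.
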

\begin{proof} \

\begin{enumerate}
\item Suppose $ I_u  \prec^{i} I_v $. Then, either  there exists $ \alpha \in I_u$  such that for all $\beta\in I_v$ it holds $\alpha \prec \beta$, or   there exists $ \beta \in I_v$  such that for all $\alpha\in I_u$ it holds $\alpha \prec \beta$. In the first case, we have $\alpha \in I_u\setminus I_v$,  while in the second case we have $\beta\in I_v\setminus I_u$. In both cases $u<v$ follows from Lemma
 \ref{prec_versus_minus}. 
 
 For the second implication suppose,  for contradiction, that $u<v$  and $ I_v\prec^{i} I_u$ holds. Then, either there exists $\alpha \in I_v$ such that for all $\beta\in I_u$ it holds $\alpha\prec \beta$, or there exists $\beta\in I_u$ such that for all $\alpha \in I_v$  it holds $\alpha\prec \beta$. In the first case, $\alpha \in I_v \setminus I_u$, while in the second case $\beta\in I_u\setminus I_v$. In both case we obtain  $v<u$  by Lemma  \ref{prec_versus_minus}: a contradiction. 
\item This point is entirely similar to the above.
\item This is proved similarly to Corollary \ref{monotone} using (2), so we provide just a sketch: since $Q$ is finite, also the set of intervals on $Q$ is finite, thus by property (2) $(I_{\alpha_{i}})_{i \geq 1}$ must stabilize, being $(\alpha_i)_{i\geq 1}$ a monotone sequence in $(\pf L, \prec)$. 
\end{enumerate}
\end{proof}


If $\mathcal A$  is a   WNFA we can prove that the following construction, which is the ``convex version'' of the classic powerset construction for NFA, allows determinisation without exponential blow-up.

\begin{definition}
 If $\mathcal A=(Q, s, \delta, <,F)$  is a   WNFA  we define  its (Wheeler) \emph{determinization} as the automaton ${\mathcal A^{d}}=(Q^{d}, s^{d}, \delta^{d}, <,^{d}F^{d})$, where:
 \begin{itemize}
     \item[-] $Q^{d}=I_{\text{\em Pref}(\mathcal L(\mathcal A))}$;
     \item[-] $s^{d}= I_{\epsilon}=\{s\}$
     \item[-] $F^{d}=\{I_\alpha ~|~ \alpha \in \mathcal L(\mathcal A)\}$;
     \item[-] $\delta^{d}: I_{\text{\em Pref}(\mathcal L(\mathcal A))}\times \Sigma \rightarrow  I_{\text{\em Pref}(\mathcal L(\mathcal A))}$ is the partial function  defined as $\delta^{d}(I_\alpha,e)= I_{\alpha e}$, for all  $e\in \Sigma$ and $\alpha e\in \text{\em Pref}(\mathcal L(\mathcal A))$;
   \item[-] $<^{d}=<^{i}$. 
 \end{itemize} 
\end{definition}

\begin{lemma}[WNFA Determinization]\label{Wdeterminization}\cite{alanko2020regular}
	If $\mathcal A=(Q, s, \delta, <,F)$  is a   WNFA with $n$ states over an alphabet $\Sigma$ (with  at least one $a$-edge for each $a\in \Sigma$),   then $\mathcal A^{d}$ is a WDFA  with at most $2n-1-|\Sigma|$ states, and  $ \mathcal L(\mathcal A^{d}) = \mathcal L(\mathcal A)$. 
\end{lemma}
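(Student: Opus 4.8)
The plan is to verify, in order, that $\mathcal A^{d}$ is a well-defined deterministic automaton, that it obeys the two Wheeler axioms with respect to $<^{i}$, that its state set is small enough, and finally that it recognises $\mathcal L(\mathcal A)$. The starting observation, used throughout, is that $I_{\alpha e}=\delta(s,\alpha e)=\bigcup_{v\in I_\alpha}\delta(v,e)$, so the interval $I_{\alpha e}$ depends only on $I_\alpha$ and $e$. Hence $\delta^{d}(I_\alpha,e)=I_{\alpha e}$ is a well-defined partial function, independent of the representative $\alpha$ of the state $I_\alpha$, and it is single-valued, so $\mathcal A^{d}$ is a DFA. Because $s$ has no in-going edges, no non-empty string reaches $s$, whence $I_{\alpha e}\neq\{s\}=s^{d}$ and $s^{d}$ has no in-going edges; moreover $s^{d}=\{s\}$ consists of the minimum of $(Q,<)$, so $s^{d}<^{i}I_\alpha$ for every other state by the definition of $<^{i}$. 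By Lemmas \ref{convex_sets} and \ref{convex+order}, $(I_{\text{Pref}(\mathcal L(\mathcal A))},<^{i})$ is a linear order, so the choice $<^{d}=<^{i}$ is legitimate.

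Next I would check the two Wheeler properties. For (i), if $\delta^{d}(I_{\alpha_1},a_1)=I_{\alpha_1 a_1}$ and $\delta^{d}(I_{\alpha_2},a_2)=I_{\alpha_2 a_2}$ with $a_1\prec a_2$, then every state of $I_{\alpha_1 a_1}$ is entered by an $a_1$-edge and every state of $I_{\alpha_2 a_2}$ by an $a_2$-edge; Wheeler property (i) of $\mathcal A$ gives $x<y$ for all such $x,y$, which is exactly $I_{\alpha_1 a_1}<^{i}I_{\alpha_2 a_2}$. For (ii), with $a_1=a_2=e$ and $I_{\alpha_1}<^{i}I_{\alpha_2}$, Lemma \ref{compare}(2) yields $\alpha_1\prec\alpha_2$; appending $e$ to both preserves the co-lexicographic order, so $\alpha_1 e\prec\alpha_2 e$, and a second application of Lemma \ref{compare}(2) gives $I_{\alpha_1 e}\leq^{i}I_{\alpha_2 e}$, as required.

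The main obstacle is the sharpened bound $2n-1-|\Sigma|$, which improves on the $2n-1$ that Lemma \ref{2n} gives directly on $I_{\text{Pref}(\mathcal L(\mathcal A))}$. The key idea is to split this family according to the (unique) label of its intervals. Since $\mathcal A$ is input-consistent, every state of a non-initial interval $I_{\alpha e}$ carries the single label $e$, and—again by Wheeler (i), which forces the state order to refine the order of labels—the set $Q_e$ of all $e$-labelled states is a contiguous block of $(Q,<)$, of some size $n_e$, with $\sum_{e\in\Sigma}n_e=n-1$. For each fixed $e$, the intervals of $I_{\text{Pref}(\mathcal L(\mathcal A))}$ bearing label $e$ are non-empty convex subsets of $(Q_e,<)$, and, being a subfamily of a prefix/suffix family (Lemma \ref{convex_sets}), they form a prefix/suffix family in $(Q_e,<)$; Lemma \ref{2n} then bounds their number by $2n_e-1$. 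The hypothesis that $\mathcal A$ has at least one $a$-edge for each $a\in\Sigma$ guarantees $n_e\geq 1$ for every $e$, so summing and adding $1$ for the unique interval $I_\epsilon=\{s\}$ gives $|Q^{d}|\le 1+\sum_{e\in\Sigma}(2n_e-1)=1+2(n-1)-|\Sigma|=2n-1-|\Sigma|$.

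It remains to establish $\mathcal L(\mathcal A^{d})=\mathcal L(\mathcal A)$. A routine induction on $|\alpha|$, using $s^{d}=I_\epsilon$ and $\delta^{d}(I_\alpha,e)=I_{\alpha e}$, shows $\delta^{d}(s^{d},\alpha)=I_\alpha$ for every $\alpha\in\text{Pref}(\mathcal L(\mathcal A))$ (and $\delta^{d}(s^{d},\alpha)$ is undefined otherwise). Finally, membership of $I_\alpha$ in $F^{d}$ is equivalent to $I_\alpha\cap F\neq\emptyset$: if $I_\alpha=I_\beta$ with $\beta\in\mathcal L(\mathcal A)$ then $I_\alpha$ meets $F$, and conversely $\alpha$ itself witnesses $I_\alpha\in F^{d}$ whenever $I_\alpha\cap F\neq\emptyset$. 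Hence $\alpha\in\mathcal L(\mathcal A^{d})\iff I_\alpha\cap F\neq\emptyset\iff\alpha\in\mathcal L(\mathcal A)$, closing the argument.
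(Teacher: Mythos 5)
Your proposal is correct and follows essentially the same route as the paper's proof: the linear order on $Q^{d}$ via Lemmas \ref{convex_sets} and \ref{convex+order}, the two Wheeler properties via Lemma \ref{compare} (your direct appeal to Wheeler (i) of $\mathcal A$ for the strict inequality is a harmless variant of the paper's ``$\leq^{d}$ plus disjointness of label classes'' step), and the same label-wise partition of the intervals combined with Lemma \ref{2n} for the $2n-1-|\Sigma|$ bound. The extra care you take over well-definedness of $\delta^{d}$ and over the equivalence $I_\alpha\in F^{d}\iff I_\alpha\cap F\neq\emptyset$ only fills in details the paper leaves implicit.
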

\begin{proof}
	The fact that $ \mathcal L(\mathcal A^{d}) = \mathcal L(\mathcal A)$ is seen as in the (classic) regular case: the reachable subsets of the powerset construction are exactly the ones in $ Q^{d} $ .   
	
	We  prove that $<^{d}$ is a Wheeler order on the states of the automaton $\mathcal A^{d}$.  By Lemma  \ref{convex_sets}, the set  $Q^{d}=I_{\text{\em Pref}(\mathcal L(\mathcal A))}$ of states of  $\mathcal A^{d}$ is a prefix/suffix family of intervals, so that,  by   Lemma \ref{convex+order},    $<^{d}$ is a linear order on $Q^{d}$.
	Next, we check the Wheeler properties.
	The only vertex with  in-degree $0$  is $I_\epsilon$, and it clearly precedes those with positive in-degree.
	For any two edges  $(I_\alpha,I_{\alpha a_1}, a_1)$,  $(I_\beta,I_{\beta a_2}, a_2)$ we have:
	\begin{itemize}
		\item[(i)]  if $a_1 \prec a_2$  then  $\alpha a_1\prec \beta a_{2}$,  and  from Lemma \ref{compare} it follows   $I_{\alpha a_1}\leq^d I_{\beta a_2}$.  Moreover, by the input consistency of  $\mathcal A$, states in $I_{\alpha a_1}$ are $a_1$-states, while states in $I_{\beta a_2}$ are $a_2$-states; hence   $I_{\alpha a_1}\neq I_{\beta a_2}$, so that $I_{\alpha a_1}<^d I_{\beta a_2}$ follows.
		\item[(ii)]   If $a=a_{1}=a_{2}$ and  $I_\alpha < I_\beta$,   from Lemma \ref{compare} it follows $\alpha\prec \beta$, so that $\alpha a \prec \beta a$ and, using again   Lemma \ref{compare}, we obtain   $I=I_{\alpha a}\leq^i I=I_{\beta a}$.
	\end{itemize}
	Finally, we prove that $|Q^d|\leq 2n-1-|\Sigma|$. By the Wheeler properties, we know that the only interval in $I_{\text{\em Pref}(\mathcal L(\mathcal A))}$  containing the initial state $s$ of the automaton $\mathcal A$ is $\{s\}$ and that the remaining intervals can be partitioned into $|\Sigma|$-classes, by looking at the letter labelling incoming edges. 
	Let $\Sigma=\{a_1, \ldots, a_k\}$, and, for every $i=1,\ldots,k$,  let  $m_i$  be  the number of states of   the automaton $\mathcal A$  whose incoming edges are labelled $a_i$: then  $\sum_{i=1}^k m_i=n-1$.  Using Lemma \ref{2n} we see that the intervals in $Q^d$ composed by  $a_i$ states are at most $2m_i-1$, so that the total number of intervals in $V^d$ is at most $1+ \sum_{i=1}^k (2m_i-1)=1+ 2( \sum_{i=1}m_i )-k= 1+2(n-1)-k=2n-1-k= 2n-1-|\Sigma|$. \hfill
\end{proof}

 We will use the following Lemma in the next section. 
 \begin{lemma}\cite{alanko2020regular} \label{sim_convex}  Let $\mathcal A=(Q, s, \delta, <,F)$ be a    WNFA,  $\alpha, \beta, \delta \in    \text{Pref}(\mathcal L(\mathcal A))$,  $u,v,w\in Q$. 
\begin{enumerate}
\item if $ \alpha\prec \delta\prec \beta$  and $I_\alpha=I_\beta$, then $I_\alpha= I_\delta $;
\item if $ u<w<v $ and $ I_{u}=I_{v} $, then $ I_{u}=I_{w} $.    
\end{enumerate} 
 \end{lemma}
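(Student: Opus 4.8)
The two parts are mirror images of each other: (1) concerns the fibers of $\alpha \mapsto I_\alpha$ inside $(\pf{L(\mathcal A)}, \prec)$, while (2) concerns the fibers of $u \mapsto I_u$ inside $(Q,<)$, and both reduce to the same contradiction argument driven by Lemma \ref{prec_versus_minus}. The plan is to prove (1) in detail and obtain (2) by swapping the roles of states and strings, invoking part (1) of Lemma \ref{prec_versus_minus} in place of part (2). Throughout I use the duality $\alpha \in I_u \iff u \in I_\alpha$, the fact that every $I_\alpha$ and every $I_u$ is non-empty (here $\alpha,\delta$ are readable and every state is reachable from $s$), and the reduction of the goal to showing that no state lies in the symmetric difference of $I_\alpha$ and $I_\delta$.

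For (1), I would assume $\alpha \prec \delta \prec \beta$ with $I_\alpha = I_\beta$ and suppose, for contradiction, that some state $z$ belongs to exactly one of $I_\alpha$ and $I_\delta$, splitting into two symmetric subcases. If $z \in I_\delta \setminus I_\alpha$, pick any $y \in I_\alpha$. Applying Lemma \ref{prec_versus_minus}(2) to the pair $\alpha \prec \delta$ with $y \in I_\alpha$ and $z \in I_\delta$ — the side condition $\{y,z\}\not\subseteq I_\delta \cap I_\alpha$ holding because $z \notin I_\alpha$ — yields $y < z$; applying it to $\delta \prec \beta$ with $z \in I_\delta$ and $y \in I_\beta = I_\alpha$ — the side condition holding because $z \notin I_\beta$ — yields $z < y$, a contradiction. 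The subcase $z \in I_\alpha \setminus I_\delta$ is identical with the roles of $y$ and $z$ interchanged: choosing $w \in I_\delta$, the pair $\alpha \prec \delta$ forces $z < w$ and the pair $\delta \prec \beta$ forces $w < z$. Hence $I_\delta = I_\alpha$.

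Part (2) runs the same way: assuming $u < w < v$, $I_u = I_v$, and a string $\gamma$ in exactly one of $I_u$ and $I_w$, I would apply Lemma \ref{prec_versus_minus}(1) to the pairs $u < w$ and $w < v$ (using that $I_u, I_w, I_v$ are non-empty and that $\gamma$ being in only one of the two sets supplies the matching intersection side conditions) to force $\gamma$ to be simultaneously $\prec$ and $\succ$ a fixed string of the other fiber, a contradiction.

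The only genuinely delicate point is the repeated verification of the side condition $\{\cdot,\cdot\}\not\subseteq I \cap J$ in each invocation of Lemma \ref{prec_versus_minus}: it is precisely the assumption that $z$ (resp. $\gamma$) lies in exactly one of the two sets that makes this condition hold, so the lemma may be used in its ``if and only if'' form; everything else is bookkeeping. I note in passing that the inclusion $I_\alpha \subseteq I_\delta$ in (1) also follows more directly from the string-convexity of each $I_u$ established in Lemma \ref{convex_sets}(1) (since $\alpha \prec \delta \prec \beta$ together with $\alpha,\beta \in I_u$ forces $\delta \in I_u$), leaving only the reverse inclusion to the Lemma \ref{prec_versus_minus} argument, but the uniform contradiction argument above is cleaner and handles both parts at once.
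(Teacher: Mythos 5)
Your proof is correct and follows essentially the same route as the paper's: the decisive step in both is the double application of Lemma \ref{prec_versus_minus} to the pairs $(\alpha,\delta)$ and $(\delta,\beta)$ (with the side condition supplied by the witness lying outside one of the two sets) to extract the contradiction $u<v$ and $v<u$. The only cosmetic difference is that the paper obtains the inclusion $I_\alpha\subseteq I_\delta$ directly from the convexity of the sets $I_u$ (Lemma \ref{convex_sets}) and reserves the contradiction argument for the reverse inclusion, whereas you treat both halves of the symmetric difference uniformly with the same argument --- a variation you yourself flag in your closing remark.
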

 \begin{proof} \
 
 \begin{enumerate}
 \item Suppose $\alpha\prec \delta\prec \beta  ~\hbox{and}~ I_\alpha=I_\beta$. If $u\in I_\alpha=I_\beta$  then $\alpha, \beta \in I_u$ and since by Lemma
\ref{convex_sets}  $I_u$ is  a convex set,   $\delta\in I_u$ follows. Hence, $u\in I_\delta$, from which it follows that $I_\alpha\subseteq I_\delta$. Suppose, for contradiction, that $I_\alpha\subsetneq I_\delta$ and let $v\in I_\delta\setminus I_\alpha$. It follows  $\delta \in  I_v$ and $\alpha \not \in I_v$.  Consider  $u$ such that $\alpha \in I_u$,  then $\alpha \in I_u \setminus I_v$, $\delta\in I_v$, and $\alpha\prec \delta$, from which it follows that $u<v$ by Lemma \ref{prec_versus_minus}.   On the other hand, $\beta \in I_u\setminus I_v$ as well, because $u\in I_\beta=I_\alpha$ and  $v\not \in  I_\beta=I_\alpha$, then  $\delta \prec \beta$ and 
 Lemma \ref{prec_versus_minus} implies  $v<u$. A contradiction. 
 \item This point is entirely similar to the above. 
 \end{enumerate}
 \end{proof}

\subsection{ Convex Equivalences from  Wheeler Automata}

Given a WNFA  $\mathcal A$,  we consider two convex equivalence relations,  $\sim_{\mathcal A}$ and $ \approx_{\mathcal A}$.

 \begin{definition}\label{sim_A}If $\mathcal A=(Q, s, \delta, <,F)$ is an WNFA, $\alpha, \beta\in  \text{\em Pref}(\mathcal L(\mathcal A))$, and $u,v \in Q$,  we define:
\begin{align*}
\alpha\sim_{\mathcal A} \beta & \text{ if and only if } I_\alpha=I_\beta. \\ 
u \approx_{\mathcal A}  v & \text{ if and only if } I_{u}=I_{v}. 
\end{align*}
\end{definition}
Whe shall write $\approx$ instead of $\approx_{\mathcal A}$ when  the automaton $\mathcal A$ is clear from the context. Note that, by Lemma \ref{sim_convex},   $\approx $-equivalence classes  are  in fact intervals of $(Q,<)$---that is, $\approx $ is a convex equivalence over $(Q,<)$. As we shall see in Lemma \ref{sim_Aw}, the  equivalence $\sim_{\mathcal A}$   over $ \text{\em Pref}(\mathcal L(\mathcal A))$  is also convex, with respect to the co-lexicographic order on $ \text{\em Pref}(\mathcal L(\mathcal A))$.  

 \begin{definition}  \label{right_inv} Given a language $\mathcal L\subseteq \Sigma^*$, an equivalence relation $\sim$ over $\text{\em Pref}(\mathcal L)$    is:
\begin{itemize}
\item[-]   {\em right invariant}, when for all  $\alpha, \beta \in \text{\em Pref}(\mathcal L)$ and $\gamma\in \Sigma^*$: 
\begin{align*}
\text{ if } \alpha \sim \beta \text{ and } \alpha\gamma \in \text{\em Pref}(\mathcal L), &\text{ then }  \beta\gamma \in \text{\em Pref}(\mathcal L) \text{ and }  \alpha\gamma \sim \beta\gamma;
\end{align*}
\item[-]   {\em input consistent} if all  words belonging to the same $\sim$-class end with the same letter. 
\end{itemize}
\end{definition}

\begin{lemma}\label{sim_Aw}\cite{alanko2020regular} If $\mathcal A=(Q, s, \delta, <,F) $ is an $n$-states WNFA  such that $\mathcal L= \mathcal L(\mathcal A)$, then:
\begin{enumerate}
\item  $\sim_{\mathcal A}$ is a right invariant, input consistent,   convex   equivalence relation over $\text{Pref}(\mathcal L)$;
\item  $\sim_{\mathcal A}$'s index  is less than or equal to $2n-1-|\Sigma|$;
\item $\mathcal L$  is a  union of $ \sim_{\mathcal A}$-classes. 
\end{enumerate}
\end{lemma}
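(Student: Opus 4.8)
The plan is to prove the three claims of Lemma~\ref{sim_Aw} essentially by unwinding the definition $\alpha \sim_{\mathcal A} \beta \Leftrightarrow I_\alpha = I_\beta$ and invoking the structural facts about the sets $I_\alpha, I_u$ already established. Recall $I_\alpha = \delta(s,\alpha)$ is the set of states reached by reading $\alpha$, and that by Lemma~\ref{convex_sets} each $I_\alpha$ is an interval of $(Q,<)$ and the family $I_{\text{\em Pref}(\mathcal L)}$ has the prefix/suffix property. The key observation throughout is that $\sim_{\mathcal A}$ is exactly the kernel of the map $\alpha \mapsto I_\alpha$, so its equivalence classes are in bijection with the set $Q^d = I_{\text{\em Pref}(\mathcal L)}$ of states of the determinization $\mathcal A^d$.

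For part~(1) I would establish the three properties in turn. \emph{Right invariance:} suppose $\alpha \sim_{\mathcal A}\beta$, i.e.\ $I_\alpha = I_\beta$, and $\alpha\gamma \in \text{\em Pref}(\mathcal L)$. Writing $\gamma = c_1\cdots c_k$, I would argue by induction on $|\gamma|$ that $I_{\alpha\gamma} = I_{\beta\gamma}$, using the recursive definition $I_{\alpha c} = \delta^d(I_\alpha, c) = \bigcup_{v \in I_\alpha}\delta(v,c)$, which depends only on the set $I_\alpha$ and the letter $c$; since $I_\alpha = I_\beta$ the one-step images coincide, and in particular $\beta\gamma$ is also readable, i.e.\ lies in $\text{\em Pref}(\mathcal L)$, and $\alpha\gamma \sim_{\mathcal A}\beta\gamma$. \emph{Input consistency:} if $\alpha \sim_{\mathcal A}\beta$ with $\alpha,\beta$ nonempty, write $\alpha = \alpha' a$ and $\beta = \beta' b$; every state of $I_\alpha = I_\beta$ has an incoming edge labelled $a$ (from the $\alpha$ side) and one labelled $b$ (from the $\beta$ side), so by the input-consistency of $\mathcal A$ (the Remark after Definition~\ref{WNFA_WDFA}) we get $a = b$. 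The empty prefix $\epsilon$ forms its own class $I_\epsilon = \{s\}$ since $s$ has no incoming edges and is the unique minimum. \emph{Convexity:} this is exactly part~(1) of Lemma~\ref{sim_convex}: if $\alpha \prec \delta \prec \beta$ with $I_\alpha = I_\beta$ then $I_\alpha = I_\delta$, so each $\sim_{\mathcal A}$-class is convex in $(\text{\em Pref}(\mathcal L),\prec)$.

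Part~(2) is the most substantive bound, but it is already done: the number of $\sim_{\mathcal A}$-classes equals $|Q^d| = |I_{\text{\em Pref}(\mathcal L)}|$, and Lemma~\ref{Wdeterminization} gives $|Q^d| \le 2n-1-|\Sigma|$. Alternatively one could cite Lemma~\ref{2n} applied to the prefix/suffix family $I_{\text{\em Pref}(\mathcal L)}$ on the linear order $(Q,<)$ of size $n$, partitioning by incoming-label as in the proof of Lemma~\ref{Wdeterminization} to sharpen $2n-1$ to $2n-1-|\Sigma|$. For part~(3), $\mathcal L$ is a union of $\sim_{\mathcal A}$-classes because membership $\alpha \in \mathcal L$ depends only on whether $I_\alpha \cap F \neq \emptyset$, which is a function of $I_\alpha$ alone; thus if $\alpha \in \mathcal L$ and $\alpha \sim_{\mathcal A}\beta$ then $I_\beta = I_\alpha$ meets $F$ and $\beta \in \mathcal L$. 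I expect no genuine obstacle here, since every ingredient is pre-proved; the only point requiring care is making the induction in the right-invariance argument explicit enough that the simultaneous conclusion ``$\beta\gamma$ is readable \emph{and} $\alpha\gamma \sim_{\mathcal A}\beta\gamma$'' is carried through each step rather than assumed.
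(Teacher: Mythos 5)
Your proposal is correct and follows essentially the same route as the paper's proof: convexity via Lemma \ref{sim_convex}, right invariance by propagating $I_\alpha=I_\beta$ one letter at a time (you merely make the induction on $|\gamma|$ explicit where the paper treats only the single-letter step), input consistency from the input-consistency of the automaton, the index bound by counting the prefix/suffix family $I_{\text{\emph{Pref}}(\mathcal L)}$ via Lemma \ref{2n} with the per-label partition, and part (3) from the fact that acceptance depends only on $I_\alpha\cap F$. If anything, your version is slightly more careful than the paper's on the $-|\Sigma|$ refinement of the bound, which the paper attributes to Lemma \ref{2n} alone but actually needs the partition argument from Lemma \ref{Wdeterminization}.
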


\begin{proof} \
\begin{enumerate}
    \item 
We first  check that $ \sim_{\mathcal A}$ equivalence classes are convex sets (convex sets) of  $(\text{\em Pref}(\mathcal L), \prec)$.
If $\alpha\prec \beta\prec \gamma$ are such that $\alpha, \beta, \gamma\in \text{\em Pref}(\mathcal L)$ and $\alpha\sim_{\mathcal A} \gamma$,
then $\beta \sim_{\mathcal A} \alpha$ follows from Lemma \ref{sim_convex}. 

As for right invariance, suppose $\alpha \sim_{\mathcal A} \beta$. Then $I_\alpha=I_\beta$, from which it follows $I_{\alpha e}=I_{\beta e}$ because for any state $u\in I_{\alpha e} $     there exists a state $u'\in I_\alpha=I_\beta$  such that  $u'\in \delta(u,e)$; hence $u\in I_{\beta e} $. This proves that  $I_{\alpha e}\subseteq I_{\beta e}$. The reverse inclusion is proved similarly. 

Input consistency of $ \sim_{\mathcal A} $  follows from Wheeler properties, since if two words end with  different letters, then   they cannot lead to the same state in a Wheeler automaton. 

\item 
The index of $ \sim_{\mathcal A} $ is equal to the cardinality of     $I_ {\text{\em Pref}(\mathcal L(\mathcal A))}$  which is a prexix/suffix family of  $(Q,<)$ by Lemma \ref{convex_sets}.  By Lemma\ref{2n},  this index is bounded by  $2n-1-|\Sigma|$.
\item 
$\mathcal L=\bigcup_{\alpha\in \mathcal L} [\alpha]_{ \sim_{\mathcal A}}$.
\end{enumerate}
\end{proof}

If  $\mathcal A$ is a WDFA, $\mathcal L= \mathcal L(\mathcal A)$, and $\alpha \in  \text{\em Pref}(\mathcal L)$, then $I_\alpha$  contains a single state: $\sim_{\mathcal A}$'s index is equal to the number of states of the automaton ${\mathcal A}$. 

\medskip

Let us now consider the second equivalence, $ \approx_{\mathcal A}$ (or, simply, $\approx$). 

\begin{definition}\label{def:A/equiv} Let $\mathcal A=(Q, s, \delta, <,F)$ be a WNFA.  The 
quotient automaton ${\mathcal A}/{\approx }=(Q^{\approx }, s^{\approx },\delta^{\approx },<^{\approx },F^{\approx })$  is defined as follows:

\begin{itemize}
\item[-] $Q^{\approx }=\{[u]_{\approx }~|~ u \in Q\}$;
\item[-] $s^{\approx }=[s]_{\approx }= \{s\}$;
\item[-] $\delta^{\approx }([v]_{\approx },e)=\{[u]_{\approx } ~|~(\exists u' \in [u]_{\approx })(\exists v'\in [v]_{\approx }) (u'\in \delta(v',e))\}$;
\item[-] $[u]_{\approx }<^{\approx }[v]_{\approx }$ if and only if $ [u]_{\approx } \neq [v]_{\approx } \land u<v$;
\item[-] $F^{\approx }=\{  [u]_{\approx }~|~ [u]_{\approx }\cap F\neq \emptyset\}$. 
\end{itemize}
\end{definition}

 Note that the  relation $<^\approx$ on the equivalence classes is well defined because, by Lemma \ref{sim_convex},  the equivalence classes $[u]_{\approx }$ are (disjoint) intervals  of  $(Q,<)$.


\begin{lemma}\label{quotient} ${\mathcal A}/{\approx }$  is a Wheeler automaton and  $\mathcal L({\mathcal A})= \mathcal L({\mathcal A}/{\approx })$.
\end{lemma}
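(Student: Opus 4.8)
\textbf{Proof plan for Lemma \ref{quotient}.}

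The plan is to prove the two claims separately: first that $\mathcal L(\mathcal A) = \mathcal L(\mathcal A/{\approx})$, and then that $\mathcal A/{\approx}$ satisfies the Wheeler conditions of Definition \ref{WNFA_WDFA}. For the language equality, I would establish the key bridge between a state $u$ and its class $[u]_{\approx}$, namely that for every $\alpha \in \text{\em Pref}(\mathcal L(\mathcal A))$ one has $[u]_{\approx} \in \delta^{\approx}(s^{\approx}, \alpha)$ if and only if $u \in \delta(s, \alpha)$ (equivalently, $\alpha \in I_u$). The right-to-left direction is immediate from the definition of $\delta^{\approx}$. For the left-to-right direction I would argue by induction on $|\alpha|$, using the definition of $\delta^{\approx}$ together with the fact that $\approx$-classes are intervals: the subtlety is that the witness states $u', v'$ produced by $\delta^{\approx}$ may differ from $u$ itself, so I must check that membership of \emph{some} representative in $\delta(v', e)$ forces $\alpha \in I_u$, which follows because $I_u = I_{u'}$ whenever $u \approx u'$. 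Granting this bridge, $\alpha \in \mathcal L(\mathcal A)$ iff $\delta(s,\alpha) \cap F \neq \emptyset$ iff some class in $\delta^{\approx}(s^{\approx},\alpha)$ meets $F$, which by the definition of $F^{\approx}$ is exactly $\delta^{\approx}(s^{\approx},\alpha) \cap F^{\approx} \neq \emptyset$, i.e. $\alpha \in \mathcal L(\mathcal A/{\approx})$.

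For the Wheeler property, I would first invoke the remark after Definition \ref{def:A/equiv}: since the $\approx$-classes are disjoint intervals of $(Q,<)$ by Lemma \ref{sim_convex}, the relation $<^{\approx}$ is a well-defined strict linear order on $Q^{\approx}$, with $s^{\approx} = \{s\}$ as minimum and no incoming edges (as $s$ has none). Then I would verify properties (i) and (ii). Suppose $[u_1]_{\approx} \in \delta^{\approx}([v_1]_{\approx}, a_1)$ and $[u_2]_{\approx} \in \delta^{\approx}([v_2]_{\approx}, a_2)$; unwinding the definition, there are representatives $u_1' \approx u_1$, $v_1' \approx v_1$ with $u_1' \in \delta(v_1', a_1)$, and similarly for the second edge. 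Input consistency of $\mathcal A$ guarantees that $\lambda$ is constant on each $\approx$-class, so the labels are well-defined on quotient states. For (i), if $a_1 \prec a_2$ then Wheeler-(i) applied to the original edges gives $u_1' < u_2'$; since these lie in \emph{distinct} classes (different labels), $[u_1]_{\approx} <^{\approx} [u_2]_{\approx}$ follows.

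The main obstacle is property (ii), where I must be careful about the choice of representatives. Assume $a_1 = a_2 = a$ and $[v_1]_{\approx} <^{\approx} [v_2]_{\approx}$, so $[v_1]_{\approx} \neq [v_2]_{\approx}$ and $v_1 < v_2$; I want $[u_1]_{\approx} \leq^{\approx} [u_2]_{\approx}$. The difficulty is that the witnesses $v_1', v_2'$ need not satisfy $v_1' < v_2'$, since each class is only an interval and the two intervals, though disjoint, fix the order of \emph{every} pair of their elements; here the fact that the classes are disjoint intervals is exactly what rescues the argument: $[v_1]_{\approx} <^{\approx} [v_2]_{\approx}$ means the whole interval $[v_1]_{\approx}$ lies below the whole interval $[v_2]_{\approx}$, hence $v_1' < v_2'$ for \emph{any} representatives. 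Applying Wheeler-(ii) to $u_1' \in \delta(v_1', a)$ and $u_2' \in \delta(v_2', a)$ yields $u_1' \leq u_2'$. Translating back to classes, either $u_1' \approx u_2'$, giving $[u_1]_{\approx} = [u_2]_{\approx}$, or $u_1' < u_2'$ in distinct classes, giving $[u_1]_{\approx} <^{\approx} [u_2]_{\approx}$; in both cases $[u_1]_{\approx} \leq^{\approx} [u_2]_{\approx}$, which is exactly (ii). This completes the verification that $\mathcal A/{\approx}$ is Wheeler.
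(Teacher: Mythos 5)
Your proposal is correct and follows essentially the same route as the paper: the language equality rests on the same bridge $[u]_{\approx}\in\delta^{\approx}(s^{\approx},\alpha)\Leftrightarrow u\in\delta(s,\alpha)$ proved by induction on $|\alpha|$, using $I_u=I_{u'}$ for $u\approx u'$ to handle the change of representatives. Your explicit verification of Wheeler-(i) and (ii) from the fact that $\approx$-classes are disjoint intervals (so $[v_1]_{\approx}<^{\approx}[v_2]_{\approx}$ forces $v_1'<v_2'$ for \emph{all} representatives) is exactly the content the paper compresses into ``follows easily from the definition and the fact that the equivalence classes are intervals.''
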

\begin{proof}
The fact that the order  on equivalence classes defined above  is Wheeler follows easily from the definition and the fact that the equivalence classes are intervals.

To see that $\mathcal L({\mathcal A})= \mathcal L({\mathcal A}/{\approx })$,  observe that, although  in general  the implication 
$ [u]_{\approx }  \in \delta^{\approx }([v]_{\approx }, e) \Rightarrow u\in \delta(v,e) $ does not hold,  we  do have that  
$[u]_{\approx }  \in \delta^{\approx }(s^{\approx },e) \Rightarrow  u\in \delta(s,e)$ does hold. As a matter of fact, more generally, we can prove that  for all $\alpha \in \Sigma^*$: 
\begin{align}\label{eq1}
[u]_{\approx }   \in {\delta}^{\approx }(s^{\approx },\alpha) & \text{ if and only if }  u\in {\delta}(s,\alpha). 
\end{align} 

The direction from left to right of (\ref{eq1}) is proved by induction on $|\alpha|$. 

For the base case, suppose $[u]_{\approx }  \in {\delta}^{\approx }(s^{\approx },\epsilon)=s^{\approx } $;  then, since 
$s^{\approx }=\{s\}$ we have $u=s\in  \delta(s,\epsilon)$.

 For the inductive step, suppose $[u]_{\approx }  \in {\delta}^{\approx }(s^{\approx },\alpha e)$; then  let  $v\in Q$ be such that  
$ [ v]_{\approx }  \in {\delta}^{\approx } (s^{\approx },\alpha)$,  and $[u]_{\approx }   \in \delta^{\approx } ( [v]_{\approx }, e)$.
By inductive hypothesis,  $v\in {\delta}(s, \alpha)$ and by definition of $\delta^{\approx } $ we know that 
  there are $   u', v'\in Q$, such that $ u {\approx }  {u'},  v{\approx }   {v'} $, and  $u'\in \delta(v',e)$.  From $[v]_{\approx }=[ v']_{\approx }$ it  follows that $v'\in  \delta(s, \alpha)$,  and so  $u'\in  \delta (s, \alpha e)$. Since  $[u]_{\approx }=[ u']_{\approx }$, it  follows   $u\in   \delta (s, \alpha e)$.  
  
The direction from right to left of (\ref{eq1}) is easy to see.

\medskip
  
  From (\ref{eq1}) $ {\mathcal L}(\mathcal A)= \mathcal L({\mathcal A}/{\approx })$ follows.  In fact:
  $ \alpha \in  {\mathcal L}(\mathcal A) \Leftrightarrow   (\exists u \in F) ( u\in \delta(s, \alpha))  \Leftrightarrow  (\exists u \in F)([u ]_{\approx }  \in \delta(s^{\approx }, \alpha))
    \Leftrightarrow (\exists [u]_{\approx }  \in F^{\approx })   ([u ]_{\approx }  \in \delta(s^{\approx }, \alpha)) \Leftrightarrow  \alpha \in  \mathcal L({\mathcal A}/{\approx })$.

\end{proof}
When the $\approx$-classes are not singletons, two different  states in a (W)NFA can be reached by exactly the same collection of $ \alpha $'s in $ \Sigma^{*} $. To avoid this trivial kind of redundancy, we introduce the following notion.

\begin{definition}\label{red}
 A Wheeler NFA  $\mathcal A=(Q, s,\delta,<,F)$  is {\em  reduced} if for all $u,v\in Q$, 
\begin{align*}
u\neq v \text{ if and only if } I_u\neq I_v.
\end{align*}
\end{definition}

 It is clear that the quotient automaton ${\mathcal A}/{\approx }$  of a WNFA is reduced.
As a consequence of Lemma \ref{quotient} we have:
\begin{corollary} Any  WNFA is  equivalent to a reduced one.
\end{corollary}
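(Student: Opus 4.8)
The plan is to produce the required reduced automaton as the quotient $\mathcal A/{\approx}$ constructed in Definition \ref{def:A/equiv}. Given an arbitrary WNFA $\mathcal A$, Lemma \ref{quotient} already establishes the two hard facts, namely that $\mathcal A/{\approx}$ is again a Wheeler automaton and that $\mathcal L(\mathcal A/{\approx}) = \mathcal L(\mathcal A)$, so the word ``equivalent'' in the statement is taken care of. Consequently the only remaining task is to verify that $\mathcal A/{\approx}$ satisfies the reducedness condition of Definition \ref{red}.

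To this end I would first identify, for each state $[u]_{\approx}$ of the quotient, its set of incoming prefixes. Writing $I^{\approx}_{[u]_{\approx}} = \{\alpha \in \text{\em Pref}(\mathcal L(\mathcal A)) : [u]_{\approx} \in \delta^{\approx}(s^{\approx},\alpha)\}$ for this set computed inside $\mathcal A/{\approx}$, the crucial observation is the identity $I^{\approx}_{[u]_{\approx}} = I_u$. This is precisely equation (\ref{eq1}) proved inside Lemma \ref{quotient}, which asserts $[u]_{\approx} \in \delta^{\approx}(s^{\approx},\alpha)$ if and only if $u \in \delta(s,\alpha)$; the two sides of this biconditional are exactly the defining conditions of $\alpha \in I^{\approx}_{[u]_{\approx}}$ and $\alpha \in I_u$.

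With this identity in hand, reducedness follows by unwinding the definition of $\approx$. For any two states $[u]_{\approx}$ and $[v]_{\approx}$ of the quotient we have the chain of equivalences: $[u]_{\approx} \neq [v]_{\approx}$ iff $u \not\approx v$ iff $I_u \neq I_v$ (by Definition \ref{sim_A}) iff $I^{\approx}_{[u]_{\approx}} \neq I^{\approx}_{[v]_{\approx}}$. The first and last terms are exactly the reducedness requirement of Definition \ref{red} applied to $\mathcal A/{\approx}$, so $\mathcal A/{\approx}$ is reduced and the corollary is proved.

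I expect no genuine obstacle here, since the substantive content has been deferred to Lemma \ref{quotient}. The one point that deserves care, and which I would state explicitly rather than leave implicit, is that the prefix-set $I^{\approx}_{[u]_{\approx}}$ of a state of the quotient coincides with the original $I_u$ and is not some larger union arising from the collapse of several states; this is exactly what equation (\ref{eq1}) rules out. Everything else is a direct substitution of definitions.
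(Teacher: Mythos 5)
Your proposal is correct and follows the paper's own route: the paper likewise derives the corollary from Lemma \ref{quotient} after observing that the quotient ${\mathcal A}/{\approx}$ is reduced (a fact it declares ``clear''). Your explicit verification of reducedness via equation (\ref{eq1}), showing $I^{\approx}_{[u]_{\approx}}=I_u$, simply fills in that step and is sound.
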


Our interest in reduced automata relies on the following result:

\begin{lemma} \label{unique}\cite{alanko2020regular} The Wheeler order of a reduced WNFA  is unique.
\end{lemma}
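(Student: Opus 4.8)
The plan is to show that in a reduced WNFA the Wheeler order $<$ on $Q$ is \emph{forced} by data that do not themselves depend on $<$, namely the family $I_Q=\{I_u : u\in Q\}$ together with the order $\prec^{i}$ it induces inside $(\text{\em Pref}(\mathcal L(\mathcal A)),\prec)$. The crucial preliminary remark is that each set $I_u=\{\alpha : u\in I_\alpha\}$ is determined purely by $s$ and $\delta$, and the co-lexicographic order $\prec$ on prefixes is determined purely by the alphabet order on $\Sigma$; hence neither $I_Q$ nor $\prec^{i}$ changes if we replace one Wheeler order by another. That $I_Q$ is a prefix/suffix family, so that $\prec^{i}$ is even defined, follows from Lemma \ref{convex_sets}(2), using the mere existence of some Wheeler order; and $\prec^{i}$ is a strict linear order by Lemma \ref{convex+order}. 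So it suffices to prove that any Wheeler order $<$ on $\mathcal A$ can be read off from $\prec^{i}$.

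First I would establish the characterisation
\[
u<v \quad\Longleftrightarrow\quad I_u\prec^{i}I_v ,
\]
valid for every Wheeler order $<$ of a reduced automaton. The right-to-left direction is exactly the first implication of Lemma \ref{compare}(1). For the left-to-right direction, assume $u<v$; then $u\neq v$, so reducedness (Definition \ref{red}) gives $I_u\neq I_v$, while the second implication of Lemma \ref{compare}(1) gives $I_u\preceq^{i}I_v$; since $\prec^{i}$ is a strict linear order, $I_u\neq I_v$ upgrades $\preceq^{i}$ to $\prec^{i}$. This is the only place where the reducedness hypothesis enters, and it is essential: without it, two states with $I_u=I_v$ could be ordered either way, so the order would not be unique.

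Finally I would conclude. Let $<_1$ and $<_2$ be two binary relations each turning $\mathcal A$ into a WNFA. By the previous paragraph, for $k\in\{1,2\}$ we have $u<_k v$ if and only if $I_u\prec^{i}I_v$, where the right-hand side is the \emph{same} relation in both cases because, as noted, $\prec^{i}$ does not depend on the chosen Wheeler order. Therefore $<_1$ and $<_2$ agree on every pair $(u,v)$, that is, $<_1=<_2$. The main (and really only) subtlety is the first paragraph's observation that $\prec^{i}$ is an invariant of $\mathcal A$; once that is pinned down, Lemma \ref{compare} together with reducedness does all the work, and no induction or case analysis on the automaton's structure is needed.
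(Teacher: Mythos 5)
Your proof is correct and takes essentially the same approach as the paper's: both arguments show that for distinct states $u,v$ of a reduced WNFA the relative order is forced by the order-independent co-lexicographic data $I_u, I_v$ (with reducedness supplying $I_u\neq I_v$), the only difference being that you route through Lemma \ref{compare} and the induced order $\prec^{i}$ on $I_Q$, whereas the paper applies Lemma \ref{prec_versus_minus} directly to explicit witnesses $\alpha\in I_u\setminus I_v$, $\beta\in I_v$. Your observation that $\prec^{i}$ is an invariant of $(\Sigma,\prec,s,\delta)$ is the same point the paper leaves implicit, so nothing further is needed.
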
 
\begin{proof} Let   $\mathcal A=(Q, s,\delta,<,F)$ be a reduced  WNFA. 
If $u\neq v\in Q$ then $I_u\neq I_v$ and either $I_u\setminus I_v\neq   \emptyset$ or 
$I_v\setminus I_u\neq   \emptyset$.  If  $I_u\setminus I_v\neq   \emptyset$, consider $\alpha\in I_u\setminus I_v$, and $\beta \in I_v$.
Then, by Lemma \ref{prec_versus_minus}, if $\alpha \prec \beta$ then  $u <v$.  
Similarly, if $\alpha\in I_v\setminus I_u$ and  $\beta \in I_u$, we have that $\alpha \prec \beta$ implies $v<u$.
 
In both cases, the  Wheeler order is (uniquely) determined. 
\end{proof}


  In Corollary   \ref{reducedwnfa} we shall see that  deciding whether a given Wheeler NFA is reduced is in $P$.    Reduced NFA are considered  again in Section \ref{AWheeler}, 
where we prove that deciding Wheelerness for a reduced NFA can be done in polynomial time (contrary to  the case of  general NFA, see \cite{DBLP:conf/esa/GibneyT19}).

 \subsection{A Myhill-Nerode Theorem for Wheeler Languages }\label{MN-subsection}

Given  $\mathcal L\subseteq \Sigma^*$,   we define the \emph{right context} of $\alpha\in \Sigma$, as
\[ \alpha^{-1}\mathcal L =\{\gamma\in  \Sigma^*~:~ \alpha\gamma \in \mathcal L\},\]

and we denote by 
$\equiv_{\mathcal L}$ the Myhill-Nerode equivalence (\emph{right syntactic congruence}) on $ \text{\em Pref}(\mathcal L)$ defined as 
 \[\alpha \equiv_{\mathcal L}\beta ~~\Leftrightarrow ~~ \alpha^{-1}\mathcal L=\beta^{-1}\mathcal L.\]

\begin{definition}
The input consistent, convex refinement $\equiv_{\mathcal L}^{c}$ of $ \equiv_{\mathcal L} $ is  
defined  as follows:
\begin{align*}
\alpha\equiv_{\mathcal L}^{c}\beta  \Leftrightarrow &  \alpha \equiv_{\mathcal L} \beta \wedge end(\alpha)=end(\beta) \wedge (\forall    \gamma \in  \text{\em  Pref}(\mathcal L))  (min\{\alpha,\beta\}\prec \gamma \prec max\{\alpha, \beta\} \rightarrow \gamma \equiv_{\mathcal L} \alpha),
\end{align*}
where  $\alpha, \beta \in  \text{\em Pref}(\mathcal L)$ and  $end(\alpha)$ is the final character of $\alpha$ when $\alpha\neq\epsilon$, and $ \epsilon $ otherwise. 
\end{definition}

\begin{lemma}\label{base} \cite{alanko2020regular}
If $\mathcal L\subseteq \Sigma^*$,  then $\equiv_{\mathcal L}^{c}$    is a   convex, right invariant, input consistent equivalence relation over $(\text{Pref}(\mathcal L), \prec)$  and $\mathcal L$ is a union of classes of  $\equiv_{\mathcal L}^{c}$.
 \end{lemma}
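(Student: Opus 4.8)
The plan is to exhibit $\equiv_{\mathcal L}^{c}$ as the intersection of two convex equivalence relations, which immediately yields three of the four required properties, and then to establish right invariance by a separate argument. Let $(\equiv_{\mathcal L})^{c}$ be the convex refinement of the Myhill--Nerode equivalence $\equiv_{\mathcal L}$ over $(\text{Pref}(\mathcal L),\prec)$ (in the sense of Lemma \ref{wheeler_refinement}), and let $R$ be the equivalence defined by $\alpha\, R\, \beta \Leftrightarrow end(\alpha)=end(\beta)$. Comparing the definitions one checks directly that
\[
\equiv_{\mathcal L}^{c} \;=\; (\equiv_{\mathcal L})^{c}\cap R,
\]
since the ``between'' clause of $\equiv_{\mathcal L}^{c}$ (requiring $\gamma\equiv_{\mathcal L}\alpha$) is exactly the clause defining $(\equiv_{\mathcal L})^{c}$, and the clause $end(\alpha)=end(\beta)$ is exactly $R$.

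First I would record two structural facts about the co-lexicographic order, both immediate from the definition of $\prec$: (A) the last character is the most significant coordinate, so for each $c$ the set of strings ending in $c$ is convex in $(\Sigma^{*},\prec)$, these ``bands'' being ordered according to $\prec$ on the final characters; and (B) appending a fixed suffix preserves and reflects the order, i.e. $\alpha\prec\beta \Leftrightarrow \alpha\gamma\prec\beta\gamma$. Fact (A) shows $R$ is a convex equivalence on $(\text{Pref}(\mathcal L),\prec)$, since each $R$-class is the intersection of a band with $\text{Pref}(\mathcal L)$. By Lemma \ref{wheeler_refinement}, $(\equiv_{\mathcal L})^{c}$ is also a convex equivalence. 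An intersection of equivalence relations is an equivalence relation, and since the class of $x$ under $(\equiv_{\mathcal L})^{c}\cap R$ is $[x]_{(\equiv_{\mathcal L})^{c}}\cap [x]_{R}$ --- an intersection of two convex sets, hence convex --- the relation $\equiv_{\mathcal L}^{c}$ is a convex equivalence relation. Input consistency is immediate from the $R$-factor: $\alpha\equiv_{\mathcal L}^{c}\beta$ forces $end(\alpha)=end(\beta)$.

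It remains to prove right invariance, which I expect to be the main obstacle. By prefix-closedness of $\text{Pref}(\mathcal L)$ and induction on $|\gamma|$, it suffices to treat a single appended character $e$. So assume $\alpha\equiv_{\mathcal L}^{c}\beta$ and $\alpha e\in\text{Pref}(\mathcal L)$, and suppose without loss of generality $\alpha\preceq\beta$. Since $\equiv_{\mathcal L}$ is the right syntactic congruence, $(\alpha e)^{-1}\mathcal L=e^{-1}(\alpha^{-1}\mathcal L)=e^{-1}(\beta^{-1}\mathcal L)=(\beta e)^{-1}\mathcal L$, giving both $\beta e\in\text{Pref}(\mathcal L)$ and $\alpha e\equiv_{\mathcal L}\beta e$; moreover $end(\alpha e)=e=end(\beta e)$. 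For the between-clause, let $\delta\in\text{Pref}(\mathcal L)$ satisfy $\alpha e\prec\delta\prec\beta e$ (note $\alpha e\preceq\beta e$ by (B)). As $\alpha e,\beta e$ both end in $e$, Fact (A) forces $end(\delta)=e$, so $\delta=\delta' e$ with $\delta'\in\text{Pref}(\mathcal L)$ by prefix-closedness; by (B), $\alpha e\prec\delta' e\prec\beta e$ reflects to $\alpha\prec\delta'\prec\beta$. The between-clause of $\alpha\equiv_{\mathcal L}^{c}\beta$ then gives $\delta'\equiv_{\mathcal L}\alpha$, whence $\delta=\delta' e\equiv_{\mathcal L}\alpha e$ by right invariance of $\equiv_{\mathcal L}$. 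Thus $\alpha e\equiv_{\mathcal L}^{c}\beta e$, completing the single-character case.

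Finally, the union property is inherited from $\equiv_{\mathcal L}$: since $\equiv_{\mathcal L}^{c}$ refines $\equiv_{\mathcal L}$ and $\alpha\in\mathcal L \Leftrightarrow \epsilon\in\alpha^{-1}\mathcal L$ depends only on the $\equiv_{\mathcal L}$-class of $\alpha$, every $\equiv_{\mathcal L}^{c}$-class is either contained in $\mathcal L$ or disjoint from it, so $\mathcal L$ is a union of $\equiv_{\mathcal L}^{c}$-classes. The only genuinely delicate point is the between-clause of right invariance, where Facts (A) and (B) are both essential: (A) to locate $\delta$ in the $e$-band and strip its last letter, and (B) to pull the comparison back from $\delta' e$ to $\delta'$ so that the hypothesis on $\alpha\equiv_{\mathcal L}^{c}\beta$ can be applied.
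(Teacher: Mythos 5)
Your proof is correct and follows essentially the same route as the paper: convexity via the convex-refinement lemma, right invariance by locating an intermediate string in the same final-suffix ``band'', stripping the suffix, and applying the between-clause of the hypothesis, and the union property via refinement of $\equiv_{\mathcal L}$. Your version is in fact slightly more careful at two points the paper glosses over --- you justify convexity despite the extra $end(\alpha)=end(\beta)$ conjunct by writing $\equiv_{\mathcal L}^{c}$ as an intersection of two convex equivalences, and you prove (rather than merely assert) that a string co-lexicographically between $\alpha\gamma$ and $\beta\gamma$ must itself end in $\gamma$, via the single-character case and induction.
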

 \begin{proof} 
 The equivalence $\equiv_{\mathcal L}^{c}$ is input consistent by definition.  
 Moreover, it  is  convex, being a convex refinement of an equivalence over  the  ordered set
  $ (\text{\em Pref}(\mathcal L), \prec)$  (see Lemma \ref{wheeler_refinement}). 

To prove that $\equiv_{\mathcal L}^{c}$ is right invariant, consider  $\alpha , \alpha' , \gamma\in  \text{\em Pref}(\mathcal L)$ and assume $\alpha\equiv_{\mathcal L}^{c} \alpha'$.   Note that:
\begin{itemize} 
\item[-] if $\alpha \gamma \in \text{\em Pref}(\mathcal L)$ then  there exists $\nu \in \Sigma^*$ such that  $\alpha \gamma \nu \in  \mathcal L$, therefore  $\alpha' \gamma \in \text{\em Pref}(\mathcal L)$ follows from $\alpha\equiv_{\mathcal L} \alpha'$;
\item[-]  $\alpha \gamma \equiv_{\mathcal L} \alpha'\gamma$ follows from $\alpha\equiv_{\mathcal L} \alpha'$. 
\item[-] If $\alpha \gamma \prec \beta' \prec \alpha' \gamma$,  for $\beta' \in  \text{\em Pref}(\mathcal L)$, then $\beta'=\beta \gamma$,    and $\alpha \prec \beta \prec \alpha'$. Since $\alpha, \alpha'$ belong to the same $\equiv_{\mathcal L}^{c} $ class, then     $\beta\equiv_{\mathcal L} \alpha$, and $\beta'=\beta \gamma \equiv_{\mathcal L} \alpha \gamma$ follows. 
\end{itemize}
Since $\alpha \gamma, \beta \gamma $ end with the same letter, the previous points imply  that  $\equiv_{\mathcal L}^{c} $ is right invariant. 

Finally, $\mathcal L$ is a union of classes of  $\equiv_{\mathcal L}^{c}$ because $\mathcal L$   is a union of $ \equiv_{\mathcal L}$ classes and $\equiv_{\mathcal L}^{c}$ is a refinement of $ \equiv_{\mathcal L}$.
\end{proof}

\begin{lemma}\label{finite-lemma}\cite{alanko2020regular}
 If $ \mathcal A=(Q, s, \delta, <,F) $ is a WNFA and $\mathcal L = \mathcal L(\mathcal A)$,  then  $\sim_{\mathcal A}$ is a refinement of $\equiv_{\mathcal L}^{c}$.   
\end{lemma}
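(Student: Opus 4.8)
The plan is to show that $\sim_{\mathcal A}$ refines $\equiv_{\mathcal L}^c$, i.e.\ that $\alpha \sim_{\mathcal A} \beta$ implies $\alpha \equiv_{\mathcal L}^c \beta$. Recall that $\alpha \sim_{\mathcal A} \beta$ means $I_\alpha = I_\beta$ (the set of states reached from $s$ by reading $\alpha$ and by reading $\beta$ coincide), and that by Lemma \ref{sim_Aw} the equivalence $\sim_{\mathcal A}$ is already known to be right invariant, input consistent, and convex over $(\text{\em Pref}(\mathcal L), \prec)$. So I would break the target relation $\equiv_{\mathcal L}^c$ into its three defining conjuncts and verify each under the hypothesis $I_\alpha = I_\beta$.

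First I would establish the Myhill--Nerode component, namely $\alpha \equiv_{\mathcal L} \beta$, i.e.\ $\alpha^{-1}\mathcal L = \beta^{-1}\mathcal L$. This is where the acceptance structure of the automaton enters. Since $I_\alpha = I_\beta$, for any $\gamma \in \Sigma^*$ the set of states reachable by reading $\alpha\gamma$ equals the set reachable by reading $\beta\gamma$: indeed $\delta(s,\alpha\gamma) = \bigcup_{u \in I_\alpha}\delta(u,\gamma) = \bigcup_{u \in I_\beta}\delta(u,\gamma) = \delta(s,\beta\gamma)$, because $\delta$ extended to strings depends only on the starting state set. Consequently $\delta(s,\alpha\gamma)\cap F \neq \emptyset$ if and only if $\delta(s,\beta\gamma)\cap F \neq \emptyset$, which is exactly $\gamma \in \alpha^{-1}\mathcal L \Leftrightarrow \gamma \in \beta^{-1}\mathcal L$. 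Hence $\alpha \equiv_{\mathcal L}\beta$.

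Next I would dispatch the two remaining conjuncts using facts already in hand. Input consistency, $end(\alpha)=end(\beta)$, is immediate from the input consistency of $\sim_{\mathcal A}$ established in Lemma \ref{sim_Aw}(1): words in the same $\sim_{\mathcal A}$-class end with the same letter (equivalently, this follows directly from Wheeler property (i), since states in the common interval $I_\alpha=I_\beta$ all carry the same label, which must be the final letter of $\alpha$ and of $\beta$). For the convexity clause, I must show that whenever $\min\{\alpha,\beta\} \prec \gamma \prec \max\{\alpha,\beta\}$ with $\gamma \in \text{\em Pref}(\mathcal L)$, we have $\gamma \equiv_{\mathcal L}\alpha$. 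But $\sim_{\mathcal A}$ is convex (Lemma \ref{sim_Aw}(1)), so from $\alpha \sim_{\mathcal A} \beta$ and $\alpha \prec \gamma \prec \beta$ (after renaming so that $\alpha = \min$, $\beta = \max$) we get $\gamma \sim_{\mathcal A} \alpha$; then the first paragraph's argument upgrades $\gamma \sim_{\mathcal A} \alpha$ to $\gamma \equiv_{\mathcal L} \alpha$.

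I do not expect a serious obstacle here: the essential content is the observation that $I_\alpha$ determines the future behaviour of the automaton, which gives the $\equiv_{\mathcal L}$ step, and everything else is a direct citation of the already-proved properties of $\sim_{\mathcal A}$. The only point demanding a little care is organising the convexity clause so that the three-way comparison with $\min$ and $\max$ matches the direction of the convexity statement of Lemma \ref{sim_Aw}, and then chaining $\sim_{\mathcal A}$-equivalence into $\equiv_{\mathcal L}$-equivalence via the state-reachability computation; both are routine once the key identity $\delta(s,\alpha\gamma)=\delta(s,\beta\gamma)$ is in place.
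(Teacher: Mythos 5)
Your proof is correct and follows essentially the same route as the paper's: derive $\alpha\equiv_{\mathcal L}\beta$ from $I_\alpha=I_\beta$ (the paper leaves the state-reachability computation implicit, you spell it out), get $end(\alpha)=end(\beta)$ from input consistency, and handle the convexity clause via Lemma \ref{sim_convex} (which you invoke through Lemma \ref{sim_Aw}(1)) to obtain $I_\alpha=I_\gamma$ and then upgrade to $\gamma\equiv_{\mathcal L}\alpha$. No gaps.
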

\begin{proof}
Suppose $\alpha \sim_{\mathcal A} \beta$;  then $\alpha \equiv_{\mathcal L} \beta$ follows easily from the definition of  $ \sim_{\mathcal A}$,  and $end(\alpha)=end(\beta)$ follows  from the input consistency of    $\mathcal A$.  
To prove that $\alpha \equiv_{\mathcal L}^{c} \beta$ we only have to show that if $\gamma \in \text{\em Pref}(\mathcal L)$ and $min\{\alpha,\beta\}\prec \gamma \prec max\{\alpha, \beta\}$ then $\gamma \equiv_{\mathcal L} \alpha$. This holds because, by Lemma \ref{sim_convex}, from  $\alpha\prec \gamma \prec \beta$ and $I_{\alpha}=I_\beta$, we have 
$I_\alpha=I_\gamma$, hence 
$\alpha \sim_{\mathcal A} \gamma$  holds, and   $\alpha \equiv_{\mathcal L} \gamma$ follows. 
\end{proof}

\begin{corollary}\label{finite-corollary} \cite{alanko2020regular}
If $ \mathcal A=(Q, s, \delta, <,F) $ is a WNFA with $|Q|=n$  and $\mathcal L = \mathcal L(\mathcal A)$,   then   $\equiv_{\mathcal L}^{c}$'s index  is  bounded by  $2n-1-|\Sigma|$. 
\end{corollary}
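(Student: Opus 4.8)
The plan is to obtain the bound on the index of $\equiv_{\mathcal L}^{c}$ by relating it to the equivalence $\sim_{\mathcal A}$ whose index we already control, and then invoking the refinement relation between the two. The key observation is that the index of a coarser equivalence cannot exceed the index of a finer one: if $\sim_{\mathcal A}$ is a refinement of $\equiv_{\mathcal L}^{c}$ (meaning every $\sim_{\mathcal A}$-class is contained in a single $\equiv_{\mathcal L}^{c}$-class), then the number of $\equiv_{\mathcal L}^{c}$-classes is at most the number of $\sim_{\mathcal A}$-classes. This is a completely standard fact about equivalence relations on a common domain, namely $\text{\em Pref}(\mathcal L)$ here.

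First I would record that $\mathcal A$ is a WNFA with $\mathcal L = \mathcal L(\mathcal A)$, so both equivalences $\sim_{\mathcal A}$ and $\equiv_{\mathcal L}^{c}$ are defined on the same set $\text{\em Pref}(\mathcal L)$. Next I would apply Lemma \ref{finite-lemma}, which states precisely that $\sim_{\mathcal A}$ is a refinement of $\equiv_{\mathcal L}^{c}$. Consequently each $\equiv_{\mathcal L}^{c}$-class is a union of $\sim_{\mathcal A}$-classes, and the map sending a $\sim_{\mathcal A}$-class to the $\equiv_{\mathcal L}^{c}$-class containing it is a well-defined surjection from the set of $\sim_{\mathcal A}$-classes onto the set of $\equiv_{\mathcal L}^{c}$-classes. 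Therefore the index of $\equiv_{\mathcal L}^{c}$ is bounded above by the index of $\sim_{\mathcal A}$.

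Finally I would invoke Lemma \ref{sim_Aw}, part (2), which gives that the index of $\sim_{\mathcal A}$ is at most $2n-1-|\Sigma|$, where $n = |Q|$. Chaining the two inequalities yields that the index of $\equiv_{\mathcal L}^{c}$ is at most $2n-1-|\Sigma|$, which is exactly the claimed bound.

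Honestly, I do not expect any serious obstacle here: this corollary is a one-line consequence of combining two previously established lemmas. The only thing requiring mild care is to make sure the two equivalences are genuinely defined over the same domain $\text{\em Pref}(\mathcal L)$ so that the refinement statement of Lemma \ref{finite-lemma} translates cleanly into the inequality of indices; once that is noted, the surjection argument is immediate and the numerical bound follows directly from Lemma \ref{sim_Aw}.
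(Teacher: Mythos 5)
Your proof is correct and follows exactly the same route as the paper's: invoke Lemma \ref{finite-lemma} to get that $\sim_{\mathcal A}$ refines $\equiv_{\mathcal L}^{c}$, conclude that the index of the coarser relation is bounded by that of the finer one, and then apply the bound $2n-1-|\Sigma|$ from Lemma \ref{sim_Aw}. Your added remark about both equivalences living on the common domain $\text{\emph{Pref}}(\mathcal L)$ is a sensible (if implicit in the paper) piece of bookkeeping.
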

\begin{proof}  By Lemma \ref{finite-lemma}, we know that 
  $\sim_{\mathcal A}$ is a refinement of $\equiv_{\mathcal L}^{c}$, hence the number of classes of $ \equiv_{\mathcal L}^{c}$ is less than or equal to the number of classes of $\sim_{\mathcal A}$, which is bounded by  $2n-1-|\Sigma|$, as proved in the Lemma \ref{sim_Aw}. 
    \end{proof} 
 Note that, if $\mathcal L$ is Wheeler, we cannot always  extend $\equiv_{\mathcal L}^{c}$ to the set $\Sigma^*$ maintaining the preceding corollary.  For example, if $\mathcal L$ is the Wheeler language of Example \ref{incomplete}, then the equivalence relation $\equiv_{\mathcal L}^{c}$ has an infinite number of classes over $\Sigma^*$.

\begin{theorem}[Myhill-Nerode for Wheeler Languages]  \label{Myhill-Nerode} \cite{alanko2020regular}
Given a language $\mathcal L\subseteq \Sigma^*$, the  following are equivalent:
\begin{enumerate}
\item $\mathcal L$ is a Wheeler language (i.e. $L$  is recognized by a  WNFA).
\item $\equiv_{\mathcal L}^{c}$ has  finite index.
\item $\mathcal L$ is a union of   classes of a convex, input consistent, right invariant  equivalence over  $(\text{Pref}(\mathcal L), \prec)$  of finite index.
\item $\mathcal L$ is recognized by a  WDFA.
\end{enumerate}
\end{theorem}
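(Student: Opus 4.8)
The plan is to prove the four-way equivalence by establishing a cycle of implications, exploiting the machinery already developed in the excerpt so that most of the work reduces to invoking earlier lemmas. I would organize the argument as $(1)\Rightarrow(2)\Rightarrow(3)\Rightarrow(4)\Rightarrow(1)$, since the forward arrows become progressively more mechanical and the genuinely constructive step is the passage from a finite convex equivalence to an actual WDFA.

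\medskip

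For $(1)\Rightarrow(2)$, I would simply appeal to Corollary \ref{finite-corollary}: if $\mathcal L$ is recognized by a WNFA with $n$ states, then $\equiv_{\mathcal L}^{c}$ has index bounded by $2n-1-|\Sigma|$, which is finite. The step $(2)\Rightarrow(3)$ is almost immediate from Lemma \ref{base}: that lemma already shows $\equiv_{\mathcal L}^{c}$ is a convex, input consistent, right invariant equivalence over $(\text{\em Pref}(\mathcal L),\prec)$ and that $\mathcal L$ is a union of its classes; combined with the finiteness assumed in $(2)$, this exhibits exactly the witnessing equivalence demanded by $(3)$. Finally, $(4)\Rightarrow(1)$ is trivial, since every WDFA is by definition a WNFA.

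\medskip

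The crux of the theorem is $(3)\Rightarrow(4)$: from a finite convex, input consistent, right invariant equivalence $\sim$ over $(\text{\em Pref}(\mathcal L),\prec)$ whose classes union up to $\mathcal L$, I must build a genuine WDFA recognizing $\mathcal L$. The plan is the natural Myhill--Nerode style quotient construction, adapted to the ordered setting. I would take the states to be the $\sim$-classes $[\alpha]_\sim$ for $\alpha\in\text{\em Pref}(\mathcal L)$ (a finite set by hypothesis), with initial state $[\epsilon]_\sim$, transition $\delta([\alpha]_\sim,e)=[\alpha e]_\sim$ whenever $\alpha e\in\text{\em Pref}(\mathcal L)$, and final states those classes contained in $\mathcal L$. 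Right invariance guarantees $\delta$ is well defined and single-valued, so the automaton is deterministic; the standard argument then shows it recognizes $\mathcal L$ since $\mathcal L$ is a union of classes. The state-labelling is legitimate because $\sim$ is input consistent, so every word in a class ends with the same letter $e$, which we assign as the state's label (and $\#$ to $[\epsilon]_\sim$).

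\medskip

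The main obstacle, and the heart of the proof, is verifying that this DFA admits a Wheeler order. The plan is to order states by their classes using the convex structure: since each class is a convex subset of $(\text{\em Pref}(\mathcal L),\prec)$ and these classes partition $\text{\em Pref}(\mathcal L)$, the induced order $<^{i}$ from Definition preceding Lemma \ref{convex+order} linearly orders the states (note the classes trivially form a prefix/suffix family, being pairwise disjoint). I then check Wheeler properties (i) and (ii) of Definition \ref{WNFA_WDFA} directly: property (i), that $a_1\prec a_2$ forces the target of an $a_1$-transition to precede the target of an $a_2$-transition, follows because target classes are labelled $a_1$ and $a_2$ respectively and the co-lexicographic order compares the final letter first, so any word ending in $a_1$ precedes any word ending in $a_2$; property (ii), monotonicity when labels agree, is where I must work carefully, showing that if class $I$ precedes class $J$ (both feeding into $e$-labelled targets) then $Ie$ precedes or equals $Je$. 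This should follow by taking representatives $\alpha\in I$, $\beta\in J$ with $\alpha\prec\beta$, observing $\alpha e\prec\beta e$ from the definition of co-lexicographic order, and using convexity together with the $<^{i}$ characterization to conclude the target classes are correctly ordered. The delicate point is handling representatives that may straddle class boundaries, which is precisely why convexity of the equivalence classes is indispensable; I expect this monotonicity verification to be the step requiring the most care.
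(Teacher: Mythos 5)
Your proposal is correct and follows essentially the same route as the paper: the implications $(1)\Rightarrow(2)$, $(2)\Rightarrow(3)$, and $(4)\Rightarrow(1)$ are dispatched by Corollary \ref{finite-corollary}, Lemma \ref{base}, and triviality respectively, and your $(3)\Rightarrow(4)$ construction (states are $\sim$-classes, transitions well defined by right invariance, order $<^{i}$ on the pairwise-disjoint convex classes, Wheeler (i) from input consistency and last-letter comparison, Wheeler (ii) from $\alpha\prec\beta\Rightarrow\alpha e\prec\beta e$ plus convexity) is exactly the automaton $\mathcal A_{\sim}$ the paper builds and the same verification it performs. The only cosmetic difference is that the paper states the transition as $\delta_{\sim}(I,e)=J$ when $Ie\cap\text{\em Pref}(\mathcal L)\neq\emptyset$ and $Ie\subseteq J$, which is the same map you describe.
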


\begin{proof} \

\begin{itemize}
  \item[] (1) $\Rightarrow$ (2) From Corollary \ref{finite-corollary}.
  \item[] (2) $\Rightarrow$ (3)   $\mathcal L$ is a union of $\equiv_{\mathcal L}^{c}$ classes,   which  by Lemma \ref{base}, is a convex, input consistent, right invariant  equivalence of  of finite index. 
\item[] (3) $\Rightarrow$ (4)   Suppose   $\mathcal L$ is a union of classes of  a convex, input consistent,  right invariant  equivalence  relation $ \sim $  of finite index. We build a WDFA    ${\mathcal A}_\sim=(Q_{\sim}, s_{\sim}, \delta_{\sim},<_{\sim} F_{\sim})$ such that $\mathcal L=\mathcal L(\mathcal A)$ as follows: 
\begin{itemize}
\item[-] $ Q_{\sim}=\{[\alpha]_{\sim}~|~\alpha \in  \text{\em Pref}(\mathcal L)\}$;
\item[-] $ s_{\sim}=\{[\epsilon]_{\sim}\} $ (note that, by input consistency,  $[\epsilon]_{\sim}=\{\epsilon\}$);  
\item[-]  if $ Ie \cap \text{\em Pref}(\mathcal L) \neq \emptyset $ and $ Ie \subseteq J $, then $ \delta_{\sim}(I,e) = J $  (note that $J$, if existing, is unique because equivalence classes are pairwise disjoint);
\item[-] $<_{\sim} = \prec^{i}$,  that is: $I<_{\sim} J$ if and only if $(\forall \alpha \in I) (\forall \beta \in J) ~\alpha \prec \beta$.
\item[-] $ F_{\sim}= \{I~:~ I \subseteq \mathcal L\}$.    
\end{itemize}
For all $I\in Q_{\sim}$ and $\alpha \in \text{\em Pref}(\mathcal L)$,  observe that $ \delta_{\sim}(I,\alpha) $  (if defined) is always a singleton set (i.e. $ \mathcal A_{\sim} $ is deterministic).  

We prove that:
\begin{align*}
\alpha\in I & \text{ if and only if }   \delta_{\sim} (s_{\sim}, \alpha)=I.
\end{align*}
  by induction on the length of  $\alpha\in  \text{\em Pref}(\mathcal L)$. 
  If  $\alpha=\epsilon$ then  $ \delta_{\sim}(s_{\sim}, \alpha)=[\epsilon]_{\sim}$ and  $[\epsilon]_{\sim}=\{\epsilon\}$, by definition. 
If $\alpha=\alpha' e\in  \text{\em Pref}(\mathcal L)$ with  $e\in \Sigma$,  then     $\alpha'\in  \text{\em Pref}(\mathcal L)$ and 
\[\alpha'e\in I \Leftrightarrow \exists J (\alpha'\in J \wedge \emptyset \neq Je \subseteq I )\Leftrightarrow  \exists J ( \delta(s_\sim, \alpha')=J \wedge \emptyset \neq Je \subseteq I \Leftrightarrow   \delta(s_\sim, \alpha)=I. \]    

From the above claim and the definition of $ F_{\sim} $, it easily follows that $\mathcal L$ is the language recognised by  ${\mathcal A}_{\sim}$.

\medskip 

We conclude by checking that  ${\mathcal A}_{\sim}$ is Wheeler, proving the two Wheeler properties (i) and (ii).

 To see Wheeler-(i) assume  $e\prec e' $ with  $e,e'\in\Sigma$. Consider  $I,J \in Q_{\sim}$ such that both  $\delta_{\sim}(I, e)$ and $\delta_{\sim}(J,e')$ are defined and are equal to 
 $H$ and $K$, respectively.  By definition of $\delta_{\sim}$,  there are  $\alpha\in I$, $ \alpha'\in J$ with  $\alpha e \in H$ and $ \alpha' e' \in K$. From  $e\prec e'$ it follows that  $H \prec^{i} K$ since all words in $H$ end with $e$, while all words in $K$ end with $e'$.

To see Wheeler-(ii) assume  $I<_{\sim} J$,  $e\in \Sigma$,  and both $\delta_{\sim}(I,e)$ and  $\delta_{\sim}(J,e)$ are defined and equal to $ H$ and  $K$, respectively. In these hypotheses there are  $\alpha\in I$,  $\alpha'\in J $, with  $\alpha e \in H$ and $ \alpha' e \in K$.  It follows $\alpha \prec \alpha'$ and therefore, $\alpha e \prec \alpha' e$ and   $ H \preceq^{i} K $. 
  
 This ends the proof of the implication $(3) \Rightarrow (4)$. 

\medskip

\item[](4) $\Rightarrow$ (1) Trivial. 
\end{itemize}  
\end{proof}

\begin{rem} \label{duality}
 If  $ \mathcal D$  is a WDFA  with $|Q|=n$ states,     then the  equivalence  $ \sim_{\mathcal D}$  over  $\text{\em Pref}(\mathcal L)$ defined in  Def. \ref{sim_A} has  $n$ classes, because each class $[\alpha]_{ \sim_{\mathcal D}}$ can be uniquely identified with the unique state $u_\alpha= {\delta}(s, \alpha)$. 
Moreover,   $ \sim_{\mathcal D}$ is a   convex, input consistent, right invariant  equivalence (Lemma \ref{sim_Aw}) and we may   construct    the  WDFA  
${\mathcal A}_{ \sim_{\mathcal D}}$ described in $(3\Rightarrow 4)$ of 
Theorem  \ref{Myhill-Nerode}; note that  ${\mathcal A}_{ \sim_{\mathcal D}}$ is  isomorphic to  $ \mathcal D$,  via the map 
$\phi: Q_{ \sim_{\mathcal D}} \rightarrow Q$ defined as $\phi([\alpha]_{ \sim_{\mathcal D}})=u_\alpha$.
\end{rem}

 \begin{corollary}\label{wdeterminization}   \cite{alanko2020regular}
If $ \mathcal A=(Q, s, \delta, <,F) $ is a WNFA with $|Q|=n$  and $\mathcal L = \mathcal L(\mathcal A)$,   then   there exists a unique, minimum-size (on the number of states) WDFA $ \mathcal B $
 such that $\mathcal L= \mathcal L(\mathcal B)$ and the number of $ \mathcal B $'s states is less   than  or equal to $2n-1-|\Sigma|$.
Moreover, the construction of $\mathcal B$  is effective and can be done in polynomial time. \end{corollary}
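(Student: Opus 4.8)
The plan is to take $\mathcal B$ to be the canonical WDFA $\mathcal A_{\equiv_{\mathcal L}^{c}}$ produced by the construction in the implication $(3)\Rightarrow(4)$ of Theorem \ref{Myhill-Nerode}, applied to the equivalence $\equiv_{\mathcal L}^{c}$. This is legitimate: by Lemma \ref{base}, $\equiv_{\mathcal L}^{c}$ is a convex, input consistent, right invariant equivalence over $(\text{Pref}(\mathcal L),\prec)$ of which $\mathcal L$ is a union of classes, and by Corollary \ref{finite-corollary} its index is at most $2n-1-|\Sigma|$. Hence $\mathcal B$ is a WDFA with $\mathcal L(\mathcal B)=\mathcal L$ and at most $2n-1-|\Sigma|$ states, which settles existence together with the size bound.

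For minimality and uniqueness I would compare $\mathcal B$ against an arbitrary WDFA $\mathcal D$ recognizing $\mathcal L$. By Remark \ref{duality}, $\mathcal D$ is isomorphic to $\mathcal A_{\sim_{\mathcal D}}$ and its number of states equals the index of $\sim_{\mathcal D}$. By Lemma \ref{finite-lemma}, $\sim_{\mathcal D}$ refines $\equiv_{\mathcal L}^{c}$, so the index of $\equiv_{\mathcal L}^{c}$ is at most that of $\sim_{\mathcal D}$, i.e. $|Q_{\mathcal B}|\le|Q_{\mathcal D}|$; thus $\mathcal B$ has minimum size. Moreover, if $\mathcal D$ is itself of minimum size, then $\sim_{\mathcal D}$ and $\equiv_{\mathcal L}^{c}$ have the same finite number of classes, so a refinement with equally many classes must be an equality $\sim_{\mathcal D}=\equiv_{\mathcal L}^{c}$; consequently $\mathcal D\cong\mathcal A_{\sim_{\mathcal D}}=\mathcal A_{\equiv_{\mathcal L}^{c}}=\mathcal B$, giving uniqueness up to isomorphism.

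Finally, for effectiveness and the polynomial-time bound I would avoid computing $\equiv_{\mathcal L}^{c}$ directly on the infinite set $\text{Pref}(\mathcal L)$ and instead proceed through a finite representation. First apply Wheeler determinization (Lemma \ref{Wdeterminization}) to $\mathcal A$, obtaining a WDFA $\mathcal A^{d}$ with at most $2n-1-|\Sigma|$ states; the key point is that, unlike the classical powerset construction, the number of reachable interval-states is only linear in $n$, so a breadth-first exploration that evaluates each $I_{\alpha e}=\bigcup_{v\in I_{\alpha}}\delta(v,e)$ visits polynomially many states and runs in polynomial time. Then $\mathcal B$ is the quotient of $\mathcal A^{d}$ by the restriction of $\equiv_{\mathcal L}^{c}$ to its states: one computes the syntactic equivalence $\equiv_{\mathcal L}$ on the states of the DFA $\mathcal A^{d}$ by a standard minimization procedure, and takes its convex refinement by scanning the states in Wheeler order and merging maximal runs of consecutive $\equiv_{\mathcal L}$-equivalent states. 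I expect the main obstacle to be justifying this last step rather than the counting arguments above: one must argue that the resulting quotient really is $\mathcal A_{\equiv_{\mathcal L}^{c}}$. This holds because $\mathcal A^{d}$ is a reduced WDFA whose states correspond, via Remark \ref{duality}, to the $\sim_{\mathcal A^{d}}$-classes ordered consistently with $<$ (Lemma \ref{compare}), so the convex refinement of $\equiv_{\mathcal L}$ on states realizes exactly the convex refinement defining $\equiv_{\mathcal L}^{c}$ on prefixes; input consistency is moreover automatic, since same-label states form a contiguous block in the Wheeler order by property (i), so no intermediate state in a merged run can carry a different label. As every step is polynomial in $n$ and $|\Sigma|$, the construction of $\mathcal B$ runs in polynomial time.
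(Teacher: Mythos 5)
Your argument for existence, the $2n-1-|\Sigma|$ bound, minimality, and uniqueness is exactly the paper's: take $\mathcal B=\mathcal A_{\equiv_{\mathcal L}^{c}}$ via Lemma \ref{base} and the $(3)\Rightarrow(4)$ construction, bound its size by Corollary \ref{finite-corollary}, and compare against an arbitrary WDFA $\mathcal D$ through Lemma \ref{finite-lemma} and Remark \ref{duality}. Where you genuinely diverge is on effectiveness: the paper simply defers to \cite{alanko2020regular}, whereas you supply a concrete algorithm (Wheeler determinization, then Nerode minimization on the states of $\mathcal A^{d}$, then a convex-refinement pass in Wheeler order). That sketch is essentially sound --- the states of $\mathcal A^{d}$ are convex classes refining $\equiv_{\mathcal L}^{c}$ and ordered consistently with $\prec$, so each $\equiv_{\mathcal L}^{c}$-class is indeed a union of consecutive states --- but one detail is stated too strongly: input consistency of the merged classes is \emph{not} automatic. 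Wheeler property (i) makes same-label states contiguous, but a maximal run of consecutive $\equiv_{\mathcal L}$-equivalent states could still straddle the boundary between the $a$-block and the $b$-block (nothing prevents the last $a$-state and the first $b$-state from having the same right language), and merging across that boundary would not realize $\equiv_{\mathcal L}^{c}$, whose definition explicitly imposes $end(\alpha)=end(\beta)$ on top of convexity. The fix is trivial --- restart the merging at every label boundary, i.e.\ take the convex refinement of $\equiv_{\mathcal L}$ intersected with ``same incoming label'' --- so this is a slip in justification rather than a flaw in the construction.
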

\begin{proof}
If $\mathcal L$ is recognized by an $ n $-states  WNFA,  then from Lemma \ref{base}  we know that  the  equivalence  $\equiv_{\mathcal L}^{c}$  is a  convex, input consistent, right invariant  equivalence  relation of  finite index, and $\mathcal L$ is a union of its classes.  Hence, using the construction employed to prove (3) $ \Rightarrow $ (4) of Theorem \ref{Myhill-Nerode}, we can build a WDFA $\mathcal B= \mathcal A_{\equiv_{\mathcal L}^{c}} $, whose number of states is equal to the number of $ \equiv_{\mathcal L}^{c} $-classes. From Corollary \ref{finite-corollary}  we know that the number of classes of $\equiv_{\mathcal L}^{c}$ is bounded  by   $2n-1-|\Sigma|$.

To see that $ \mathcal B $ has the minimum number of classes observe that, by Lemma \ref{finite-lemma}, any automaton $ \mathcal D $ accepting $ \mathcal L $ induces an equivalence relation $ \sim_{\mathcal D} $ which is a refinement of $ \equiv_{\mathcal L}^{c} $. If $ \mathcal D $ is deterministic, the number of $ \sim_{\mathcal D} $-classes is equal to the number of $ \mathcal D $'s states  which  is, therefore,  greater or equal than the number of  
 $ \equiv_{\mathcal L}^{c} $-classes.  It follows that,  if $ \mathcal D $  is a WDFA with the minimum number of states among
     WDFA's recognising $\mathcal L$, then $  \sim_{\mathcal D}  ~=   \equiv_{\mathcal L}^{c}$;
     this implies that $\mathcal A_{\equiv_{\mathcal L}^{c}}=  \mathcal A_{\sim_{\mathcal D}}$ so that 
  \[\mathcal B = \mathcal A_{\equiv_{\mathcal L}^{c}}=  \mathcal A_{\sim_{\mathcal D}}\simeq   \mathcal D. \]
  where the last isomorphism follows from  Remark \ref{duality}. 
 For the effectiveness of the construction of $\mathcal B$ we refer to \cite{alanko2020regular}.\end{proof}

  \begin{corollary}  \label{reducedwnfa} We can decide in polynomial time whether  a  Wheeler  NFA is reduced.
 \end{corollary}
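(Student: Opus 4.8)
The plan is to reduce the (a priori infinitary) condition defining reducedness to a finite membership condition involving only polynomially many objects, and then check that condition directly. Recall that by Definition \ref{red} the automaton $\mathcal A$ fails to be reduced exactly when there are two distinct states $u \neq v$ with $I_u = I_v$, i.e. when the convex equivalence $\approx_{\mathcal A}$ is non-trivial. By Lemma \ref{sim_convex}(2) the $\approx_{\mathcal A}$-classes are intervals of $(Q,<)$, so it would in principle suffice to test $I_u = I_{u'}$ only for consecutive states $u < u'$ in the Wheeler order; however, the argument below tests all pairs uniformly at no extra asymptotic cost, so I will not rely on this reduction.

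First I would turn the equality $I_u = I_v$ of (possibly infinite) sets of strings into a finite condition. Using the duality $\alpha \in I_w \Leftrightarrow w \in I_\alpha$ noted before Lemma \ref{prec_versus_minus}, we have $I_u = I_v$ if and only if $u \in I_\alpha \Leftrightarrow v \in I_\alpha$ for every $\alpha \in \pf{L}$. Since the value $I_\alpha = \delta(s,\alpha)$ depends only on which member of the family $I_{\pf L} = \{ I_\alpha : \alpha \in \pf{L}\}$ the string $\alpha$ determines, this is equivalent to the finite condition
\[ \{ J \in I_{\pf L} : u \in J \} = \{ J \in I_{\pf L} : v \in J \}, \]
that is, $I_u = I_v$ holds precisely when $u$ and $v$ belong to exactly the same reachable intervals. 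The crucial point is that, by Lemma \ref{convex_sets} each $I_\alpha$ is an interval, and by Lemma \ref{Wdeterminization} the family $I_{\pf L}$ (the state set of the Wheeler determinization $\mathcal A^d$) has at most $2n-1-|\Sigma|$ elements, hence only linearly many.

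Next I would actually compute $I_{\pf L}$. Running the ordinary subset construction starting from $I_\epsilon = \{s\}$ and applying $\delta$ letter by letter produces exactly the non-empty sets $\delta(s,\alpha) = I_\alpha$; by Lemma \ref{convex_sets}(3) each reachable subset is an interval, and by the bound just recalled there are at most $2n-1$ of them, so the exploration visits only polynomially many subsets and terminates in polynomial time (this is where Wheelerness is essential: for a general NFA the subset construction may produce exponentially many subsets). Having listed $I_{\pf L} = \{J_1, \dots, J_m\}$ with $m \le 2n-1-|\Sigma|$, I would attach to each state $u \in Q$ its signature, namely the Boolean vector whose $k$-th entry records whether $u \in J_k$. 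By the previous paragraph, $I_u = I_v$ if and only if these two signatures coincide, so $\mathcal A$ is reduced if and only if the signature map on $Q$ is injective. Computing all $n$ signatures takes $O(nm) = O(n^2)$ time, and testing injectivity (e.g. by sorting the vectors) is polynomial as well, which establishes the claim.

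The only genuinely non-routine ingredient is the reduction in the second paragraph together with the polynomial bound $m \le 2n-1-|\Sigma|$ on the number of reachable intervals: it is exactly the Wheeler property (through Lemmas \ref{convex_sets} and \ref{Wdeterminization}) that collapses the worst-case exponential subset construction to linear size, and thereby turns the condition $I_u = I_v$ — which for a general NFA encodes backward language equivalence and is PSPACE-hard — into a polynomially checkable one. Everything else (the subset exploration, the signature computation, and the injectivity test) is standard bookkeeping.
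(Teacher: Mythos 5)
Your proof is correct. It rests on the same key fact as the paper's proof --- that by Lemmas \ref{convex_sets} and \ref{2n} the reachable subsets $I_{\pf{L}}$ form a prefix/suffix family of intervals of size at most $2n-1-|\Sigma|$, so the powerset construction stays polynomial --- but the packaging is genuinely different. The paper observes that $I_u=\mathcal L(\mathcal A^u)$ where $\mathcal A^u$ is $\mathcal A$ with final-state set $\{u\}$, and for each of the $O(n^2)$ pairs $u,v$ it determinizes $\mathcal A^u$ and $\mathcal A^v$ via Corollary \ref{wdeterminization} and runs a DFA language-equivalence test. You instead run the subset construction once, record for each state $u$ the set of reachable intervals containing it, and compare these signatures; the correctness of this reduction ($I_u=I_v$ iff $u$ and $v$ lie in exactly the same members of $I_{\pf{L}}$) is exactly the duality $\alpha\in I_u\Leftrightarrow u\in I_\alpha$, and your argument for it is sound. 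In effect your signature comparison is what the paper's language-equivalence test unfolds to (the determinized $\mathcal A^u$ and $\mathcal A^v$ share their transition structure and differ only in which reachable subsets are final), so you have made the paper's argument more self-contained --- no black-box equivalence test, no appeal to Corollary \ref{wdeterminization} --- and cheaper, replacing $O(n^2)$ determinizations by a single one. One small presentational point: your opening remark that one could restrict to consecutive states via Lemma \ref{sim_convex}(2) is correct but, as you note, unnecessary; everything needed is already in the signature argument.
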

 \begin{proof} Let $\mathcal A$ be a Wheeler NFA. For each pair $u,v$ of  $\mathcal A$-states, we consider the two Wheeler automata $\mathcal A^u, \mathcal A^v$ which are obtained from $\mathcal A$ by considering, as set  of final states, $\{u\}$, $\{v\}$, respectively. Note that we can test in polinomial time whether $\mathcal L(\mathcal A^u)= \mathcal L(\mathcal A^v)$, because, by Corollary  \ref{wdeterminization},   we can determinize  WNFA   in polynomial time, and check if their languages are equal still in polynomial time (since they are deterministic). Then, $\mathcal A$ is reduced iff $\mathcal L(\mathcal A^u) \neq \mathcal L(\mathcal A^v)$ for all pairs $u\neq v$.
 \end{proof}

\section{Testing Wheelerness}
\subsection{Is a $ \mathcal L$  Wheeler?}\label{Lwheeler?}

In this section we prove that, given a  regular language $\mathcal L$ (say, by  an NFA $\mathcal A$ recognizing it), it s decidable whether or not $\mathcal L$ is Wheeler. 
Moreover,  if we start from a DFA recognizing $\mathcal L$,  we describe a polynomial time algorithm to complete the task. Note that in this section we deal with standard edge-labeled automata. 

We begin by giving an automata-free characterization of Wheelerness. 

\begin{lemma} \label{infinite_nonwh} A  regular language $\mathcal L$   is    Wheeler if and only if     all monotone sequences in $(\pf L, \prec)$  become eventually constant modulo $\equiv_{\mathcal L}$. In other words,  for all sequences    $(\alpha_i)_{i\geq 1}$  in  $  \pf L$ with 
\begin{align*}
\alpha_1\preceq  \alpha_2\preceq \ldots\preceq \alpha_i  \preceq \ldots &\text{ or } 
\alpha_1\succeq \alpha_2\succeq\ldots\succeq \alpha_i  \succeq \ldots 
\end{align*}
 there exists an $n$ such that $\alpha_h\equiv_{\mathcal L}\alpha_k$,  for all $h,k\geq n$. 
\end{lemma}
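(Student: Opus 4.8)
The plan is to route everything through the Myhill--Nerode characterisation of Theorem \ref{Myhill-Nerode}, which states that $\mathcal L$ is Wheeler exactly when $\equiv_{\mathcal L}^{c}$ has finite index. Since $\equiv_{\mathcal L}^{c}$ refines $\equiv_{\mathcal L}$, and (by regularity) $\equiv_{\mathcal L}$ already has finite index while $\Sigma$ is finite, the whole content of the statement lies in the interplay between the \emph{convex} refinement step defining $\equiv_{\mathcal L}^{c}$ and monotone sequences in $(\pf L,\prec)$. I would prove the two implications separately.

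For the forward direction, assume $\mathcal L$ is Wheeler. By Theorem \ref{Myhill-Nerode} there is a WDFA $\mathcal A$ with $\mathcal L(\mathcal A)=\mathcal L$. Given a monotone sequence $(\alpha_i)_{i\ge 1}$ in $(\pf L,\prec)$, Corollary \ref{monotone} guarantees that the sequence of reached states $(\delta(s,\alpha_i))_{i\ge 1}$ is eventually constant, say equal to some $u$ for all $i\ge n$. In a DFA two strings reaching the same state have the same right context, so $\alpha_h^{-1}\mathcal L=\alpha_k^{-1}\mathcal L$, i.e.\ $\alpha_h\equiv_{\mathcal L}\alpha_k$, for all $h,k\ge n$; this is precisely eventual constancy modulo $\equiv_{\mathcal L}$.

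For the converse I would argue by contraposition: assuming $\mathcal L$ is \emph{not} Wheeler, I build a monotone sequence that is not eventually constant modulo $\equiv_{\mathcal L}$. By Theorem \ref{Myhill-Nerode}, $\equiv_{\mathcal L}^{c}$ has infinite index; since $\equiv_{\mathcal L}$ has finite index and $\Sigma$ is finite, there must be a single $\equiv_{\mathcal L}$-class $E$ and a letter $e$ such that $E_e=\{w\in E : end(w)=e\}$ meets infinitely many $\equiv_{\mathcal L}^{c}$-classes. Taking one representative per class gives infinitely many distinct words in $E_e$; by the infinite Erd\H{o}s--Szekeres theorem I pass to a strictly monotone subsequence, which after relabelling I take increasing, $\gamma_1\prec\gamma_2\prec\cdots$, with the $\gamma_i$ pairwise $\equiv_{\mathcal L}^{c}$-inequivalent, all in $E$ and all ending in $e$. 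For consecutive terms $\gamma_i\prec\gamma_{i+1}$ we have $\gamma_i\equiv_{\mathcal L}\gamma_{i+1}$ and $end(\gamma_i)=end(\gamma_{i+1})$, yet $\gamma_i\not\equiv_{\mathcal L}^{c}\gamma_{i+1}$; reading off the definition of $\equiv_{\mathcal L}^{c}$, the only clause that can fail is the convexity one, so there is $\delta_i\in\pf L$ with $\gamma_i\prec\delta_i\prec\gamma_{i+1}$ and $\delta_i\not\equiv_{\mathcal L}\gamma_i$, i.e.\ $\delta_i\notin E$. Interleaving yields the strictly increasing (hence monotone) sequence $\gamma_1\prec\delta_1\prec\gamma_2\prec\delta_2\prec\cdots$, having infinitely many terms inside $E$ and infinitely many outside, so it is not eventually constant modulo $\equiv_{\mathcal L}$, the desired contradiction.

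I expect the main obstacle to be this converse direction, and specifically two points, both handled above: justifying the extraction of a monotone subsequence of class representatives (the order type of the infinitely many convex classes may be complicated, so a bare enumeration need not be monotone, which is why Erd\H{o}s--Szekeres is invoked), and pinning down that the failure of $\equiv_{\mathcal L}^{c}$-equivalence between same-class, same-ending words must stem from the convexity clause rather than from the $\equiv_{\mathcal L}$ or the $end(\cdot)$ clause, which is immediate from the definition of $\equiv_{\mathcal L}^{c}$.
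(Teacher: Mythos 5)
Your proposal is correct and follows essentially the same route as the paper: the forward direction via a WDFA and Corollary \ref{monotone}, and the converse by contraposition, using the infinite index of $\equiv_{\mathcal L}^{c}$ against the finite index of $\equiv_{\mathcal L}$, extracting a monotone sequence of $\equiv_{\mathcal L}$-equivalent, same-ending, pairwise $\equiv_{\mathcal L}^{c}$-inequivalent words, pulling witnesses out of the failed convexity clause, and interleaving. The only differences are cosmetic (you apply the pigeonhole on the class and ending letter before the monotone extraction, and cite Erd\H{o}s--Szekeres where the paper gives an ad hoc extraction), so no further comparison is needed.
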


\begin{proof} 
For the direction from left to right, suppose that  $\mathcal L$ is Wheeler and  consider an  infinite monotone  sequence  $(\alpha_i)_{i\geq 1}$ in $(\text{\em Pref}(\mathcal L), \prec)$.  
 By Theorem \ref{Myhill-Nerode}   there exist a WDFA  $\mathcal A=(Q,q,\delta, F, <)$ recognizing $\mathcal L$ and from    Corollary \ref{monotone}  it follows that there exists $n$ such that $\delta(s, \alpha_h)=\delta(s, \alpha_k)$, for all $k,h\geq n$. This, in turn, implies that  $\alpha_h\equiv_{\mathcal L}\alpha_k$,  for all $h,k\geq n$.

\medskip

For the direction from right to left, suppose the regular language $\mathcal L$   is not  Wheeler. By   Theorem \ref{Myhill-Nerode}   we know that $\equiv_{\mathcal L}^c$ has infinite 
index. However, since $\mathcal L$ is regular,  the equivalence  $\equiv_{\mathcal L} $  has finite index;  hence there exists a sequence $(\gamma_i )_{i\geq 1}$  of elements which are equivalent with respect to   $\equiv_{\mathcal L} $ but  pairwise 
not $\equiv_{\mathcal L}^c$-equivalent. From this sequence one can easily   extract a subsequence $(\beta_i)_{i\geq 1}$ which is either monotone increasing or 
monotone decreasing  and composed  of  $\equiv_{\mathcal L} $-equivalent elements (either   the set $\{i\geq 1: \forall   j>i ~(\gamma_j\prec\gamma_i)\}$ is finite, and we extract an infinite
increasing  subsequence, or  is infinite and we extract an infinite   decreasing sequence). 
Suppose the sequence $(\beta_i)_{i\geq 1}$  is decreasing (a similar argument can be used in case it is increasing).  By possibly discarding a finite number of initial elements from such a  sequence, we may assume  that all $\beta_i$'s end with the same letter. Then, for all $i$,  from
  $\beta_i\not \equiv_{\mathcal L}^c \beta_{i+1}$ and   $\beta_i \equiv_{\mathcal L} \beta_{i+1}$ it follows that   there exits $\eta_i\in \text{\em Pref}(\mathcal L)$ such that: 
  \[ \beta_i\succ \eta_i \succ \beta_{i+1}\] and $\beta_i \not \equiv_{\mathcal L} \eta_i$.
If we define   $(\alpha_i)_{i\geq 1}=(\beta_1,\eta_1, \beta_2, \eta_2, \ldots   )$, then $(\alpha_i)_{i\geq 1}$ is monotone  in  $  (\text{\em Pref}(\mathcal L), \prec)$, but there exists no $n$ such that $\alpha_h\equiv_{\mathcal L}\alpha_k$,  for all $h,k\geq n$. 

\end{proof}

%
   \begin{example} \label{nw} If  $\Sigma=\{a\}$,   we see that the regular language $\{a^{2i+1}: i\geq 0\}$ is not Wheeler by considering  the sequence $(\alpha_i)_{i\geq 1}$ with $\alpha_i=a^i$.  
 Another example of application of Lemma \ref{infinite_nonwh} is the language $\mathcal L=ax^*b ~|~ cx^* d$ which was proved to be non Wheeler in \cite{gagie2017wheeler}. 
 Consider the sequence 
 \[\alpha_i=
 \begin{cases}
 ax^i ~\text{ if $i$ is odd};\\
  cx^i~\text{ if $i$ is even}
 \end{cases}
 \]
 Then $(\alpha_i)_{i\geq 1}$ is a monotone (increasing) sequence in $(\pf L, \preceq)$ with $\alpha_i\not \equiv_{\mathcal L}\alpha_{i+1}$, and from Lemma \ref{infinite_nonwh}  it follows that $\mathcal L$ is not Wheeler.
  \end{example}

\begin{rem}\label{co-lex-pro}
In the following theorem we shall use some simple  properties  of the co-lexicographic order: 
 \begin{align}
& \xi \prec \zeta   \Leftrightarrow  \xi\rho \prec \zeta \rho,&\\
& \xi \prec \zeta  \Rightarrow  \xi \prec \rho \zeta, &\\
 & |\zeta| \geq |\xi|  \wedge \xi \succ \zeta  \Rightarrow   \xi \succ \rho \zeta,&
\end{align}
\end{rem}

\begin{theorem} \label{decidability} Consider a   regular language $\mathcal L=\mathcal L(\mathcal A)$, where   $\mathcal A$ is  the  minimum edge-labeled DFA recognizing   $\mathcal L$ with   initial state $s$. Then  $\mathcal L$ is not Wheeler if and only if there exist strings $\mu, \nu$, and  $\gamma$  such that:
\begin{enumerate} 
\item $\mu$ and $\nu$  label  paths from $s$  to states $u$ and $  v$, respectively,  with $u\neq v$;
\item  $\gamma$  labels  two   cycles, one  starting from $u$ and one starting from $v$;
\item $\mu, \nu\prec \gamma$ or  $\gamma \prec  \mu,  \nu$;
 \item   $|\mu|, |\nu|<|\gamma|\leq 2+|\mathcal A|+2|\mathcal A|^2+|\mathcal A|^3$,    where $|\mathcal A|$ is the number of states of the automaton $\mathcal A$.
\end{enumerate}
\end{theorem}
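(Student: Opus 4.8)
The characterization from Lemma \ref{infinite_nonwh} is the natural starting point: $\mathcal L$ is non-Wheeler exactly when some monotone sequence in $(\pf L,\prec)$ fails to stabilise modulo $\equiv_{\mathcal L}$. My plan is to show that such an infinite ``bad'' monotone sequence can always be witnessed by a finite, pumpable configuration, namely two distinct states $u,v$ reachable by $\mu,\nu$ together with a common cycle label $\gamma$ placed strictly on one side of both $\mu$ and $\nu$. First I would prove the \emph{if} direction: given $\mu,\nu,\gamma$ as in (1)--(3), I would pump the cycle and build the sequence $\alpha_i=\mu\gamma^i,\ \nu\gamma^i$ interleaved. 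Using the co-lexicographic facts in Remark \ref{co-lex-pro}, the condition $\mu,\nu\prec\gamma$ (or the symmetric one) forces $\mu\gamma^i$ and $\nu\gamma^i$ to be monotone in $i$ and to interleave, while $u\neq v$ in the \emph{minimal} DFA guarantees $\delta(s,\mu\gamma^i)\neq\delta(s,\nu\gamma^i)$, hence $\mu\gamma^i\not\equiv_{\mathcal L}\nu\gamma^i$; this produces a monotone non-stabilising sequence and Lemma \ref{infinite_nonwh} yields non-Wheelerness.

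\textbf{The main direction.}
For the \emph{only if} direction I would assume $\mathcal L$ non-Wheeler and invoke Lemma \ref{infinite_nonwh} to obtain a monotone sequence $(\alpha_i)$ with infinitely many distinct $\equiv_{\mathcal L}$-classes; since the minimal DFA has finitely many states, infinitely many $\alpha_i$ reach two \emph{fixed} distinct states. The goal is to extract a single common cycle label $\gamma$ readable from both $u$ and $v$ and lying on one side of the corresponding $\mu,\nu$. The key idea is that two strings reaching $u$ and $v$ but being very co-lexicographically far apart must contain a long common suffix (since the co-lex order compares from the right); that long common suffix, read backwards into the DFA from $u$ and from $v$, visits product states in $Q\times Q$, and by pigeonhole it must repeat a product state, carving out a cycle in both automaton copies simultaneously. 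That repeated product state gives the common cycle label $\gamma$, and choosing the witnessing strings to be co-lexicographically extreme on one side forces the side condition (3).

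\textbf{The length bound and the obstacle.}
The quantitative bound (4) is where the work concentrates. Its shape $2+|\mathcal A|+2|\mathcal A|^2+|\mathcal A|^3$ strongly suggests a pigeonhole argument over a product construction: the $|\mathcal A|^3$ term points to iterating a triple product (two copies for the two cycles plus one bookkeeping copy tracking the sign of the comparison $\mu,\nu\prec\gamma$ versus $\gamma\prec\mu,\nu$), the $|\mathcal A|^2$ terms to the pairwise synchronisation of the two cycles, and the lower-order terms to the short prefixes $\mu,\nu$. I would set up a product automaton whose states encode (state reached by the first copy, state reached by the second copy, and a finite summary of the relative co-lexicographic position), argue that any sufficiently long witness must repeat such a product state, and then excise the loop to shrink $\gamma$ below the stated bound while preserving conditions (1)--(3). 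The requirement $|\mu|,|\nu|<|\gamma|$ must be maintained through the excision, which is the delicate point: I expect the main obstacle to be verifying that shortening $\gamma$ does not violate the strict inequalities $\mu,\nu\prec\gamma$ (or $\gamma\prec\mu,\nu$), since removing letters from $\gamma$ can in principle flip a co-lexicographic comparison. Handling this will require choosing the cut points so that the deleted factor is itself co-lexicographically compatible with the retained part, using the monotonicity facts (5.1)--(5.3) of Remark \ref{co-lex-pro} to certify that each comparison survives the surgery.
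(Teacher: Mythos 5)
Your \emph{if} direction matches the paper's argument: pump $\gamma$ to get the interleaved sequences $\mu\gamma^i,\nu\gamma^i$, use minimality of $\mathcal A$ to get $\equiv_{\mathcal L}$-inequivalence, and invoke Lemma \ref{infinite_nonwh}; the only thing to make explicit is that the interleaving $\nu\gamma^i \succ \mu\gamma^{i+1}$ in the case $\gamma\prec\mu,\nu$ really needs $|\gamma|>|\mu|$ (so that $\gamma$ is not a suffix of $\mu$), which is exactly why condition (4) includes the strict inequality. Your \emph{only if} direction also starts the same way as the paper (monotone non-stabilising sequence, common suffix, pigeonhole on pairs in $Q\times Q$ to extract a common cycle label $\gamma'$ of length at most $|\mathcal A|^2$). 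However, two genuine gaps remain.

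First, your plan to secure condition (3) by ``choosing the witnessing strings to be co-lexicographically extreme on one side'' does not work with only two witnesses: $\gamma$ is a fixed string and may fall co-lexicographically \emph{between} $\mu$ and $\nu$, in which case neither $\mu,\nu\prec\gamma$ nor $\gamma\prec\mu,\nu$ holds. The paper's fix is to extract \emph{three} witnesses in a sandwich $\alpha\prec\beta\prec\alpha'$ with $\alpha,\alpha'$ ending at $u$ and $\beta$ at $v$; then whichever side of $\gamma$ the middle word $\beta$ lies on, two of the three witnesses reaching distinct states lie on that same side. Second, your route to the length bound is the reverse of the workable one and runs into exactly the obstacle you flag: you propose to shrink $\gamma$ by excising loops of a product automaton and then worry that deleting a factor of $\gamma$ can flip the comparisons $\mu,\nu\prec\gamma$. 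The paper never shrinks $\gamma$. Instead it (i) keeps the cycle label $\gamma'$ short from the outset ($|\gamma'|\le|\mathcal A|^2$, from the pair pigeonhole), (ii) shrinks the \emph{witnesses} $\alpha,\beta,\alpha'$ by pigeonhole on triples (respectively pairs) of simultaneously visited states inside their \emph{common suffixes} --- deleting a factor strictly to the right of (or strictly between) the first positions of disagreement cannot change any of the comparisons, which is what makes the surgery safe --- arriving at $|\alpha|,|\beta|,|\alpha'|\le 2+|\mathcal A|+|\mathcal A|^2+|\mathcal A|^3$; and then (iii) sets $\gamma=(\gamma')^h$ for the least $h$ making $|\gamma|$ exceed these lengths, which yields $|\mu|,|\nu|<|\gamma|\le 2+|\mathcal A|+2|\mathcal A|^2+|\mathcal A|^3$ for free. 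Without replacing your ``shrink $\gamma$'' step by something like this, the bound (4) and the strict length inequality are not established.
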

 
\begin{proof} We first prove that the four conditions above are sufficient to prove $ \mathcal L $ is not Wheeler. If $\mu, \nu$, and $ \gamma$ are as above,  then $\mu\neq \nu$ since they end in distinct states $u,v$ and $ \mathcal A $ is deterministic. 

Suppose now, without loss of generality, that $\mu \prec \nu$ and:
\begin{enumerate}
\item[a)] \noindent  if  $\mu\prec  \nu \prec \gamma$,    let  $ \eta_i=\mu\gamma^i, ~~ \beta_i =\nu \gamma^i$, while
\item[b)] if $\gamma \prec \mu\prec \nu $,   let  $\eta_i=\nu \gamma^i, ~~ \beta_i =\mu\gamma^i$. 
\end{enumerate}

Note that, in both cases,   all $\eta_i$'s and $\beta_i$'s belong to $\pf L$ and $\eta_i\not \equiv_{\mathcal L}\beta_i$, because  $\eta_i, \beta_i$ end  in different nodes  $u,v$ of the minimum automaton.  Moreover,  for any $ i $,  $ \eta_{i} \prec \beta_{i} $, in  case a) while $ \eta_{i} \succ \beta_{i} $ in case b).  Finally, it can easily be checked that $\beta_i\prec \eta_{i+1}$ holds in case a), while
 $\beta_i\succ\eta_{i+1}$ holds in case b) since $\mu\succ \gamma $ and $\gamma$ is not a suffix of $\mu$, being $|\gamma|>|\mu|$. 
 
 Hence we have: 
\begin{enumerate}
\item[a)] $ \eta_1\prec \beta_1\prec \ldots\prec \eta_i\prec \beta_i \prec \ldots  $
\item[b)]  $ \eta_1\succ\beta_1\succ \ldots\succ \eta_i\succ \beta_i \succ \ldots $
\end{enumerate}
In  both cases we have a  monotone sequence in $\pf L$ which is not eventually constant modulo $\equiv_{\mathcal L}$,  so that  $\mathcal  L$ is not Wheeler by Lemma \ref{infinite_nonwh}. 
\medskip

We now prove the converse of our main statement: if $\mathcal L$ is not Wheeler we can find $ \mu, \nu$ and $   \gamma$ satisfying conditions (1)-(4) above.

\begin{claim}
\label{firstclaim}
If $\mathcal L$ is not Wheeler, 
there exist  words $\alpha, \beta, \alpha', \gamma'\in \pff L$ such that:
\begin{itemize}
\item [-]$\alpha\prec \beta\prec \alpha'$;
\item  [-] $\alpha,\alpha'$ end in a state $u$ and $\beta$ ends in $v$ with $u\neq v$;
\item [-]   $|\gamma'|\leq |\mathcal A|^2$ and $\gamma'$ labels two cycles starting from $u$ and $v$, respectively;
\item[-] $|\alpha|, |\beta|, |\alpha'|\leq 2+ |\mathcal A|+|\mathcal A|^2+|\mathcal A|^3$.
\end{itemize}
\end{claim}

To prove the above claim we apply   Lemma \ref{infinite_nonwh}. Consider  a  monotone  sequence   $(\alpha_i)_{i\geq 1}$  which is not eventually constant modulo $\equiv_{\mathcal L}$. Assume that $\alpha_i\prec \alpha_{i+1}$, for all $i$ (the case $\alpha_i\succ \alpha_{i+1}$, for all $i$,  is analogous).   
By possibly  erasing  a finite number of initial elements in the sequence,  we can   assume that all $\alpha_i$'s  end with the same $|\mathcal A|^2+1$ letters (this is possible by the finiteness of $\Sigma$ and by the fact that the monotone sequence $(\alpha_i)_{i\geq 1}$ is not eventually constant). Let    $\theta \in \Sigma^*$ be such that   $|\theta|=|\mathcal A|^2+1$ and $\alpha_i=\alpha'_i\theta$,  with $\alpha'_i \prec \alpha'_{i+1}\prec \ldots$.   Since $\alpha'_i\equiv_{\mathcal L}\alpha'_j$ implies  $\alpha_i\equiv_{\mathcal L}\alpha_j$,  the monotone sequence $(\alpha'_i)_{i\geq 1}$  is also not eventually constant modulo $\equiv_{\mathcal L}$. 
Since the set of $ \mathcal A $'s states is finite,  and  $(\alpha'_i)_{i\geq 1}$  is  not eventually constant modulo $\equiv_{\mathcal L}$,  by possibly considering a subsequence of  $(\alpha'_i)_{i\geq 1}$ we can further suppose that  all    elements   of  odd index end  in the same state $x'$,  all   elements  of  even index end the same state $y'$,  and  $x'\neq y'$.

Let $m = |\mathcal A|^2$, and consider  the  last $|\mathcal A|^2+1$ states $x'=x_0, x_{1}, \ldots, x_{m}$ of the $ \alpha_{1} $-labelled path  from the initial state  $s$. Note that all $\alpha_i$'s with odd $i$ share this path. 
Similarly, consider   the last $|\mathcal A|^2+1$  states $y'=y_{0}, y_{1}, \ldots, y_{m}$ of the $ \alpha_{2} $-labelled path  from the initial state  $s$. 
Again, all  $\alpha_i$'s  with  even $i$ share this path. Moreover, both  paths are labelled by the same word $\theta$ and  $x_k\neq y_k$, for all $k=0, \ldots m$ (otherwise  the sequence $(\alpha_i)_{i\geq 1}$ would be eventually constant, which is not).

Since $|\theta|=m+1=|\mathcal A|^2+1$,    we can find $i_0,n_0$ with     $0\leq i_0<n_0\leq m$ such that 
$(x_{i_0}, y_{i_0})= (x_{n_0}, y_{n_0})$,  that is,   the two subpaths    \[ x_{i_{0}}, x_{i_{0}+1}, \ldots , x_{n_0}=x_{i_0}, \]
\[y_{i_{0}}, y_{i_{0+1}}, \ldots ,y_{n_0}=y_{i_0},\]   are cycles of the same length  labelled by the same word, say $\gamma'$.     Note that $|\gamma'|\leq |\mathcal A|^2$.

Since $ \gamma' $ is a factor of $ \theta $,  there exist $\eta, \delta \in \Sigma^*$ such that  $\theta= \eta \gamma' \delta$. All $ \alpha'_{i}\eta$'s with $i$ odd  end in  $x_{i_0}$ and all  $ \alpha'_{i}\eta$'s  with $i$ even end  in   $y_{i_0}$, with $x_{i_{0}}\neq y_{i_0}$. Moreover, $\gamma'$ labels  two cycles starting in $x_{i_0}$ and  $y_{i_0}$, respectively. 

\medskip

Let  $\alpha= \alpha'_{1}\eta$, $\beta= \alpha'_{2}\eta$, $\alpha'= \alpha'_{3}\eta$, and note that $\alpha, \beta, \alpha'$ satisfies the first three   properties of our Claim,   with
$u=x_{i_0}$ and $v=y_{i_0}$. 

\medskip
 
 We now  prove the last point of Claim \ref{firstclaim}, that is,    we can also limit, effectively,  the lengths of $\alpha, \beta$, and $ \alpha'$. 
 Given a word $\varphi \in \Sigma^*$ and $k\geq 1$  we denote by $\varphi(k)$ the $k$-th letter from the right, whenever $|\varphi|\leq k$, or 
 the empty word   $\epsilon$, otherwise  (e.g. $\varphi(1)$ is the last letter of $\varphi$). 

Given $\alpha, \beta, \alpha',\gamma,u,v$  as defined above, let  $d_{\alpha',\beta}$ be the first position from the right in which $\alpha'$ and $\beta$ differ.  Since $\alpha'\succ \beta$, we have $|\alpha'|\geq d_{\alpha',\beta}$.
Similarly,  let $d_{\beta,\alpha}$ be the first position from the right in which $\beta$ and $\alpha$ differ. Again, since $\beta \succ \alpha$, we have $|\beta |\geq  d_{\beta, \alpha}$.
Proceeding by cases, consider:  
\begin{enumerate}
\item [i)] $d_{\alpha',\beta}\leq d_{\beta,\alpha}$ (so that the position in $ \beta $ from the right are: $ \ldots \ldots   d_{\beta,\alpha}  \ldots \ldots   d_{\alpha',\beta}  \ldots 2,  1 $);
\item [ii)] $ d_{\beta,\alpha}<d_{\alpha',\beta}$ (so that the position in $ \beta $ from the right are: $ \ldots    d_{\alpha',\beta}  \ldots   d_{\beta,\alpha} \ldots \ldots \ldots 2,  1 $).
  \end{enumerate}
 
In case i),  since     $|\alpha'|\geq  d_{\alpha',\beta}$,  $|\beta| \geq  d_{\beta,\alpha}\geq d_{\alpha',\beta}$,    the words  $\alpha', \beta$, and $ \alpha$ end with  the same word $\xi$ with $|\xi|= d_{\alpha',\beta}-1$. Since  
$\beta\prec \alpha'$ it must be that 
$ \beta(d_{\alpha',\beta})\prec  \alpha'(d_{\alpha',\beta})$. That is, for some $ \phi, \phi', \psi \in \Sigma^{*} $:
\begin{equation}\label{eq}
\alpha=\phi \alpha(d_{\alpha',\beta}) \xi\prec \beta=\psi  \beta(d_{\alpha',\beta}) \xi \prec  \alpha'=\phi' \alpha'(d_{\alpha',\beta})\xi ,  
\end{equation}
with $ \phi \alpha(d_{\alpha',\beta}) =\epsilon $, whenever $ |\alpha|< d_{\alpha',\beta} .$ 

 See Figure \ref{figura-i}.

\begin{figure}[H]
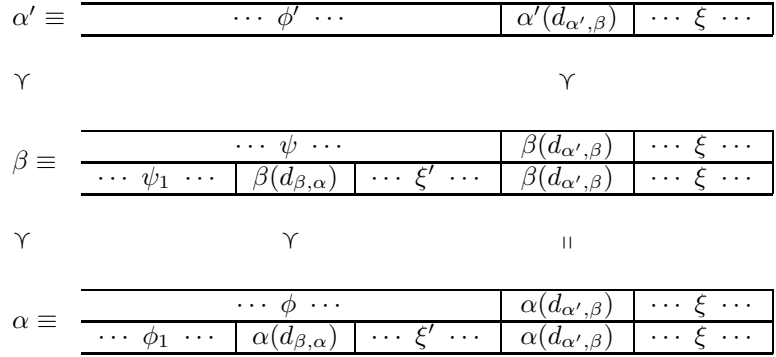

\centering
\begin{tabular}{lc|c|c|c|c|}
\cline{2-6}
$ \alpha' \equiv $ & \multicolumn{3}{c|}{$ \cdots \  \phi' \ \cdots $}  & $ \alpha'(d_{\alpha',\beta}) $ & $ \cdots \  \xi \  \cdots $ \\ \cline{2-6}
\multicolumn{6}{c}{}\\ 
$ \curlyvee $ &\multicolumn{3}{c}{ }& \multicolumn{1}{c}{$ \curlyvee $} & \multicolumn{1}{c}{}\\ 
\multicolumn{6}{c}{}\\ \cline{2-6}
\multirow{2}{*}{$ \beta \equiv $} & \multicolumn{3}{c|}{$\cdots \  \psi \ \cdots  $} & $ \beta(d_{\alpha',\beta}) $ & $ \cdots \  \xi \  \cdots $ \\ \cline{2-6}
 & $ \cdots \  \psi_{1}\ \cdots $ & $ \beta(d_{\beta,\alpha}) $ & $ \cdots \ \xi' \  \cdots $ & $ \beta(d_{\alpha',\beta}) $ & $ \cdots \  \xi \  \cdots $ \\ \cline{2-6}
\multicolumn{6}{c}{}\\ 
$ \curlyvee $ &\multicolumn{3}{c}{$ \curlyvee $ }& \multicolumn{1}{c}{$ \shortparallel $} & \multicolumn{1}{c}{}\\ 
\multicolumn{6}{c}{}\\ \cline{2-6}
\multirow{2}{*}{$ \alpha \equiv $} & \multicolumn{3}{c|}{$\cdots \  \phi \ \cdots  $} & $ \alpha(d_{\alpha',\beta}) $ & $ \cdots \  \xi \  \cdots $ \\ \cline{2-6}
& $ \cdots \  \phi_{1}\ \cdots $ & $ \alpha(d_{\beta,\alpha}) $ & $ \cdots \ \xi' \  \cdots $ & $ \alpha(d_{\alpha',\beta}) $ & $ \cdots \  \xi \  \cdots $ \\ \cline{2-6}
\end{tabular}
\caption{Case i) with $ d_{\alpha',\beta} < d_{\beta,\alpha} $. }\label{figura-i}
\end{figure}
 
We can assume, without loss of generality,  that   $|\xi|  \leq |\mathcal A|^3$. In fact, 
 if $|\xi|>|\mathcal A|^3$  then,  considering the triples of states visited simultaneously while reading the last  $d_{\alpha',\beta}$'s letters of  $\alpha, \beta, \alpha'$, respectively,  we should meet  a repetition. If this were the case,
 we could erase a common factor from $\xi$, obtaining a shorter word $\xi_1$ such that 
 $\phi \alpha(d_{\alpha',\beta})\xi_1\prec \psi \beta(d_{\alpha',\beta}) \xi_1\prec \phi'\alpha'(d_{\alpha',\beta}) \xi_1$, with the three paths  still ending in $u,v$, and $u$, respectively.   Hence, we may suppose  $|\xi|\leq |\mathcal A|^3$ in  (\ref{eq}) above.  
  
  Consider now the case  $d_{\alpha',\beta}=d_{\beta,\alpha}$. Let  $s_1,s_2$, and $s_3$  be the states reached from $ s $ by reading $ \phi, \psi$, and $ \phi'$, respectively. Since   $d_{\beta,\alpha}=d_{\alpha',\beta}$ is a position on the right of $ \phi, \psi$ and $\phi' $ in $ \alpha, \beta $, and $ \alpha' $, respectively,   we may suppose w.l.o.g.   that $ \phi, \psi_1, \phi_1$   label simple paths leading from  $s$ to  $s_1,s_2,s_3$, so that $| \phi,| |\psi_1|, |\phi_1|\leq |\mathcal A|$. Hence in this case we have $|\alpha|, |\beta|, |\alpha'|\leq  1+|\mathcal A| +|\mathcal A|^3$. 
 
Next,  consider the case  $d_{\alpha',\beta}< d_{\beta,\alpha}$. In this case,   $\phi, \psi$ end  with  the same word $\xi'$   with $|\xi'|= d_{\beta,\alpha}-d_{\alpha',\beta}-1$, and  
\[  \phi=\phi_1 \alpha(d_{\beta,\alpha})\xi' \prec  \psi=\psi_1\beta(d_{\beta,\alpha})\xi' \] 
(see the picture above).
We may assume, without loss of generality, that   $|\xi'| \leq |\mathcal A|^2 $. In fact, 
 if $|\xi|>|\mathcal A|^2$  then, reasoning as above but considering pairs of states instead of triples,  
  we could  erase a common factor from $\xi'$, obtaining a shorter word $\xi'_1$ such that 
\begin{equation}
\label{eq2}
\phi_1  \alpha(d_{\beta,\alpha})\xi'_1 \alpha(d_{\alpha',\beta})\xi\prec \psi_1 \beta(d_{\beta,\alpha})\xi'_1 \beta(d_{\alpha',\beta}) \xi \prec  \alpha'=\phi' \alpha'(d_{\alpha',\beta})\xi,\end{equation}
where  the words above   still end  in $u,v$, and $u$, respectively. By repeating the same argument, we see that we may suppose  $|\xi_1|\leq |\mathcal A|^2$  and 
$|\xi|\leq |\mathcal A|^3$ in (\ref{eq2}).

Consider now the states $s_1,s_2$, and $s_3$ reached by reading $ \phi_{1}, \psi_{1} $, and $ \phi' $ from $ s $, respectively. Since  $ d_{\beta,\alpha} $ is a position on the right of $ \phi_{1} $ and $ \psi_{1} $, and $ d_{\alpha',\beta} $ is a position on the right of $ \psi_{1} $ and $ \phi' $, we may assume, without loss of generality, that  $ \phi_1, \psi_1$, and $ \phi'$   label simple paths leading from  $s$ to  $s_1,s_2$, and $s_3$, respectively. Hence $| \phi_{1}|,  |\psi_1|, |\phi'|\leq |\mathcal A|$.
 
Summarising, we may suppose $|\alpha|, |\beta|, |\alpha'|\leq 2+ |\mathcal A|+|\mathcal A|^2+|\mathcal A|^3$, ending the proof of case i).

 \medskip

Case ii), in which $d_{\alpha',\beta}> d_{\beta,\alpha}$, can be treated analogously. The skeptical reader can consult the following graphic proof (see Figure \ref{figura-ii}) and    this ends the proof of Claim \ref{firstclaim}. See Figure \ref{figura-ii}

\begin{figure}[H]
\centering
\begin{tabular}{lc|c|c|c|c|}
\cline{2-6}
\multirow{2}{*}{$ \alpha' \equiv $} & \multicolumn{3}{c|}{$\cdots \  \phi' \ \cdots  $} & $ \alpha'(d_{\beta,\alpha}) $ & $ \cdots \  \xi \  \cdots $ \\ \cline{2-6}
& $ \cdots \  \phi_{1}'\ \cdots $ & $  \alpha'(d_{\alpha',\beta}) $ & $ \cdots \ \xi' \  \cdots $ & $ \alpha'(d_{\beta,\alpha}) $ & $ \cdots \  \xi \  \cdots $ \\ \cline{2-6}
\multicolumn{6}{c}{}\\ 
$ \curlyvee $ &\multicolumn{3}{c}{$ \curlyvee $ }& \multicolumn{1}{c}{$ \shortparallel $} & \multicolumn{1}{c}{}\\  
\multicolumn{6}{c}{}\\ \cline{2-6}
\multirow{2}{*}{$ \beta \equiv $} & \multicolumn{3}{c|}{$\cdots \  \psi \ \cdots  $} & $ \beta(d_{\beta,\alpha}) $ & $ \cdots \  \xi \  \cdots $ \\ \cline{2-6}
 & $ \cdots \  \psi_{1}\ \cdots $ & $ \beta(d_{\alpha',\beta}) $ & $ \cdots \ \xi' \  \cdots $ & $ \beta(d_{\beta,\alpha}) $ & $ \cdots \  \xi \  \cdots $ \\ \cline{2-6}
\multicolumn{6}{c}{}\\ 
$ \curlyvee $ &\multicolumn{3}{c}{ }& \multicolumn{1}{c}{$ \curlyvee $} & \multicolumn{1}{c}{}\\
\multicolumn{6}{c}{}\\ \cline{2-6}
$ \alpha \equiv $ & \multicolumn{3}{c|}{$ \cdots \  \phi \ \cdots $}  & $ \alpha(d_{\beta,\alpha}) $ & $ \cdots \  \xi \  \cdots $ \\ \cline{2-6}
\multicolumn{6}{c}{}\\ 
\end{tabular}
\caption{Case ii). }\label{figura-ii}
\end{figure}

Turning now to our main claim, if  $\alpha, \beta, \alpha'$, and $ \gamma' $ are as in Claim \ref{firstclaim},   let $h$ be the minimum number  such that $|\alpha|, |\beta|, |\alpha'|< h|\gamma'|$.  If $\gamma=(\gamma')^h$, then $|\gamma |=h|\gamma'|$ and   $|\alpha|, |\beta|, |\alpha'|<  |\gamma|$.
 
 
  We consider two cases:
 \begin{enumerate}
 \item $\gamma\prec \beta$; in this case   we define 
 $\mu=\beta$ and $ \nu=\alpha'$,  so  that $\gamma \prec \mu\prec  \nu $;
 \item $\beta\prec \gamma$;   in this case   we define  $\mu=\alpha$ and $ \nu=\beta $, so that $  \mu\prec  \nu \prec \gamma$.
 \end{enumerate}
 
 From  $|\alpha|, |\beta|, |\alpha'|<  2+ |\mathcal A|+|\mathcal A|^2+|\mathcal A|^3$ it follows that
 \[|\gamma|\leq max\{|\alpha|, |\beta|, |\alpha'|\}+ |\gamma'|\leq
 (2+ |\mathcal A|+|\mathcal A|^2+|\mathcal A|^3) +|\mathcal A|^2=
 2+ |\mathcal A|+2|\mathcal A|^2+|\mathcal A|^3\]
 and hence $\mu,\nu,\gamma$ satisfies the required properties. 

\end{proof}

We now use the preceding theorem to prove the decidability of being a Wheeler language.

 \begin{theorem}\label{thm:L Wheeler polytime}
 We can decide whether the regular language $\mathcal L$ accepted by a given edge-labeled DFA $\mathcal A$ is Wheeler in polynomial time.
 \end{theorem}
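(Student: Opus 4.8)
The plan is to turn the structural characterisation of Theorem~\ref{decidability} into a reachability/emptiness test on a polynomial-size product graph. First I would compute, in polynomial time, the minimal edge-labeled DFA of $\mathcal{L}$ by classical Hopcroft minimisation, since Theorem~\ref{decidability} is stated for the \emph{minimum} DFA. Minimality is exactly what guarantees that ``reaching two distinct states'' coincides with ``being $\equiv_{\mathcal{L}}$-inequivalent,'' which is what the witness of the theorem (and of Claim~\ref{firstclaim}) really needs. After this step I may assume $\mathcal{A}$ is the minimal DFA with initial state $s$ having no in-edges.

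The naive idea of enumerating all witness triples $(\mu,\nu,\gamma)$ up to the length bound of condition~(4) fails, because there are exponentially many strings of length $O(|\mathcal{A}|^{3})$. Instead I would detect the \emph{existence} of a witness directly. By the sufficiency direction of Theorem~\ref{decidability} (which never used the upper bound, only $|\mu|,|\nu|<|\gamma|$ together with the co-lex conditions) and by Claim~\ref{firstclaim}, $\mathcal{L}$ is non-Wheeler if and only if there are distinct states $u\neq v$ admitting (a) a \emph{common cycle label}, i.e.\ a word $\gamma'$ labelling a cycle at $u$ and a cycle at $v$, and (b) a \emph{co-lexicographic straddle}, i.e.\ words $\alpha,\alpha'$ reaching $u$ and a word $\beta$ reaching $v$ with $\alpha\prec\beta\prec\alpha'$. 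The upper length bound only served to make the property finitely checkable, and the requirement $|\mu|,|\nu|<|\gamma|$ is obtained for free by replacing $\gamma'$ with a high power $(\gamma')^{h}$, which preserves all the relevant co-lex relations by Remark~\ref{co-lex-pro}.

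Both ingredients are polynomially detectable by product constructions. For (a) I would take the synchronised product $\mathcal{A}\times\mathcal{A}$ (with an edge $(p,q)\xrightarrow{c}(p',q')$ whenever $p\xrightarrow{c}p'$ and $q\xrightarrow{c}q'$), compute its strongly connected components, and mark every pair $(u,v)$ lying on a non-trivial cycle; these are exactly the pairs sharing a common cycle label. For (b) I would exploit that the co-lexicographic order on strings is the lexicographic order on their reversals, so I would work in the reverse automaton $\mathcal{A}^{R}$ (in which $s$ becomes a sink) and scan candidate reversed strings left-to-right. Building a product of three copies of $\mathcal{A}^{R}$ augmented with a constant-size \emph{comparison gadget}—tracking, for each of the two relevant pairs of tracks, whether the scan has already met a difference and in which direction, together with the ``shorter-is-smaller'' prefix tie-break—the straddle condition becomes the reachability, from the start configuration $(u,v,u)$ in the ``equal/equal'' status, of a configuration in which all three tracks have reached $s$ with flags recording $\alpha\prec\beta$ and $\beta\prec\alpha'$. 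This product has $O(|\mathcal{A}|^{3})$ states, so the reachability test is polynomial. Finally I would declare $\mathcal{L}$ non-Wheeler iff some ordered pair $(u,v)$ passes both tests, and Wheeler otherwise; correctness follows from Theorem~\ref{decidability} and Claim~\ref{firstclaim}.

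The main obstacle is implementing step~(b) faithfully: the co-lex comparison must be encoded into a finite gadget that reads the strings \emph{from the right}, commits to its decision at the first difference, and correctly applies the length/prefix tie-break when the tracks stop reading at different times. Handling the nondeterminism of $\mathcal{A}^{R}$ (which is precisely the existential quantifier over strings), the freezing of a track once it reaches $s$, and the coupling of the two simultaneous comparisons against the shared middle track $\beta$ is where all the care lies. Once the gadget is set up correctly, the remaining work—SCC computation for~(a) and graph reachability for~(b)—is entirely standard and manifestly polynomial.
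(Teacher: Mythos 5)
Your proposal is correct and reaches polynomiality by a genuinely different algorithmic route than the paper, although both rest on Theorem~\ref{decidability} applied to the minimal DFA. The paper keeps the explicit length bound of condition~(4) and runs a dynamic program over all lengths $\ell \le N = 2+|\mathcal A|+2|\mathcal A|^2+|\mathcal A|^3$, computing for each state the co-lexicographically \emph{smallest} source-path label of each length and, for each quadruple of states, the co-lexicographically \emph{largest} common label of two equal-length paths; correctness then hinges on the observation that if any witness triple $(\mu,\nu,\gamma)$ exists, the extremal triple $(\mu',\nu',\gamma')$ is also a witness. You instead discard the length bound entirely, observing (correctly) that the sufficiency direction of Theorem~\ref{decidability} uses only $|\mu|,|\nu|<|\gamma|$ together with conditions (1)--(3), which can always be arranged by powering the common cycle label, and that the converse is exactly Claim~\ref{firstclaim} (or, equivalently, the $\eta_i,\beta_i$ sequences built in the sufficiency proof, which already exhibit a straddle $\eta_1\prec\beta_1\prec\eta_2$). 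This reduces non-Wheelerness to the conjunction, over some pair $u\neq v$, of membership of $(u,v)$ in a non-trivial cycle of $\mathcal A\times\mathcal A$ and a straddle detectable by reachability in a three-track product of $\mathcal A^{R}$ with a constant-size comparison gadget --- both polynomial, and in fact with smaller state spaces ($O(|Q|^2)$ and $O(|Q|^3)$) than the paper's tables. Two points deserve explicit care in a full write-up: first, the straddle genuinely needs \emph{three} strings, because whether $(\alpha,\beta)$ or $(\beta,\alpha')$ plays the role of $(\mu,\nu)$ depends on whether $(\gamma')^{h}$ falls below or above $\beta$ --- precisely the case analysis closing the paper's proof of Theorem~\ref{decidability}, so your phrase ``preserves all the relevant co-lex relations'' should be replaced by that dichotomy; second, the gadget must apply the ``proper suffix is co-lex smaller'' tie-break the moment one track freezes at the sink $s$ of $\mathcal A^{R}$ while its partner is still in the equal-so-far state. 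With those details pinned down, your argument is sound; what the paper's DP buys in exchange for its heavier bookkeeping is an explicit construction of the witness strings themselves.
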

  \begin{proof}  Since the construction of the minimum automaton recognizing a language   can  be done in polynomial time starting from a DFA recognizing it,  we may suppose that $\mathcal A$ is minimum. 
 We exhibit a dynamic programming algorithm that finds $\mu$, $\nu$, and $\gamma$ satisfying Theorem \ref{decidability} if and only if such strings exist. Let $N =  2+|\mathcal A|+2|\mathcal A|^2+|\mathcal A|^3$ be the (polynomial) upper bound to the length of those strings. 
 
 We consider only the case $\mu, \nu \prec \gamma$, as the other can be solved symmetrically. 
 Let $\pi_{u,\ell}$, with $u\in Q - \{s\}$  and $2 \leq \ell \leq N$, denote the predecessor of $u$ such that the co-lexicographically smallest path of length (number of nodes) $\ell$ connecting the source $s$ to $u$ passes through $\pi_{u,\ell}$ as follows: $s \rightsquigarrow \pi_{u,\ell} \rightarrow u$. 
 The node $\pi_{u,\ell}$ coincides with $s$ if $\ell=2$ and $u$ is a successor of $s$; in this case, the path is simply $s \rightarrow u$.
 If there is no path of length $\ell$ connecting $s$ with $u$, then $\pi_{u,\ell} = \bot$.
 Note that the set $\{\pi_{u,\ell}\ :\ 2\leq \ell \leq N,\ u\in Q - \{s\}\}$ stores in just polynomial space all co-lexicographically smallest paths of any fixed length $\ell \leq N$ from the source to any node $u$. We denote such path with $\alpha_\ell(u)$, and the corresponding sequence of labels with $\lambda(\alpha_\ell(u))$ (that is, the sequence of $\ell-1$ symbols labeling the path's edges). 
 Note that $\alpha_\ell(u)$ can be obtained recursively (in $O(\ell)$ steps) as $\alpha_\ell(u) = \alpha_{\ell-1}(\pi_{u,\ell}) \rightarrow u$, where $\alpha_{1}(s) = s$ by convention.
 
 Clearly, each $\pi_{u,\ell}$ can be computed in polynomial time using dynamic programming. First, we set $\pi_{u,2} = s$ for all successors $u$ of $s$. Then, for $\ell = 3, \dots, N$:
 
$$
\pi_{u,\ell} = \underset{v\in Pred(u)}{\mathrm{argmin}}
  \lambda( \alpha_{\ell-1}(v) ) \cdot \lambda(v,u)    
$$

 where $Pred(u)$ is the set of all predecessors of $u$ and the $\mathrm{argmin}$ operator compares strings in co-lexicographic order. In the equation above, if none of the $\alpha_{\ell-1}(v)$ are well-defined (because there is no path of length $\ell-1$ from $s$ to $v$), then $\pi_{u,\ell} = \bot$.

 The second (similar) ingredient is to compute pairs $\psi_{u,u',v,v',\ell} = \langle u'', v''\rangle$, with $u,u',v,v'\in Q-\{s\}$
 and $2\leq \ell \leq N$, such that:
 
 \begin{enumerate}
     \item $u''$ is a predecessor of $u'$ and $v''$ is a predecessor of $v'$,
     \item $\lambda(u'',u') = \lambda(v'',v') = c$, for some $c\in \Sigma$,
     \item there exist two paths of length (number of nodes) $\ell-1$ from $u$ to $u''$ and from $v$ to $v''$ labelled with the same string $\beta$ (if $\ell=2$, then $\beta=\epsilon$), and
     \item $\langle u'', v''\rangle$ is chosen so that $\beta\cdot c$ is co-lexicographically maximum.
 \end{enumerate}
 
 As before, if such two paths and such a $\beta$ do not exist, then $\psi_{u,u',v,v',\ell} = \bot$. Moreover, if $(u,u')$ and $(v,v')$ are edges 
 with $\lambda(u,u') = \lambda(v,v')$,
 then $\psi_{u,u',v,v',2} = \langle u, v\rangle$ and the two associated paths are $u \rightarrow u'$ and $v \rightarrow v'$. 
 
 Analogously to the (simpler) case seen before, these pairs store in polynomial space, for each  $u,u',v,v'$ and length $\ell$, the co-lexicographically largest string of length $\ell-1$ labeling two paths $u \rightsquigarrow u'$ and $v \rightsquigarrow v'$, as well as the two paths themselves. We denote these two paths as $\beta_\ell(\underline u,\underline u',v,v')$ and $\beta_\ell(u, u',\underline v,\underline v')$, respectively. Note that, by our definition, $\lambda(\beta_\ell(\underline u,\underline u',v,v')) = \lambda(\beta_\ell(u, u',\underline v,\underline v'))$. Again, these paths can be obtained in a recursive fashion using the pairs.
 
 Pairs $\psi_{u,u',v,v',\ell} = \langle u'', v''\rangle$ can be computed in polynomial time using dynamic programming as follows.
 We set all $\psi_{u,u',v,v',2} = \langle u,v\rangle$ whenever  $(u,u')$ and $(v,v')$ are edges with $\lambda(u,u') = \lambda(v,v')$ ($\bot$ otherwise) and, for $\ell = 3, \dots, N$:
 
$$
    \psi_{u,u',v,v',\ell} = \underset{\langle u'',v''\rangle \in Pred(u')\times Pred(v')\ :\ \lambda(u'',u')=\lambda(v'',v') }{\mathrm{argmax}} \lambda(\beta_{\ell-1}(\underline u,\underline u'',v,v'')) \cdot \lambda(u'',u')
$$

 where the $\mathrm{argmax}$ operator compares strings in co-lexicographic order.

To conclude, in order to check the conditions of Theorem \ref{decidability}, we proceed as follows. First, we guess the nodes $u$ and $v$ and the lengths $|\mu|, |\nu|<|\gamma|\leq 2(2+|\mathcal A|+|\mathcal A|^2+|\mathcal A|^3)$ (there are only polynomially-many choices to try).
Then:
\begin{enumerate}
    \item We compute the co-lexicographically smallest $\mu' = \lambda(\alpha_{|\mu|}(u))$ labeling a path of length $|\mu|$ from $s$ to $u$,
    \item we compute the co-lexicographically smallest $\nu' = \lambda(\alpha_{|\nu|}(v))$ labeling a path of length $|\nu|$ from $s$ to $v$,
    \item we compute the co-lexicographically largest $\gamma' = \lambda(\beta_{|\gamma|}(\underline u,\underline u,v,v))$  labeling two paths of length $|\gamma|$ from $u$ to $u$ and from $v$ to $v$ (that is, two cycles), and
    \item we check if $\mu',\nu' \prec \gamma'$. We declare $\mathcal L(\mathcal A)$ non Wheeler if and only if this test succeeds for at least one choice of $u,v,|\mu|, |\nu|, |\gamma|$. 
\end{enumerate}

Clearly, the existence of $\mu, \nu$, and $\gamma$ implies that $\mu', \nu'$, and $\gamma'$ exist and that they satisfy the conditions of Theorem \ref{decidability}: we have $\mu' \preceq \mu$, $\nu' \preceq \nu$, $\gamma' \succeq \gamma$, and $\mu,\nu \prec \gamma$, therefore $\mu',\nu' \prec \gamma'$ holds. Conversely, the theorem states that if we find such $\mu', \nu'$, and $\gamma'$ then the original language is not Wheeler. 

\end{proof} 
 
In~\cite{alanko2020regular} it is presented a procedure for obtaining the minimum WDFA equivalent to a given acyclic DFA. We now show that, while a more general procedure for converting any DFA recognizing a Wheeler language into the minimum equivalent WDFA would solve the problem of Theorem \ref{thm:L Wheeler polytime}, it would take exponential time in the worst case (as opposed to Theorem \ref{thm:L Wheeler polytime}) just to produce the output WDFA (or to decide that such a WDFA does not exist): there exists a family of regular languages where the size of the smallest WDFA is exponential in the size of the smallest equivalent DFA. Consider the family of languages $L_1,L_2,\ldots$, where $L_m = \{ c \alpha e \; | \; \alpha \in \{a,b\}^m \} \cup \{ d \alpha f \; | \; \alpha \in \{a,b\}^m \}$.
Figure \ref{fig:dfa_wdfa_worst_case2} shows a DFA and the smallest WDFA for the language $L_3$. In general, we can build a DFA for $L_m$ by generalizing the construction in the figure: the source node has outgoing edges labeled with $c$ and $d$, followed by simple linear size "universal gadgets" capable of generating all binary strings of length $m$, with one gadget followed by an $e$ and the other by an $f$. The two sink states are the only accepting states.

The smallest WDFA for $L_m$ is an unraveling of the described DFA, such that all paths up to (but not including) the sinks end up in distinct nodes, i.e. the universal gadgets are replaced by full binary trees (see Figure \ref{fig:dfa_wdfa_worst_case2}). It is easy to see that the automaton is Wheeler as the only nodes that have multiple incoming paths are the sinks, and the sinks have unique labels.

By \cite[Thm. 4.2]{alanko2020regular}, to prove that this is the minimum WDFA we need to check that all colexicographically consecutive pairs of nodes with the same incoming label are Myhill-Nerode inequivalent. As labels $c,d,e$ and $f$ occur only once, it is enough to focus on nodes that have label $a$ or $b$. Let $B_1, B_2, B_{2^{m+1}-1}$ be the colexicographically sorted sequence of all possible binary strings with lengths $1 \leq |B_i| \leq m$ from the alphabet $\{a,b\}$. Observe that the nodes with incoming label $a$ and $b$ correspond to path labels of the form $c B_i$ and $dB_i$ for all $1 \leq i \leq 2^{m+1}-1$. The co-lexicographically sorted order of these path labels is:
$$c B_1 < d B_1 < c B_2 < d B_2 < \ldots < c B_{2^{m+1}-1} < d B_{2^{m+1}-1} $$
Here we can see that all consecutive pairs have a different first character: they therefore lead to a different sink in the construction and hence are not Myhill-Nerode equivalent. We therefore conclude that the automaton is the minimum WDFA.
The DFA has $n = 4m + 5$ states and the WDFA has $1 + 2^{m+2} = 1 + 2^{(n-5)/4 + 2}$ states, so we obtain the following result:

\begin{theorem}
	The minimum WDFA equivalent to a DFA with $n$ states has $\Omega(2^{n/4})$ states in the worst case.
\end{theorem}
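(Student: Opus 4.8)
The plan is to convert the explicit family $L_m$ just constructed into an asymptotic worst-case separation by a direct substitution. First I would record the two size quantities already established in the preceding discussion: the minimal DFA recognizing $L_m$ has $n = 4m+5$ states, while the minimal WDFA for the same language has $1 + 2^{m+2}$ states. The first count follows by inspection of the linear-size ``universal gadget'' DFA described above (the two branches on $c$ and $d$, the internal $\{a,b\}$-nodes, the $e,f$ transitions, and the two sinks); the second is the only substantial ingredient and is where the real work lies.

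Next I would justify the WDFA count, which rests on the minimality argument already sketched: by \cite[Thm. 4.2]{alanko2020regular} it suffices to check that any two co-lexicographically consecutive states carrying the same incoming label are Myhill--Nerode inequivalent. Since the relevant path labels sort as $cB_1 \prec dB_1 \prec cB_2 \prec dB_2 \prec \dots$, every consecutive pair differs in its first character, hence is routed to a different sink, and is therefore inequivalent. This confirms that the ``unraveled'' automaton whose gadgets are replaced by full binary trees is genuinely the \emph{minimum} WDFA for $L_m$, rather than merely some WDFA accepting it.

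Finally, I would perform the elementary asymptotic step: solving $n = 4m+5$ gives $m = (n-5)/4$, and substituting into the WDFA size yields
\[
1 + 2^{m+2} = 1 + 2^{(n-5)/4 + 2} = 1 + 2^{3/4}\cdot 2^{n/4} = \Omega\!\left(2^{n/4}\right).
\]
Because $L_m$ is regular (indeed Wheeler) for every $m$, this single family witnesses the claimed exponential worst-case blow-up between the size of a DFA and that of the smallest equivalent WDFA. The main obstacle here is conceptual rather than computational, and it has already been discharged above: one must be certain that the tree-shaped automaton is the minimum WDFA, which is exactly what the Myhill--Nerode inequivalence check guarantees. The exponent manipulation itself is immediate.
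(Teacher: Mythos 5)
Your proposal is correct and follows the paper's own argument exactly: the same family $L_m$, the same Myhill--Nerode inequivalence check (via the cited theorem) to certify minimality of the tree-shaped WDFA, and the same substitution $m=(n-5)/4$ into $1+2^{m+2}$ to obtain the $\Omega(2^{n/4})$ bound. No gaps.
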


\begin{figure}
	
	\centering
	
	\begin{tikzpicture}[->,>=stealth', semithick, auto, scale=0.55]
	\tikzstyle{every state}=[scale=0.4]
	
	
	\node[state, accepting, label=above:{}] (T21)    at (-4,0+2)		    {\Huge $e$};
	
	
	\node[state, label=above:{}] (T18) at (-5,-1+2) 		{\huge $a$};
	\node[state, label=above:{}] (T17) at (-5,1+2)   		{\huge $b$};  
	
	\node[state, label=above:{}] (T16) at (-6,-1+2) 		{\huge $a$};
	\node[state, label=above:{}] (T15) at (-6,1+2)   		{\huge $b$};  
	
	\node[state, label=above:{}] (T14) at (-7,-1+2) 		{\huge $a$};
	\node[state, label=above:{}] (T13) at (-7,1+2)   		{\huge $b$};  
	
	\node[state, label=above:{}] (T12) at (-8,-0+2)   		{\huge $c$};

	
	\node[state, accepting, label=above:{}] (T10)    at (-4,-4+2)	    {\huge $f$};
	
	
	\node[state, label=above:{}] (T7) at (-5,-5+2) 		{\huge $a$};
	\node[state, label=above:{}] (T6) at (-5,-3+2)   		{\huge $b$};  
	
	\node[state, label=above:{}] (T5) at (-6,-5+2) 		{\huge $a$};
	\node[state, label=above:{}] (T4) at (-6,-3+2)   		{\huge $b$};  
	
	\node[state, label=above:{}] (T3) at (-7,-5+2)    		{\huge $a$};
	\node[state, label=above:{}] (T2) at (-7,-3+2)   		{\huge $b$};  
	
	\node[state, label=above:{}] (T1) at (-8,-4+2)   		{\huge $d$};

	\node[state, label=above:{}] (T0) at (-9,-2+2)   		{};  
	
	\draw (T0) edge node {} (T1);
	
	\draw (T1) edge node {} (T2);
	\draw (T1) edge node {} (T3);
	
	\draw (T2) edge node {} (T4);
	\draw (T2) edge node {} (T5);
	\draw (T3) edge node {} (T4);
	\draw (T3) edge node {} (T5); 
	
	\draw (T4) edge node {} (T6);
	\draw (T4) edge node {} (T7);
	\draw (T5) edge node {} (T6);
	\draw (T5) edge node {} (T7);  
	
	\draw (T6) edge node {} (T10);
	\draw (T6) edge node {} (T10);
	\draw (T7) edge node {} (T10);
	\draw (T7) edge node {} (T10);  
	
	

	\draw (T0) edge node {} (T12);
	
	\draw (T12) edge node {} (T13);
	\draw (T12) edge node {} (T14);
	
	\draw (T13) edge node {} (T15);
	\draw (T13) edge node {} (T16);
	\draw (T14) edge node {} (T15);
	\draw (T14) edge node {} (T16); 
	
	\draw (T15) edge node {} (T17);
	\draw (T15) edge node {} (T18);
	\draw (T16) edge node {} (T17);
	\draw (T16) edge node {} (T18);  
	
	\draw (T17) edge node {} (T21);
	\draw (T17) edge node {} (T21);
	\draw (T18) edge node {} (T21);
	\draw (T18) edge node {} (T21);  
	
	
	
	\node[state, label=above:{}] (S0) at (1-3,0)   		{};  
	
	\node[state, label=above:{}] (S11) at (2.5-3,1/2)   		{\huge $c$};
	\node[state, label=above:{}] (S21) at (4-3,1/2)   		{\huge $a$};
	\node[state, label=above:{}] (S22) at (4-3,5/2)   		{\huge $b$};
	\node[state, label=above:{}] (S31) at (5.5-3,1/2)   		{\huge $a$};
	\node[state, label=above:{}] (S32) at (5.5-3,3/2)   		{\huge $b$};
	\node[state, label=above:{}] (S33) at (5.5-3,5/2)   		{\huge $a$};
	\node[state, label=above:{}] (S34) at (5.5-3,7/2)   		{\huge $b$};
	\node[state, label=above:{}] (S41) at (7.0-3,1/2)   		{\huge $a$};
	\node[state, label=above:{}] (S42) at (7.0-3,2/2+0.2)   		{\huge $b$}; 
	\node[state, label=above:{}] (S43) at (7.0-3,3/2+0.4)   		{\huge $a$}; 
	\node[state, label=above:{}] (S44) at (7.0-3,4/2+0.6)   		{\huge $b$}; 
	\node[state, label=above:{}] (S45) at (7.0-3,5/2+0.8)   		{\huge $a$}; 
	\node[state, label=above:{}] (S46) at (7.0-3,6/2+1.0)   		{\huge $b$}; 
	\node[state, label=above:{}] (S47) at (7.0-3,7/2+1.2)   		{\huge $a$}; 
	\node[state, label=above:{}] (S48) at (7.0-3,8/2+1.4)   		{\huge $b$};
	\node[state, accepting, label=above:{}] (Se) at (8.5-3,4.5/2)   		{\Huge $e$};
	
	\node[state, label=above:{}] (D11) at (2.5-3,-1/2)   		{\huge $d$};
	\node[state, label=above:{}] (D21) at (4-3,-1/2)   		{\huge $a$};
	\node[state, label=above:{}] (D22) at (4-3,-5/2)   		{\huge $b$};
	\node[state, label=above:{}] (D31) at (5.5-3,-1/2)   		{\huge $a$};
	\node[state, label=above:{}] (D32) at (5.5-3,-3/2)   		{\huge $b$};
	\node[state, label=above:{}] (D33) at (5.5-3,-5/2)   		{\huge $a$};
	\node[state, label=above:{}] (D34) at (5.5-3,-7/2)   		{\huge $b$};
	\node[state, label=above:{}] (D41) at (7.0-3,-1/2)   		{\huge $a$};
	\node[state, label=above:{}] (D42) at (7.0-3,-2/2-0.2)   		{\huge $b$}; 
	\node[state, label=above:{}] (D43) at (7.0-3,-3/2-0.4)   		{\huge $a$}; 
	\node[state, label=above:{}] (D44) at (7.0-3,-4/2-0.6)   		{\huge $b$}; 
	\node[state, label=above:{}] (D45) at (7.0-3,-5/2-0.8)   		{\huge $a$}; 
	\node[state, label=above:{}] (D46) at (7.0-3,-6/2-1.0)   		{\huge $b$}; 
	\node[state, label=above:{}] (D47) at (7.0-3,-7/2-1.2)   		{\huge $a$}; 
	\node[state, label=above:{}] (D48) at (7.0-3,-8/2-1.4)   		{\huge $b$};
	\node[state, accepting, label=above:{}] (Df) at (8.5-3,-4.5/2)   		{\huge $f$};
	
	\draw (S0) edge node {} (S11);
	\draw (S11) edge node {} (S21);
	\draw (S11) edge node {} (S22);
	\draw (S21) edge node {} (S31);
	\draw (S21) edge node {} (S32);
	\draw (S22) edge node {} (S33);
	\draw (S22) edge node {} (S34);
	\draw (S31) edge node {} (S41);
	\draw (S31) edge node {} (S42);
	\draw (S32) edge node {} (S43);
	\draw (S32) edge node {} (S44);
	\draw (S33) edge node {} (S45);
	\draw (S33) edge node {} (S46);
	\draw (S34) edge node {} (S47);
	\draw (S34) edge node {} (S48);
	\draw (S41) edge node {} (Se);
	\draw (S42) edge node {} (Se);
	\draw (S43) edge node {} (Se);
	\draw (S44) edge node {} (Se);
	\draw (S45) edge node {} (Se);
	\draw (S46) edge node {} (Se);
	\draw (S47) edge node {} (Se);
	\draw (S48) edge node {} (Se);
	
	\draw (S0) edge node {} (D11);
	\draw (D11) edge node {} (D21);
	\draw (D11) edge node {} (D22);
	\draw (D21) edge node {} (D31);
	\draw (D21) edge node {} (D32);
	\draw (D22) edge node {} (D33);
	\draw (D22) edge node {} (D34);
	\draw (D31) edge node {} (D41);
	\draw (D31) edge node {} (D42);
	\draw (D32) edge node {} (D43);
	\draw (D32) edge node {} (D44);
	\draw (D33) edge node {} (D45);
	\draw (D33) edge node {} (D46);
	\draw (D34) edge node {} (D47);
	\draw (D34) edge node {} (D48);
	\draw (D41) edge node {} (Df);
	\draw (D42) edge node {} (Df);
	\draw (D43) edge node {} (Df);
	\draw (D44) edge node {} (Df);
	\draw (D45) edge node {} (Df);
	\draw (D46) edge node {} (Df);
	\draw (D47) edge node {} (Df);
	\draw (D48) edge node {} (Df);

	\end{tikzpicture}
	
	\caption{Left: a DFA recognizing $L_3$. Right: the minimum WDFA recognizing $L_3$. For clarity the labels are drawn on the nodes: the label of an edge is the label of the destination node.}
	\label{fig:dfa_wdfa_worst_case2}
	
\end{figure}
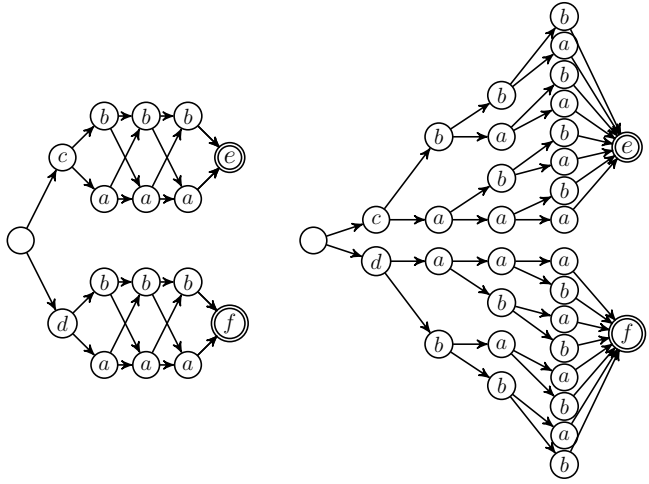

\subsection{Is a $ \mathcal A $  Wheeler?} \label{AWheeler}

In this section  we consider   the problem of deciding whether a given NFA can be endowed with  a Wheeler order. In this case, since the problem is obviously decidable, we 
are interested in  its  complexity. 
Since input-consistency is a necessary condition for Wheelerness, without loss of generality in this section we will assume that the input NFA is state-labeled.

The problem has already been considered in \cite{alanko2020regular, DBLP:conf/esa/GibneyT19}, where the following results can be found: let d-NFA denote the class of NFA's with at most d equally-labelled transitions leaving any state.
\begin{enumerate}
\item (\cite{alanko2020regular}) The problem of recognizing and sorting Wheeler d-NFA's is in $P$  for $d \leq 2$ (in particular,  it is in $P$ for deterministic automata, which correspond to the class of  1-NFA). 
\item (\cite{DBLP:conf/esa/GibneyT19}) shows that the problem is NP-complete for $d \geq 5$.
\end{enumerate}

Here we see that   NP completeness depends  on redundancies of NFA: in fact, we shall prove that the problem of deciding whether a given  reduced NFA   (see Def. \ref{red}) can be endowed with  a Wheeler order is in $P$. 
 
Let $\mathcal A=(Q,s, \delta,F)$, with $|Q|=n$,  be an input-consistent 
NFA automaton (with no edges entering in the initial state $s$) over a finite ordered alphabet $\Sigma= \{a_1,  \ldots,a_k\}$, with $a_1\prec  \ldots\prec  a_k$.  Let $\lambda(u)$ be the label of (all) the edges entering  $u$,  $Q_a=\{u\in Q: \lambda(u)=a\}$,  $Q_\epsilon=\{s\}$; if $C\subseteq Q$ then let  $\delta_a[C]=\{q' \in Q: \exists q\in C ~q'\in  \delta(q,a)\}$.

\begin{definition}
We say that a partition $ \mathcal C = \{ C_{1}, \ldots, C_{n}\} $ of the set of the automaton states is $a$-\emph{forward-stable}, for $ a \in \Sigma$, if and only if for all $ C_{i} ,C_{j} \in \mathcal C$, either $ \delta_{a}[C_{i}] \supseteq C_{j}$ or $ \delta_{a}[C_{i}] \cap C_{j}=\emptyset$.
\end{definition}

$ \mathcal C $ is \emph{forward-stable} with respect to $ \delta $ if and only if is $a$-forward-stable  for all $ a \in \Sigma $.

Consider the   algorithm  \ref{forward} below,  the ``Forward Algorithm''.

%

 \begin{algorithm}[th!]\label{forward}
	\caption{\texttt{Forward Algorithm}}
	\label{alg:sort}
	
	\SetKwInOut{Input}{input}
	\SetKwInOut{Output}{output}
	\SetSideCommentLeft
	\LinesNumbered
	
	\Input{A state-labeled NFA $\mathcal A$ }
	\Output{The coarsest forward-stable partition of  $\mathcal A$'s states and (possibly) a Wheeler order of its states.}
	\BlankLine
	\BlankLine
	
	$\mathcal C \leftarrow \langle Q_\epsilon, Q_{a_1}, \ldots, Q_{a_k} \rangle$\;
	
	\Repeat{$\mathcal C=\mathcal C_{old}$}{
	Set $ \neg R(C)$, for all $C\in  \mathcal C$ \Comment*[r]{\small $ R(\cdot)  $ stands for ``reached'}  
	$\mathcal C_{old} \leftarrow \mathcal C$\; 
	$C \leftarrow first(\mathcal C)$\;
	
	    \While{$\mathcal C=\mathcal C_{old}$ and $C\neq null$}{
	        \For{$  C'\in \mathcal C$} { 
	            $e = \lambda(C')$\Comment*[r]{\small determine the (unique) $e= \lambda(u) $, for $ u\in C' $}\label{forward:Q_e}
	            \eIf{$R(C')$}{
	                $C_1' \leftarrow C'\setminus \delta_{e}(C)$\;\label{forward:delta_e} 
	                $C'_2 \leftarrow \delta_{e}(C)\cap C'$\; 
	                $R(C_1'); R(C_2')$\;
	                }{
	                $C_1' \leftarrow \delta_{e}(C)\cap C'$\;
	                $C'_2 \leftarrow C'\setminus \delta_{e}(C)$\;
	                $R(C_1'); \neg  R(C'_2)$\;
	                }
	        $Insert(C_1',C_2', C', \mathcal C)$\Comment*[r]{\small replace $ C' $ with $ C_{1}',C_{2}' $ (in order), ignoring empty sets} 
	        }
	        $C=next(C,\mathcal C)$\;\label{forward:next}
	    }
	}
\end{algorithm}

 \begin{lemma} The Forward Algorithm terminates in $O(|Q|^2\cdot |\delta|)$ steps. \end{lemma}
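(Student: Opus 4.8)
The plan is to bound separately (i) the number of iterations of the outer \textbf{Repeat} loop and (ii) the cost of a single iteration of its body; multiplying the two gives the stated $O(|Q|^2\cdot|\delta|)$ bound, and termination follows along the way.

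For (i), the governing invariant is that $\mathcal C$ is only ever \emph{refined}. The only operation modifying $\mathcal C$ is \emph{Insert}, which replaces a block $C'$ by the two blocks $C'_1,C'_2$ while discarding empty sets. Whenever $\delta_e[C]\cap C'$ is either empty or all of $C'$, one of $C'_1,C'_2$ is empty and the list is left unchanged in that position; hence $\mathcal C$ changes during a pass of the \textbf{Repeat} body \emph{if and only if} at least one block is genuinely split into two nonempty parts. Consequently $|\mathcal C|$ is non-decreasing throughout the execution and never exceeds $|Q|$. Now a single pass of the \textbf{Repeat} body either leaves $\mathcal C$ unchanged, in which case the guard $\mathcal C=\mathcal C_{old}$ halts the loop, or it performs at least one genuine split, strictly increasing $|\mathcal C|$. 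Starting from $|\mathcal C|\ge 1$ and capped at $|Q|$, there are at most $|Q|-1$ splitting passes plus one final non-splitting pass, so the \textbf{Repeat} loop executes $O(|Q|)$ times and, in particular, terminates.

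For (ii), I bound one iteration of the \textbf{Repeat} body. Its inner \textbf{While} loop advances the splitter $C$ through $\mathcal C$ via \emph{next}, so it considers at most $|\mathcal C|\le|Q|$ splitters. For a fixed splitter $C$, I first compute in a single scan of the out-edges of the states of $C$ the successor set $D=\bigcup_{a\in\Sigma}\delta_a[C]$; this costs $O(\sum_{q\in C}\deg^+(q))=O(|\delta|)$, where $\deg^+(q)$ is the out-degree of $q$. Here I use input-consistency: every target $q'$ carries a unique incoming label $\lambda(q')$, so $q'\in D$ iff $q'\in\delta_{\lambda(q')}[C]$, and therefore for every block $C'$ with label $e=\lambda(C')$ one has $\delta_e[C]\cap C'=D\cap C'$. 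The \textbf{For} loop can thus split each $C'$ into $D\cap C'$ and $C'\setminus D$ by a membership test in $D$ for each element, touching every state once, for a total of $O(|Q|)$ over all blocks; the \emph{Insert} operations are charged to the same states. Since every non-initial state is reachable we have $|\delta|\ge|Q|-1$, so each splitter costs $O(|\delta|+|Q|)=O(|\delta|)$ and one pass of the \textbf{Repeat} body costs $O(|Q|\cdot|\delta|)$.

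Combining (i) and (ii) gives $O(|Q|)\cdot O(|Q|\cdot|\delta|)=O(|Q|^2\cdot|\delta|)$. The step that needs the most care is the counting in (i): one must notice that a single \textbf{Repeat} pass may touch many blocks (through the \textbf{While}/\textbf{For} nesting), yet the guard $\mathcal C=\mathcal C_{old}$ forces a non-splitting pass to be the last one, so amortising against the monotone quantity $|\mathcal C|\le|Q|$ is exactly what caps the number of outer iterations rather than, say, bounding each pass in isolation. The per-splitter analysis in (ii) then rests on reading off all $|\Sigma|$ sets $\delta_a[C]$ together in one edge scan, which keeps the factor $|\delta|$ (and not $|\Sigma|\cdot|\delta|$) per splitter.
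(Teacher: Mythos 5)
Your proof is correct and follows essentially the same decomposition as the paper's: at most $O(|Q|)$ iterations of the outer \textbf{Repeat} loop because the partition can only be refined (and a non-splitting pass is necessarily the last), at most $O(|Q|)$ splitters per pass, and $O(|\delta|)$ work per splitter, giving $O(|Q|^2\cdot|\delta|)$. The extra implementation details you supply (computing all successor sets of a splitter in one edge scan via input-consistency, and using $|\delta|\ge|Q|-1$ to absorb the $O(|Q|)$ per-splitter term) only sharpen the same argument.
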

 \begin{proof} After every iteration of the \texttt{repeat} command, the resulting partition is a refinement of the previous one, and the algorithm stops when we obtain the same partition of the previous iteration. Since the original partition can be refined at most $|Q|$ times, we have at most $|Q|$ iteration of the \texttt{repeat} command. 
 
 The \texttt{while} loop runs for at most $|Q|$ times as well: by Line \ref{forward:next} and by the \texttt{while} condition, in the worst case we perform one iteration per element of $\mathcal C$. Being $\mathcal C$ a partition of $Q$, its cardinality is bounded by $|Q|$.
 
 For each iteration of the \texttt{while} loop, in line \ref{forward:delta_e} we compute the outgoing arcs labeled $e$ of $C$, for each $C\in \mathcal C$. Overall, this amortizes to $O(|\delta|)$ time per \texttt{while} iteration. Similarly, in the \texttt{for} loop we visit all the nodes in $C'$, for each $C'\in \mathcal C$. This amortizes to $O(|Q|)$ time per \texttt{while} iteration. 
 
 Overall, we obtain complexity $O(|Q|^2\cdot |\delta|)$.
 
 \end{proof}

 \begin{lemma} \label{equireach} If $\mathcal C_{out}$ is the output  of the Forward Algorithm and $u, v \in C\in \mathcal C_{out}$, then 
\[\{\alpha: u\in  {\delta}(s, \alpha)\}=\{\alpha: v\in  {\delta}(s, \alpha)\}\]
 \end{lemma}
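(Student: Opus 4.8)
The plan is to prove, by induction on $|\alpha|$, the statement that for every $\alpha \in \Sigma^*$ and every pair $u,v$ lying in the same class $C \in \mathcal{C}_{out}$ one has $u \in \delta(s,\alpha)$ if and only if $v \in \delta(s,\alpha)$; since this says precisely that $I_u = I_v$, the lemma follows. Before starting the induction I would record two facts about the output partition. First, $\mathcal{C}_{out}$ is \emph{forward-stable}: this is exactly the stopping condition of the \texttt{repeat} loop, since the inability of any splitter $C$ to further split a class $C'$ means that for all $C,C'$ either $\delta_{\lambda(C')}[C] \supseteq C'$ or $\delta_{\lambda(C')}[C]\cap C' = \emptyset$, which is the definition of forward-stability (and for letters $a \neq \lambda(C')$ the sets $\delta_a[C] \subseteq Q_a$ and $C' \subseteq Q_{\lambda(C')}$ are trivially disjoint). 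Second, since the algorithm starts from $\langle Q_\epsilon, Q_{a_1}, \dots, Q_{a_k} \rangle$, which groups states by their incoming label, and thereafter only refines classes, any two states in the same class of $\mathcal{C}_{out}$ share the same label; in particular $\{s\}=Q_\epsilon$ remains a singleton block throughout.

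For the base case $\alpha=\epsilon$ we have $\delta(s,\epsilon)=\{s\}$, so $\epsilon\in I_u$ iff $u=s$; as $\{s\}$ is a class by itself, $u,v\in C$ with $u=s$ forces $v=s$, and otherwise neither belongs, so the equivalence holds. For the inductive step I would write $\alpha=\alpha' a$ with $a\in\Sigma$ and assume the claim for all strings shorter than $\alpha$. By input-consistency every edge into $u$ (resp.\ $v$) carries the label $\lambda(u)=\lambda(v)$; hence if $a\neq\lambda(u)$ then neither $u$ nor $v$ lies in $\delta(s,\alpha' a)$ and we are done, so we may assume $a=\lambda(u)=\lambda(v)$.

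The heart of the argument is to observe that $S := \delta(s,\alpha')$ is a \emph{union of classes} of $\mathcal{C}_{out}$: indeed $|\alpha'|<|\alpha|$, so the inductive hypothesis applied to $\alpha'$ says that membership in $\delta(s,\alpha')$ is constant on each class. Writing $S=\bigcup_{D\in\mathcal{D}} D$ for a subfamily $\mathcal{D}\subseteq\mathcal{C}_{out}$ and using $\delta(s,\alpha' a)=\delta_a[S]=\bigcup_{D\in\mathcal{D}}\delta_a[D]$, suppose $u\in\delta_a[S]$. Then $u\in\delta_a[D]$ for some $D\in\mathcal{D}$, so $\delta_a[D]\cap C\neq\emptyset$ (it contains $u\in C$); forward-stability of $\mathcal{C}_{out}$ then upgrades this to $\delta_a[D]\supseteq C$, whence $v\in C\subseteq\delta_a[D]\subseteq\delta_a[S]$. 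By symmetry $v\in\delta_a[S]$ implies $u\in\delta_a[S]$, so $u\in\delta(s,\alpha)$ iff $v\in\delta(s,\alpha)$, completing the induction.

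The main obstacle, and the step deserving the most care, is the passage from ``$u\in\delta_a[D]$'' to ``$C\subseteq\delta_a[D]$'': this is exactly where forward-stability is invoked, and it is the reason the lemma holds for the \emph{stable} output partition rather than for an arbitrary label-refinement. A secondary bookkeeping point is to apply the hypothesis to the proper prefix $\alpha'$ in order to conclude that $\delta(s,\alpha')$ respects the partition, while separately disposing of the trailing letter via input-consistency; neither of these is technically deep once the forward-stability step is in place.
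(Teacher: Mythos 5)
Your proof is correct and is essentially the paper's argument in a different wrapper: the paper runs a minimal-length-counterexample contradiction, pulling back from $u$ to a predecessor $u'$ and using stability of $\mathcal C_{out}$ to find a matching $v'$, which is exactly your inductive step read contrapositively. The key move is identical in both — the output partition can no longer be split, so $\delta_a[D]\cap C\neq\emptyset$ forces $C\subseteq\delta_a[D]$ — and your explicit bookkeeping (labels constant on classes, $\{s\}$ a singleton block) is a harmless elaboration of what the paper leaves implicit.
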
 
 \begin{proof}Suppose, by way of a contradiction, that there exists a word $\alpha\in \Sigma^*$,  an element   $C\in \mathcal C_{out}$, and two states $u,v\in C$ such that $u\in{\delta}(s, \alpha), v\not \in {\delta}(s, \alpha)$. Consider  a word $ \alpha $ of minimal length having this property.  
 
Let $\alpha=\alpha'e$  and consider $u'\in  {\delta}(s, \alpha')$ such that  $u\in \delta(u',e)$. Let $C'\in  C_{out}$ be such that $u'\in C'$. Since $\mathcal C_{out}$ is the output of the algorithm, $C'$ cannot be a modifier for  $\mathcal C_{out}$; in particular, since $u \in \delta_e(C')\cap C\neq \emptyset$, we must have $C\subseteq \delta_e(C')$. Being $ v\in C $, there must exist $v'\in C'$ with $v\in \delta(v', e)$.  Since $ u', v' \in C' $ with $u'\in  {\delta}(s, \alpha')$,  by the minimality of $\alpha$ we have   $v'\in  {\delta}(s, \alpha')$.
 This implies $v\in {\delta}(s, \alpha)$, which contradicts our hypothesis.

 \end{proof}
 
 \begin{lemma} \label{agree} If $\mathcal A=(Q,s, \delta,<,F)$ is a Wheeler automaton, at any step of the Forward Algorithm the partition $\mathcal C=\langle C_1, \ldots, C_k\rangle$ agrees with the Wheeler order $<$ of the automaton, that is, if $i<j$, $u\in C_i, v\in C_j$ then $u<v$.
 \end{lemma}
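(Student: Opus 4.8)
The plan is to prove the statement by induction on the number of \texttt{Insert} operations performed by the algorithm, showing that each individual split preserves agreement with the Wheeler order $<$. First I would record the elementary consequence of the invariant that is used throughout: if an ordered partition $\langle C_1,\dots,C_m\rangle$ satisfies ``$i<j,\ u\in C_i,\ v\in C_j \Rightarrow u<v$'', then each block $C_i$ is \emph{convex} (an interval) in $(Q,<)$. Indeed, if $u,u'\in C_i$ and $u<w<u'$ with $w\in C_j$, then $j<i$ would force $w<u$ and $j>i$ would force $u'<w$, both absurd, so $w\in C_i$. Hence, under the inductive hypothesis, every block, and in particular the current modifier $C$ and the block $C'$ being split, is an interval of $(Q,<)$; moreover, since the algorithm scans blocks via $first$/$next$ on the ordered partition, modifiers are processed in increasing $<$-order.

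For the base case, the initial partition is $\langle Q_\epsilon, Q_{a_1},\dots,Q_{a_k}\rangle$ with $a_1\prec\cdots\prec a_k$. The block $Q_\epsilon=\{s\}$ comes first and $s$ is the minimum of $(Q,<)$ by the definition of a WNFA. For $i<j$, any $u\in Q_{a_i}$ carries an incoming $a_i$-edge and any $v\in Q_{a_j}$ an incoming $a_j$-edge, so Wheeler property (i) gives $u<v$. Thus the base case holds.

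For the inductive step I would isolate two structural facts about $\delta_e[\cdot]$, both immediate from Wheeler property (ii). \emph{Monotonicity:} if $C_1<C_2$ are blocks (every element of $C_1$ precedes every element of $C_2$), then every element of $\delta_e[C_1]$ is $\le$ every element of $\delta_e[C_2]$, since for $v\in\delta(u_1,e)$, $w\in\delta(u_2,e)$ with $u_1\in C_1$, $u_2\in C_2$ we have $u_1<u_2$, whence $v\le w$. \emph{Intervality:} for an interval $C$, the set $\delta_e[C]$ is an interval of $e$-states; if $v_1<w<v_2$ with $v_1,v_2\in\delta_e[C]$ and $w$ an $e$-state, then picking predecessors $u_1,u_2\in C$ of $v_1,v_2$, any $e$-predecessor $u$ of $w$ must satisfy $u_1\le u\le u_2$ (else Wheeler (ii) forces $w\le v_1$ or $v_2\le w$), so $u\in C$ and $w\in\delta_e[C]$. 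Consequently $A:=\delta_e[C]\cap C'$ is a sub-interval of the interval $C'$, and the only thing left to check is that it is placed on the correct side of $B:=C'\setminus A$ according to the flag $R(C')$.

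The heart of the argument, and the step I expect to be the main obstacle, is tying the value of $R(C')$ to the control flow so as to exclude the dangerous case in which $A$ sits strictly inside $C'$ (which would make $B$ non-convex and break the order). Here I would use that, by the \texttt{while} guard $\mathcal C=\mathcal C_{old}$, every modifier processed before the one that finally triggers a split acts on the \emph{unchanged} partition and causes no refinement; a modifier causes no refinement precisely when it reaches each block either fully ($A=C'$) or not at all ($A=\emptyset$), and in that situation one checks directly from the pseudocode that $R(C')$ becomes and stays true exactly when some earlier, hence $<$-smaller, modifier reached $C'$ fully. Two cases then remain for the splitting modifier $C$ and a block $C'$ it reaches properly. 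If $\neg R(C')$, then no smaller modifier reached $C'$ at all (a full reach would have set $R$, a partial reach would have triggered an earlier split), so $C$ is the $<$-smallest block reaching $C'$; taking the smallest block $C^\ast$ containing an $e$-predecessor of $\min_< C'$ and applying monotonicity yields $C^\ast=C$, so $\min_< C'\in A$ and $A$ is a \emph{prefix} of $C'$, matching the placement $\langle A,B\rangle$. If $R(C')$, some smaller modifier $C''$ reached $C'$ fully, i.e.\ $C'\subseteq\delta_e[C'']$; monotonicity gives $\max_< C'\le a$ for every $a\in A\subseteq\delta_e[C]$, while $a\in C'$ forces $a\le\max_< C'$, so $A=\{\max_< C'\}$ is a \emph{suffix} of $C'$, matching the placement $\langle B,A\rangle$. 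In both cases the two new blocks inherit their order relative to every untouched block from $C'$, so the refined partition still agrees with $<$, completing the induction.
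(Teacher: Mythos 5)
Your proof is correct, and it shares the skeleton of the paper's argument: induction along the algorithm's refinement steps, the base case settled by Wheeler (i) together with the minimality of $s$, and --- crucially --- the same control-flow reading of the flag, namely that when the splitting modifier $C$ examines $C'$, $R(C')$ holds exactly when some $<$-earlier block fully reached $C'$ (a partial reach by an earlier block being excluded because the \texttt{while} guard would already have forced a restart). Where you genuinely diverge is in how the two pieces of $C'$ are then ordered. The paper argues pointwise: for arbitrary $x$ in the first piece and $y$ in the second it exhibits $e$-predecessors $x',y'$ whose blocks' relative order is known from the inductive hypothesis (the $R$-characterization determines on which side of the modifier those predecessors must lie), and concludes $x<y$ from Wheeler (ii) together with $x\neq y$. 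You instead package Wheeler (ii) into two structural lemmas --- blockwise monotonicity of $\delta_e[\cdot]$ and preservation of intervality --- and use them to locate $A=\delta_e[C]\cap C'$ exactly: a prefix of $C'$ containing $\min_< C'$ when $\neg R(C')$, and the singleton $\{\max_< C'\}$ when $R(C')$. (Your step ``monotonicity yields $C^\ast=C$'' is a compressed proof by contradiction --- if $C^\ast>C$ then monotonicity squeezes $A$ down to $\{\min_< C'\}$, forcing $\min_< C'\in\delta_e[C]$ after all --- but it does go through.) This route buys strictly more information than the paper extracts, in particular that the reached part of an already-reached block is always a single state, at the price of first proving convexity of all blocks and intervality of $\delta_e[C]$; the paper's pointwise comparison is shorter and needs neither. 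Both arguments are sound and hinge on the same characterization of $R$.
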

 \begin{proof}  Reasoning by induction on the number of iterations of the repeat loop, observe that, by Wheeler (1), the initial partition  $\mathcal C=\langle Q_\epsilon, Q_{a_1}, \ldots, Q_{a_k} \rangle$  agrees with the Wheeler order $<$. 
 
Suppose 
$\mathcal C$  is the partition we obtain after an intermediate iteration. By induction, $\mathcal C$ agrees with the Wheeler order $<$. Let  $C=C_h\in \mathcal C$ be the   modifier chosen   and let $\mathcal C'$ be  the output of the repeat iteration using $ C_{h}$. We prove that $\mathcal C'$ still   agrees with  $<$. 
 Let $C'\in \mathcal C$ be such that  
 $\delta_e(C_h)\cap C'\neq \emptyset, ~~C'\setminus \delta_e(C_h)\neq \emptyset$ and consider the following two cases:
 \begin{itemize} 
 \item[$\neg R(C')$ ] Let  $x\in C_1'=\delta_e(C_h)\cap C', ~ y \in C'_2=C'\setminus \delta_e(C_h)$. We prove that 
$x<y$.  We begin observing that,   
 for all $k<h$,  we must have $\delta_e(C_k)\cap C'=\emptyset$. In fact, if this were not the case, we would have  had $C'\subseteq \delta_e(C_k)$ (or $C'$ would have been ``splitted'' in a previous step). But then, when $C_k$ was considered in the  while loop, at line 13 or 17 the algorithm would have set $R(C')$: a contradiction.  Hence,  $\delta_e(C_k)\cap C'=\emptyset$ for all $k<h$ and    any edge entering in  $y$ must start from an element $ y' \in C_{j} $ such that  $j>h$, that is: $y\in \delta(y',e)$. Since $ x\in C_1'=\delta_e(C_h)\cap C'$, there exists  $x'\in C_h$ with  $x\in \delta(x',e)$. Then $x'<y'$, since by hypothesis the partition $\mathcal C$ agrees with the Wheeler order $<$. 
 Finally,   by the Wheeler properties,  $x<y$ follows from  $x'<y'$,  $x\in \delta(x',e)$,  and   $y\in \delta(y',e)$. 
 
 \item[$R(C')$ ]   In this case, let  $ x\in C_1'=C'\setminus \delta_e(C),  y \in C'_2=\delta_e(C_h)\cap C'$. We prove that 
$x<y$.   From  $R(C')$ it follows that  there exists $k<h$ with $C'\subseteq \delta_e(C_k)$, hence, there exists  $x' \in C_k$ with  $x\in \delta(x',e)$. From $ y \in C'_2=\delta(C_h)\cap C'$ it follows that there exists $y'\in C_h$  with  $y\in \delta(y',e)$. From $x'\in C_k$ and $y'\in C_h$ it follows   $x'<y'$,  since   by hypothesis the partition $\mathcal C$ agrees with the Wheeler order $<$.  Finally, $x<y$ follows from $x'<y'$,  $x\in \delta(x',e)$,      $y\in \delta(y',e)$,  and  Wheeler properties.  
\end{itemize}

From the above analysis the thesis follows.
 \end{proof}

\begin{rem}
 The equivalence relation  $\approx_{out}$  corresponding to the output partition $\mathcal C_{out}$ of the Forward Algorithm can be a {\em proper} refinement of the equivalence $\approx_{\mathcal A}$ described in Definition  \ref{sim_A},  as the   automaton in Fig \ref{ex:out_neq_A} shows:  the two last states are 
$\approx_{\mathcal A}$-equivalent and   not $\approx_{out}$-equivalent. 
\end{rem}
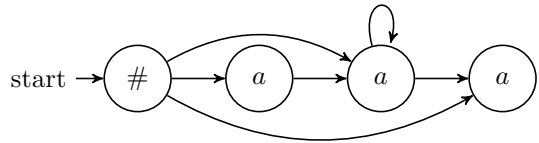
\begin{figure}[!htb]
\centering
 \begin{tikzpicture}[->,>=stealth', semithick, auto, scale=.8]
    \node[state,initial, label=above:{}] (S)    at (-2,0)		{$\#$};
    \node[state, label=above:{}] (Q_12)    at (0,0)		{$a$};
    \node[state, label=above:{}] (Q_3)    at (2,0)				{$a$};
    \node[state, label=above:{}] (Q_4)    at (4,0)				{$a$};
    \draw (S) edge node {} (Q_12);
    \draw (S) edge [bend left=30, above] node {} (Q_3);
    \draw (Q_12) edge node {} (Q_3);
     \draw (Q_3) edge   node {} (Q_4);
    \draw (S) edge [bend right, above] node {} (Q_4);
    \draw (Q_3) edge  [loop above] node {} (Q_3);
\end{tikzpicture} 
\caption{An NFA  for which the output relation $\approx_{out}$ given by the Forward Algortithm  is a proper refinement of    $\approx_{\mathcal A}$.}\label{ex:out_neq_A}
\end{figure}

We are now ready to prove that deciding Wheelerness for reduced NFA is in $P$. 

 \begin{corollary} \label{polyreduced} We can decide in polynomial time whether  a reduced state-labeled NFA $\mathcal A$  admits a Wheeler order. 
 \end{corollary}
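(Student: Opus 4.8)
The plan is to combine the Forward Algorithm with Lemmas \ref{equireach} and \ref{agree}. Given a reduced state-labeled NFA $\mathcal A$, I would first run the Forward Algorithm, which by the preceding lemma terminates in $O(|Q|^2\cdot|\delta|)$ time and outputs an \emph{ordered} partition $\mathcal C_{out}=\langle C_1,\ldots,C_m\rangle$ of the state set.

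The first key observation is that, for a reduced automaton, every block of $\mathcal C_{out}$ is a singleton, and this holds regardless of whether $\mathcal A$ is Wheeler. Indeed, by Lemma \ref{equireach} any two states $u,v$ lying in the same block $C\in\mathcal C_{out}$ are reached by exactly the same set of strings, i.e. $I_u=I_v$; but reducedness (Definition \ref{red}) forces $u=v$ whenever $I_u=I_v$. Hence $\mathcal C_{out}=\langle\{u_1\},\ldots,\{u_n\}\rangle$ with $n=|Q|$, and this ordered list defines a \emph{candidate} linear order $u_1<u_2<\cdots<u_n$ on $Q$.

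Next I would simply \emph{test} whether this candidate order is a Wheeler order, by checking the two Wheeler properties (i) and (ii) of Definition \ref{WNFA_WDFA} directly over all pairs of edges (the requirement that $s$ be the minimum with no incoming edges being already guaranteed). This verification ranges over pairs of transitions and is clearly polynomial. If it succeeds we have exhibited an explicit Wheeler order, so $\mathcal A$ is Wheeler; the whole procedure runs in polynomial time.

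It remains to justify declaring $\mathcal A$ \emph{non}-Wheeler when the test fails, and this is the only point requiring care. It rests on Lemma \ref{agree}: if $\mathcal A$ admitted \emph{any} Wheeler order $<'$, then at every step of the algorithm — in particular at termination — its partition would agree with $<'$, meaning that $u\in C_i$, $v\in C_j$, $i<j$ imply $u<'v$. Since the blocks are singletons, this pins $<'$ down to be exactly the candidate order $u_1<\cdots<u_n$; equivalently, one may invoke Lemma \ref{unique}, as a reduced WNFA has a unique Wheeler order. Consequently, if the candidate order fails the Wheeler test, no Wheeler order can exist. Reducedness is essential precisely here: it collapses the ordered output partition into a single, totally determined candidate, so that the one polynomial-time verification is conclusive in both directions.
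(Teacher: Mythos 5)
Your proposal is correct and follows essentially the same route as the paper's proof: run the Forward Algorithm, use Lemma \ref{equireach} plus reducedness to conclude that $\mathcal C_{out}$ consists of singletons and hence determines a unique candidate order, invoke Lemma \ref{agree} (equivalently Lemma \ref{unique}) to see that any Wheeler order would have to coincide with it, and verify the Wheeler axioms on that single candidate in polynomial time. The only cosmetic difference is that you check properties (i) and (ii) by direct enumeration over edge pairs, whereas the paper cites the polynomial-time sortability test of \cite{alanko2020regular}; both are fine.
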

   \begin{proof} This follows by the previous lemmas and the uniqueness of the Wheeler order on a reduced NFA (see Lemma \ref{unique}). If we start the Forward Algorithm from a reduced NFA, by Lemma \ref{equireach} we know that 
   the output   partition  $\mathcal C_{out}$ consists  of singleton classes. By Lemma \ref{agree} we also know that if  $\mathcal A$  is Wheeler then the unique possible Wheeler order is given by the (ordered) partition $\mathcal C_{out}$. Hence, to decide whether a reduced NFA $\mathcal A$  is Wheeler we can apply the algorithm, produce $\mathcal C_{out}$  in polynomial time, and test  whether the  induced  order is Wheeler (this can be done in polynomial time, see \cite{alanko2020regular}). 
   \end{proof}
 

Moreover, the Forward Algorithm achieves the following:  if  $\mathcal A / \approx_{out}$ is defined as in Definition \ref{def:A/equiv} (but using relation $\approx_{out}$ instead of $\approx_{\mathcal A}$),  it holds:

\begin{corollary}\label{cor:sort NFA}
Let $\mathcal A$ be a state-labeled NFA. If $\mathcal A$ is Wheeler, then then the Forward Algorithm builds and sorts, in polynomial time, the equivalent Wheeler NFA $\mathcal A / \approx_{out}$.
\end{corollary}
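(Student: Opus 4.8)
The plan is to verify three claims: that the procedure runs in polynomial time, that the automaton $\mathcal A/\approx_{out}$ it assembles is Wheeler, and that it recognises $\mathcal L(\mathcal A)$. Polynomiality is essentially already in hand: the Forward Algorithm terminates in $O(|Q|^2\cdot|\delta|)$ steps, and reading off the ordered output partition $\mathcal C_{out}=\langle C_1,\dots,C_m\rangle$ in order to assemble the states, edges, final states and order of $\mathcal A/\approx_{out}$ (following Definition \ref{def:A/equiv} with $\approx_{out}$ in place of $\approx_{\mathcal A}$) costs only polynomial additional time. So the real content is that the object built is a Wheeler NFA equivalent to $\mathcal A$, sorted by a genuine Wheeler order.

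First I would establish that every class $C_i$ is an interval of $(Q,<)$, where $<$ is a Wheeler order of $\mathcal A$ (one exists since $\mathcal A$ is Wheeler by hypothesis). This is immediate from Lemma \ref{agree}: at termination the ordered partition $\mathcal C_{out}$ agrees with $<$, so if $u,v\in C_i$ and $u<w<v$ with $w\in C_j$, then $j<i$ would give $w<u$ and $j>i$ would give $v<w$, both contradictions; hence $w\in C_i$. Consequently $<^{\approx_{out}}$ of Definition \ref{def:A/equiv} is well defined and coincides with the left-to-right order of $\mathcal C_{out}$. I would also note that each $C_i$ is input-consistent and that $s$ lies in a singleton class: the initial partition $\langle Q_\epsilon,Q_{a_1},\dots,Q_{a_k}\rangle$ separates states by incoming label and isolates $s$ in $Q_\epsilon=\{s\}$, and refinement preserves both properties.

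With the classes known to be input-consistent intervals, the Wheelerness of $\mathcal A/\approx_{out}$ follows exactly as in Lemma \ref{quotient}: the two Wheeler conditions for the class order are inherited from the corresponding conditions for $<$ on $\mathcal A$ (classes reached by edges with labels $a_1\prec a_2$ are distinct and $<$-separated, while same-label edges out of $<^{\approx_{out}}$-ordered classes land in $<$-ordered classes). Moreover, by Lemma \ref{agree} the order so obtained is consistent with $<$, so the ordered partition the algorithm returns is itself a Wheeler order of the quotient: the automaton is both built and sorted.

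It remains to argue $\mathcal L(\mathcal A/\approx_{out})=\mathcal L(\mathcal A)$, and this is the step where care is needed. I would re-run the proof of Lemma \ref{quotient} verbatim, proving for all $\alpha\in\Sigma^*$ that $[u]_{\approx_{out}}\in\delta^{\approx_{out}}(s^{\approx_{out}},\alpha)$ iff $u\in\delta(s,\alpha)$ by induction on $|\alpha|$. The only place that argument appeals to the nature of the equivalence is the inductive step, where from $v\in\delta(s,\alpha)$ and $v\approx_{out} v'$ one must conclude $v'\in\delta(s,\alpha)$ (and symmetrically for $u$); for $\approx_{\mathcal A}$ this was the identity $I_v=I_{v'}$, and for $\approx_{out}$ it is supplied precisely by Lemma \ref{equireach}, which says that states in the same output class are reached by exactly the same set of strings. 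The same lemma makes $\delta(s,\alpha)$ a union of $\approx_{out}$-classes, so acceptance transfers correctly through $F^{\approx_{out}}=\{[u]:[u]\cap F\neq\emptyset\}$, yielding the language equality. The main obstacle is therefore not conceptual but this verification: Lemma \ref{equireach}, rather than the definitional identity used for $\approx_{\mathcal A}$, is exactly the ingredient that keeps $\delta(s,\cdot)$ saturated by classes, so that the determinisation-free quotient argument of Lemma \ref{quotient} carries over unchanged.
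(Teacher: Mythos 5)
Your proposal is correct and follows essentially the same route as the paper: language equivalence via Lemma \ref{equireach} feeding into the quotient construction of Definition \ref{def:A/equiv} and Lemma \ref{quotient}, and Wheelerness of the output order via Lemma \ref{agree} together with the Wheeler properties of $\mathcal A$ lifted through class representatives. Your explicit identification of where Lemma \ref{equireach} replaces the identity $I_v=I_{v'}$ in the induction of Lemma \ref{quotient} is a slightly more careful spelling-out of a step the paper only gestures at, but it is the same argument.
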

\begin{proof}
By Lemma \ref{equireach}, $\approx_{out}$ is a refinement of $\approx_{\mathcal A}$ (Definition \ref{sim_A}).
Using the same construction of 
Definition \ref{def:A/equiv} and
Lemma \ref{quotient}, we can moreover see that $\mathcal A / \approx_{out}$ (having elements of $\mathcal C_{out}$ as states) is equivalent to $\mathcal A$. By Lemma \ref{agree}, if $\mathcal A$ is Wheeler then $\mathcal C_{out}$ agrees with any Wheeler order $<$ of $\mathcal A$. It easily follows that the order $<_{out}$ defined by $C_i <_{out} C_j$ if and only if $i<j$ is a Wheeler order on $\mathcal A / \approx_{out}$. To see this, first note that if $\lambda(C_i) \prec \lambda(C_j)$ then $C_i <_{out} C_j$ since the Forward Algorithm preserves the order of the labels (Wheeler (i)). To prove Wheeler (ii), let $C_i <_{out} C_j$ and $C_{i'}, C_{j'}$ be successors of $C_{i}$ and $C_{j}$, respectively, such that $\lambda(C_{i'}) = \lambda(C_{j'})$. Then, by definition of $\mathcal A / \approx_{out}$ there exist $u\in C_i$, $v\in C_j$, $u'\in C_{i'}$, and $v'\in C_{j'}$ such that $u',v'$ are successors of $u,v$, respectively, with  $\lambda(u') = \lambda(v') = \lambda(C_{i'}) = \lambda(C_{j'})$. Since $C_i <_{out} C_j$, by Lemma \ref{agree} we have that $u<v$. By Wheeler (ii) on $\mathcal A$, it follows that $u' < v'$. Then, it must be the case that $C_{i'} <_{out} C_{j'}$: if this were not the case, i.e. if $C_{j'} <_{out} C_{i'}$, then by Lemma \ref{agree} we would have $v'<u'$, a contradiction. It follows that also Wheeler (ii) holds, therefore $\mathcal A / \approx_{out}$ is Wheeler with order $<_{out}$.
\end{proof}

Corollary \ref{cor:sort NFA} allows us to circumvent the NP-completeness of the problem of recognizing and sorting general Wheeler NFA \cite{DBLP:conf/esa/GibneyT19}. This does not mean that we break the problem's NP-completeness: while a Wheeler $\mathcal A$ induces a Wheeler $\mathcal A / \approx_{out}$ by the Forward Algorithm, the opposite is not true. As shown in Figure \ref{ex:nonW->W}, there exist non-Wheeler NFA $\mathcal A$ such that $\mathcal A / \approx_{out}$ is Wheeler.

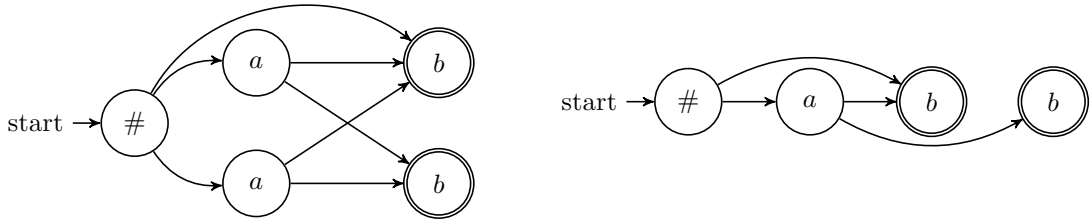
\begin{figure}[!htb]
\centering
\begin{minipage}{.45\textwidth}
    \begin{tikzpicture}[->,>=stealth', semithick, auto, scale=.8]
    \node[state,initial, label=above:{}] (S)    at (-2,0)		{$\#$};
    \node[state, label=above:{}] (Q_1)    at (0,1)		{$a$};
    \node[state, label=above:{}] (Q_2)    at (0,-1)		{$a$};
    \node[state, label=above:{}, accepting] (Q_3)    at (3,1)				{$b$};
    \node[state, label=above:{}, accepting] (Q_4)    at (3,-1)				{$b$};
    \draw (Q_1) edge node {} (Q_3);
    \draw (Q_2) edge node {} (Q_4);
    \draw (Q_1) edge node {} (Q_4);
    \draw (Q_2) edge node {} (Q_3);
    \draw (S) edge [bend left=50, above] node {} (Q_3);
    \draw (S) edge [bend left, above] node {} (Q_1);
    \draw (S) edge [bend right, above] node {} (Q_2);
\end{tikzpicture} 
\end{minipage}
\begin{minipage}{.45\textwidth}
    \begin{tikzpicture}[->,>=stealth', semithick, auto, scale=.8]
    \node[state,initial, label=above:{}] (S)    at (-2,0)		{$\#$};
    \node[state, label=above:{}] (Q_12)    at (0,0)		{$a$};
    \node[state, label=above:{}, accepting] (Q_3)    at (2,0)				{$b$};
    \node[state, label=above:{}, accepting] (Q_4)    at (4,0)				{$b$};
    \draw (S) edge node {} (Q_12);
    \draw (S) edge [bend left=30, above] node {} (Q_3);
    \draw (Q_12) edge node {} (Q_3);
    \draw (Q_12) edge [bend right, above] node {} (Q_4);
\end{tikzpicture} 
\end{minipage}
\caption{Left: non-Wheeler NFA $\mathcal A$ (the two states labeled $b$ cannot be ordered). Right: Wheeler NFA $\mathcal A / \approx_{out}$ and corresponding order output by the Forward Algorithm. The two states labeled $a$ have been merged into a single state.}\label{ex:nonW->W}
\end{figure}

\section{Closure Properties for Wheeler Languages}

In this section we classify operations on languages depending on  whether they preserve Wheelerness or not. The first observation is that Wheeler languages, being a subclass of the class of Ordered Languages (see \cite{thierrin1974ordered_aut}), are star-free  (that is, they can be generated from finite languages by Boolean operations and compositions only). As such,  they can be  definable in the so-called the first order theory of linear orders $FO(<)$. However, as we shall see, there are very few ``classical'' operations which preserve Wheeler Languages.  

%
 
   \subsection{Booleans}  \label{booleans}

\begin{lemma} \label{bool_pos} ~
 \begin{enumerate}
 \item  Finite and co-finite languages are Wheeler. 
  \item The union of a Wheeler language with a finite set is Wheeler. 
 \item  The intersection of two Wheeler languages is Wheeler. 
 \item If $\mathcal L$ is Wheeler, then $\pf L$ is Wheeler. 
 \item If $\mathcal L$ is Wheeler,  then $\pf L \setminus \mathcal  L$ is Wheeler.
 \end{enumerate}
\end{lemma}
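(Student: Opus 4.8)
The plan is to derive all five parts from the monotone-sequence characterisation of Lemma \ref{infinite_nonwh}, noting first that every target language here ($\mathcal L\cup G$, $\mathcal L_1\cap\mathcal L_2$, $\pf L$, $\pf L\setminus\mathcal L$, and finite/co-finite sets) is regular, so the lemma applies. The engine common to parts (2)--(5) is the elementary remark that if one equivalence \emph{refines} another on the relevant prefix set (i.e. $\alpha\equiv_1\beta\Rightarrow\alpha\equiv_2\beta$), then any sequence eventually constant modulo $\equiv_1$ is also eventually constant modulo $\equiv_2$. Thus, to prove a regular $\mathcal M$ Wheeler, I would exhibit a Wheeler language (or finite combination thereof) whose ordinary syntactic congruence refines $\equiv_{\mathcal M}$, and transfer eventual-constancy of monotone sequences downward. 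The refinements follow from simple identities about left quotients $\alpha^{-1}(\cdot)$. A recurring technical fact I would isolate at the start is the dichotomy: \emph{a monotone sequence in $(\Sigma^*,\prec)$ is either eventually constant, or its lengths tend to infinity} (since over a finite alphabet only finitely many strings have a given bounded length, any monotone sequence taking a bounded-length value infinitely often is pinned to that value).

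For part (1): a finite $\mathcal L$ has finite $\pf L$, so $\equiv_{\mathcal L}^{c}$ trivially has finite index and $\mathcal L$ is Wheeler by Theorem \ref{Myhill-Nerode}. For co-finite $\mathcal L=\Sigma^*\setminus F$, every $\alpha$ longer than the longest word of $F$ satisfies $\alpha^{-1}\mathcal L=\Sigma^*$, so all but finitely many strings form a single $\equiv_{\mathcal L}$-class; by the dichotomy a monotone sequence in $\pf L=\Sigma^*$ is either eventually constant or has lengths tending to infinity, and in the latter case its tail lies entirely in that one class, hence is eventually constant modulo $\equiv_{\mathcal L}$.

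For parts (2) and (3) I would use $\alpha^{-1}(\mathcal L\cup G)=\alpha^{-1}\mathcal L\cup\alpha^{-1}G$ and $\alpha^{-1}(\mathcal L_1\cap\mathcal L_2)=\alpha^{-1}\mathcal L_1\cap\alpha^{-1}\mathcal L_2$. In (2), $\alpha^{-1}G=\emptyset$ whenever $|\alpha|$ exceeds the longest word of $G$, so on long strings $\equiv_{\mathcal L\cup G}$ coincides with $\equiv_{\mathcal L}$; applying the dichotomy, a non-eventually-constant monotone sequence has a tail of long prefixes of $\mathcal L$, where Wheelerness of $\mathcal L$ gives eventual constancy that transfers to $\equiv_{\mathcal L\cup G}$. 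In (3) the identity shows that being simultaneously $\equiv_{\mathcal L_1}$- and $\equiv_{\mathcal L_2}$-equivalent implies being $\equiv_{\mathcal L_1\cap\mathcal L_2}$-equivalent; since a monotone sequence in $\text{Pref}(\mathcal L_1\cap\mathcal L_2)\subseteq\text{Pref}(\mathcal L_1)\cap\text{Pref}(\mathcal L_2)$ is eventually constant modulo each of $\equiv_{\mathcal L_1},\equiv_{\mathcal L_2}$ (both Wheeler), it is eventually constant modulo their common refinement, hence modulo $\equiv_{\mathcal L_1\cap\mathcal L_2}$.

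For parts (4) and (5) the crucial identity is $\alpha^{-1}\,\pf L=\text{Pref}(\alpha^{-1}\mathcal L)$ (a word extends $\alpha$ into $\pf L$ exactly when it is a prefix of some element of $\alpha^{-1}\mathcal L$). This gives at once that $\equiv_{\mathcal L}$ refines $\equiv_{\pf L}$, settling (4). For (5), writing $\mathcal M=\pf L\setminus\mathcal L$, I would combine the identities into $\alpha^{-1}\mathcal M=\text{Pref}(\alpha^{-1}\mathcal L)\setminus\alpha^{-1}\mathcal L$, which depends on $\alpha$ only through $\alpha^{-1}\mathcal L$; hence $\equiv_{\mathcal L}$ refines $\equiv_{\mathcal M}$ as well, and eventual constancy transfers from the Wheeler language $\mathcal L$. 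The main obstacle I anticipate is not any individual identity but the disciplined use of the co-lexicographic order: because monotone sequences need not have monotone lengths, the ``eventually constant or lengths $\to\infty$'' dichotomy must be argued explicitly wherever a ``long-string'' regime is invoked (parts (1) and (2)), and one must verify throughout that the sequences genuinely lie in the prefix sets needed to invoke Wheelerness of the source languages.
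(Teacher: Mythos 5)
Your proposal is correct, and for parts (1) (co-finite case), (2) and (3) it is essentially the paper's own argument: reduce everything to the monotone-sequence characterisation of Lemma \ref{infinite_nonwh} and use length bounds (resp.\ the inclusion $\text{Pref}(\mathcal L_1\cap\mathcal L_2)\subseteq \pf{L_1}\cap\pf{L_2}$) to transfer eventual constancy. Two points of genuine divergence are worth noting. First, for parts (4) and (5) the paper does not use Lemma \ref{infinite_nonwh} at all: it simply takes a WDFA for $\mathcal L$ and declares all states final (for $\pf L$), or swaps final and non-final states (for $\pf L\setminus\mathcal L$), observing that the Wheeler order is untouched; your route via the quotient identities $\alpha^{-1}\pf L=\text{Pref}(\alpha^{-1}\mathcal L)$ and $\alpha^{-1}(\pf L\setminus\mathcal L)=\text{Pref}(\alpha^{-1}\mathcal L)\setminus\alpha^{-1}\mathcal L$ is equally valid, automaton-free, and makes the refinement $\equiv_{\mathcal L}\Rightarrow\equiv_{\mathcal M}$ explicit, at the cost of being slightly longer than the paper's one-line constructions. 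Second, your explicit dichotomy (a monotone sequence in $(\Sigma^*,\prec)$ is either eventually constant or has lengths tending to infinity) is a genuine improvement in rigour: the paper's proofs of (1) and (2) assert directly that the lengths $|\alpha_i|$ eventually exceed any fixed bound, which silently skips the eventually-constant case (trivial, but it should be dispatched, and for finite $\mathcal L$ it is in fact the only case that occurs, since $\pf L$ is then finite). Your alternative disposal of the finite case via finiteness of the index of $\equiv_{\mathcal L}^{c}$ and Theorem \ref{Myhill-Nerode} is also fine.
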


 \begin{proof}
 \noindent
  \begin{enumerate}
 \item
  If $\mathcal L$ is finite (co-finite) and  $(\alpha_i)_{i\geq 1}$ is a monotone  sequence    in $\pf L$, then there is a $k\geq 1$ such that, for all $i>k$,  the length of $\alpha_i$ is longer than the length of any word in $\mathcal L$ ($\overline {\mathcal L}$, respectively). 
  This shows that any word having $\alpha_i$   as prefix  does not belong to $\mathcal L$ ($\overline {\mathcal L}$, respectively), so that 
 $\alpha_i   \equiv_{\mathcal L} \alpha_j$ for all $i,j>k$, and $\mathcal L$ is Wheeler    by  Lemma
 \ref{infinite_nonwh}.
 \item If $\mathcal L$ is Wheeler, $F$ is a finite set,  and $(\alpha_i)_{i\geq 1}$ is a   a monotone  sequence    in $\pf {L\cup F}$,  then there is a $k\geq 1$ such that, for all $i>k$, 
  the length of $\alpha_i$ is longer than the length of any word in the finite set $F$.  This implies that  $\alpha_i\in \pf L$, for $i>k$,  and, since $\mathcal L$ is Wheeler,    there exists $h\geq k$ such that  $\alpha_j\equiv_{\pf L}\alpha_{j+1}$,  for all $j\geq h$.  Then $\alpha_j\equiv_{\pf {L\cup F}}\alpha_{j+1}$, because, for length reasons, no words in $F$ can have an $\alpha_j$ as prefix. 

  \item  Suppose $\mathcal L_1,  \mathcal L_2$ are Wheeler, and consider  a monotone sequence $(\alpha_i)_{i\geq 1}$  in $\pf {{\mathcal L}_1\cap {\mathcal L}_2}$.   Since \[\pf {{\mathcal L}_1\cap {\mathcal L}_2}\subseteq \pf {{\mathcal L}_1}\cap \pf  {{\mathcal L}_1}  \]  and $\mathcal L_1,  \mathcal L_2$ are Wheeler,  by  Lemma \ref{infinite_nonwh}  there exists $h$ such that
 $ \alpha_i\equiv_{\mathcal L_1}\alpha_j$, and  $\alpha_i\equiv_{\mathcal L_2}\alpha_j $
  both hold for $i,j>h$. 
It follows that 
  $\alpha_i  \equiv_{\mathcal L_1\cap \mathcal L_2} \alpha_{j}$, for all $i,j>h$, and so $\mathcal L_1\cap \mathcal L_2$ is Wheeler by Lemma \ref{infinite_nonwh}.
  
\item  Obvious, by considering a $WDFA$ recognizing $\mathcal L$, and  considering all states as final.
    \item  Obvious, by considering a $WDFA$ recognizing $\mathcal  L$, and   changing non final with final states.  

  \end{enumerate}
 \end{proof}

\begin{corollary}\label{oneletter} The only  Wheeler Languages on the one letter alphabet $\Sigma=\{a\}$
are the finite or co-finite ones. 
\end{corollary}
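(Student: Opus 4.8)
The plan is to prove both directions, one of which is already available. The ``if'' direction---that every finite or co-finite language over $\{a\}$ is Wheeler---is immediate from Lemma \ref{bool_pos}(1), so the whole content lies in the ``only if'' direction: a Wheeler language $\mathcal L \subseteq \{a\}^*$ must be finite or co-finite.

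First I would record the crucial simplification that on a one-letter alphabet the co-lexicographic order $\prec$ on $\{a\}^*$ is just the order by length, so $(\{a\}^*, \prec)$ is order-isomorphic to $(\nats, <)$ via $a^i \mapsto i$. In particular there is essentially a single strictly increasing sequence to consider, namely $\alpha_i = a^i$. I would also note that if $\mathcal L$ is infinite then $\pf L = \{a\}^*$, since every $a^n$ is a prefix of some longer word of $\mathcal L$; whereas if $\mathcal L$ is finite we are already done.

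Next, assuming $\mathcal L$ is Wheeler and infinite, I would apply Lemma \ref{infinite_nonwh} to the monotone sequence $(\alpha_i)_{i \geq 1} = (a^i)_{i\geq 1}$, which lies in $\pf L = \{a\}^*$. The lemma yields an $n$ such that $a^h \equiv_{\mathcal L} a^k$ for all $h,k \geq n$; equivalently $(a^h)^{-1}\mathcal L = (a^n)^{-1}\mathcal L$ for all $h \geq n$. The key observation is then that the membership $a^h \in \mathcal L$ is equivalent to $\epsilon \in (a^h)^{-1}\mathcal L$, which for $h \geq n$ is the fixed condition $\epsilon \in (a^n)^{-1}\mathcal L$, independent of $h$. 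Hence either all $a^h$ with $h \geq n$ belong to $\mathcal L$---making $\mathcal L$ co-finite---or none of them do---making $\mathcal L \subseteq \{a^0, \dots, a^{n-1}\}$ finite, contradicting infiniteness. Either way $\mathcal L$ is co-finite, completing the argument.

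I do not expect any serious obstacle: the one-letter alphabet collapses $\prec$ to a total order of type $\omega$, so Lemma \ref{infinite_nonwh} applies to the single canonical sequence and forces eventual constancy of the Myhill--Nerode class, which directly pins down the tail of $\mathcal L$. The only point demanding a little care is the translation between the condition $a^h \equiv_{\mathcal L} a^k$ (equality of right contexts) and the plain membership statement $a^h \in \mathcal L$, which goes through the identity $\epsilon \in \alpha^{-1}\mathcal L \Leftrightarrow \alpha \in \mathcal L$.
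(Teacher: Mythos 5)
Your proof is correct and is essentially the paper's argument in contrapositive-free form: both rest on Lemma \ref{infinite_nonwh} together with the observation that on a one-letter alphabet $\prec$ is the length order, so the single canonical monotone sequence $(a^i)_{i\geq 1}$ controls everything (the paper instead assumes $\mathcal L$ neither finite nor co-finite and builds an alternating monotone sequence witnessing non-Wheelerness, while you apply the lemma directly to $(a^i)$ and read off membership from the stabilized right context via $\epsilon\in\alpha^{-1}\mathcal L\Leftrightarrow\alpha\in\mathcal L$). No gaps.
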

\begin{proof}  
Suppose  $\mathcal L\subseteq \{a\}^*$ is  neither finite nor co-finite. Since  $\mathcal L $ is not finite and the alphabet contains only one letter, we have $\pf L=\Sigma^*$, and, 
since $\mathcal L$ is not co-finite, we have that  $\pf L\setminus {\mathcal L}=\Sigma^*\setminus {\mathcal L}$ is infinite. 
Let $\alpha=\alpha_1$ be a word in $\mathcal L$. Since there are only a finite number of words which are co-lexicographically smaller than $\alpha$, there exists $\alpha_2\in \pf L\setminus \mathcal L$ such that 
$\alpha_1\prec \alpha_2$.  Suppose we already have 
\[\alpha_1\prec\alpha_2\prec \ldots \prec \alpha_m,\]
 $m$ even, with $\alpha_i\in \mathcal  L$,  for odd $i$'s, and $\alpha_i \not \in \mathcal L$ for even $i$'s. 
Then, since $\mathcal L$ is infinite and there are only a finite number of words which are co-lexicographically smaller than $\alpha_m$, there exists $\alpha_{m+1}\in \mathcal L$ such that
$\alpha_m\prec \alpha_{m+1}$. 
Hence, 
we can define a monotone sequence which is not eventually constant modulo $\equiv_{\mathcal L}$, and $\mathcal L$ is not Wheeler by Lemma \ref{infinite_nonwh}.
 \end{proof}
 
 We now turn to boolean operation \emph{not}  preserving Whelerness:
 
  \begin{lemma} \label{neg}
 Wheeler Languages are not closed for:
 \begin{itemize}
 \item[-] Unions.
 \item [-] Complements.
 \end{itemize}
 \end{lemma}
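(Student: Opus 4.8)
The plan is to establish both non-closure results by exhibiting counterexamples, recycling the non-Wheeler language already analysed in Example \ref{nw}. For the union, I would take $\mathcal L_1 = ax^*b$ and $\mathcal L_2 = cx^*d$ over the alphabet $\{a,b,c,d,x\}$; for the complement, I would reduce to the union case via De Morgan's laws together with the closure under intersection proved in Lemma \ref{bool_pos}(3), so that no new automaton has to be built.

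First I would argue that each of $\mathcal L_1$ and $\mathcal L_2$ is Wheeler, using the monotone-sequence criterion of Lemma \ref{infinite_nonwh}. In $(\pf{L_1}, \prec)$ the prefixes split co-lexicographically into the initial $\epsilon$, the single word $a$, the infinite block $\{ax^ib : i\geq 0\}$, and the infinite block $\{ax^i : i\geq 1\}$; every word in the first block has right context $\{\epsilon\}$ while $a$ and every word in the second block have right context $x^*b$. Hence any monotone sequence eventually lands inside one infinite block and becomes constant modulo $\equiv_{\mathcal L_1}$, so $\mathcal L_1$ is Wheeler; the argument for $\mathcal L_2$ is identical. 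On the other hand $\mathcal L_1 \cup \mathcal L_2 = ax^*b \mid cx^*d$ is exactly the language shown not to be Wheeler in Example \ref{nw} (witnessed there by the monotone sequence $\alpha_i = ax^i$ for odd $i$ and $\alpha_i = cx^i$ for even $i$). This proves that Wheeler languages are not closed under union.

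For complements I would proceed by contradiction. Suppose the class of Wheeler languages were closed under complement. Then, with $\mathcal L_1, \mathcal L_2$ as above, both $\overline{\mathcal L_1}$ and $\overline{\mathcal L_2}$ would be Wheeler; by closure under intersection (Lemma \ref{bool_pos}(3)) the language $\overline{\mathcal L_1} \cap \overline{\mathcal L_2}$ would be Wheeler; and applying the assumed closure under complement once more, its complement $\overline{\overline{\mathcal L_1}\cap\overline{\mathcal L_2}} = \mathcal L_1 \cup \mathcal L_2$ would be Wheeler, contradicting the previous paragraph. Therefore Wheeler languages are not closed under complement. Note that at least two letters are genuinely required here, since over a unary alphabet Corollary \ref{oneletter} shows the Wheeler languages are precisely the finite and co-finite ones, a class that \emph{is} closed under complement.

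The only real content lies in checking that the two ``branch'' languages $ax^*b$ and $cx^*d$ are themselves Wheeler; once that is in place, the union statement falls out of the already-established non-Wheelerness of their union, and the complement statement requires no new construction at all, following purely from the set-theoretic De Morgan reduction plus the intersection-closure of Lemma \ref{bool_pos}. I therefore expect no substantial obstacle: the Wheelerness of the single branches is a routine application of the monotone-sequence criterion, and the complement case is a one-line logical consequence of the union case.
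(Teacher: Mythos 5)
Your proof is correct. The union half is essentially identical to the paper's: both use $\mathcal L_1=ax^*b$ and $\mathcal L_2=cx^*d$, whose union is the non-Wheeler language of Example \ref{nw}; your explicit verification that each branch is Wheeler via the monotone-sequence criterion of Lemma \ref{infinite_nonwh} is sound (the key point being that all prefixes ending in $b$ co-lexicographically precede all prefixes ending in $x$, so a monotone sequence cannot oscillate between the two infinite $\equiv_{\mathcal L_1}$-classes). The complement half, however, takes a genuinely different route. The paper exhibits a concrete witness: $\mathcal L=b^*$ over $\Sigma=\{a,b\}$ is Wheeler, but $\overline{\mathcal L}$ (strings containing an $a$) is shown non-Wheeler directly via the monotone sequence $b^1, ab^2, b^3, ab^4,\dots$ and Lemma \ref{infinite_nonwh}. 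You instead argue by contradiction through De Morgan: closure under complement together with closure under intersection (Lemma \ref{bool_pos}(3)) would force closure under union, contradicting the first half. Your reduction is logically valid and costs nothing beyond what is already proved, but it is non-constructive --- it does not actually produce a Wheeler language whose complement fails to be Wheeler, only the existence of one among $\mathcal L_1$, $\mathcal L_2$, and $\overline{\mathcal L_1}\cap\overline{\mathcal L_2}$. The paper's explicit example is therefore more informative (and lives over a binary alphabet), while yours is shorter and reuses existing machinery; your closing remark invoking Corollary \ref{oneletter} to explain why a unary alphabet cannot work is a nice sanity check that the paper does not make.
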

 \begin{proof}
 {\bf Unions}. ~~The languages $\mathcal L_1=ax^*b, ~ \mathcal L_2=cx^*d$ are easily seen to be Wheeler, but their union is not  (see  Example \ref{nw}).
 
 \medskip
 
 {\bf Complements}. ~~ Let $\Sigma=\{a,b\}$ and $\mathcal L= b^*$. Then $\mathcal L$ is easily seen to be Wheeler, but its complement 
 \[\overline {\mathcal L}=  \{\alpha\in \Sigma^*~:~ \alpha~ \text{contains at least an occurrence of the letter}~ a\}\]
 is not Wheeler: consider the monotone sequence ${\text{\em Pref}(\overline {\mathcal L})}$ given by 
 \[
\alpha_i=
\begin{cases}
b^i  &\text {if $i$ is odd;}\\
ab^i &\text{if $i$ is even.}
\end{cases}
\]

If $i$ is odd,    $\alpha_i=b^i\not \in \overline {\mathcal L}$, while $\alpha_{i+1}=ab^{i+1}\in\overline{ \mathcal L}$, so that $\alpha_i  \not  \equiv_{\overline{\mathcal L}} \alpha_{i+1}$, and $\overline {\mathcal L}$ is not Wheeler by Lemma \ref{infinite_nonwh}.

\medskip

 \end{proof}

 \subsection{Concatenation}
 
In general, the concatenation of two Wheeler languages is not necessarily Wheeler, as the following example shows:

\begin{example}The languages $\mathcal L_1=b^*a,~\mathcal L_2=b^+a$ are easily seen to be Wheeler, but their concatenation $\mathcal L= \mathcal L_1 \cdot \mathcal L_2 $ is not:  
  consider the monotone sequence in $(\pf L, \preceq)$ given by
  
   \[
\alpha_i=
\begin{cases}
ab^ia &\text {if $i$ is odd;}\\
b^{i}a &\text{if $i$ is even.}
\end{cases}
\]

 If $i$ is odd, we have  $\alpha_i\in \mathcal L$,  while $\alpha_{i+1}\not \in  \mathcal L$. Hence, $\alpha_i\not \equiv_{\mathcal L} \alpha_{i+1}$ for infinite $i$'s, and $\mathcal L$ is not Wheeler. 
\end{example}

On the positive side,  we   prove that the \emph{right} concatenation of a Wheeler language with a finite set, is Wheeler.  This is not true if consider \emph{left} concatenation, even if the finite set is a single-letter word, as the following example shows.

\begin{example}The language  $\mathcal L =\{a^i:i \geq 1\}\cup \{ba^ib: i\geq 1\}$  is  easily seen to be Wheeler but  its concatenation on the left with the letter $c$  is not. 
Indeed  $c\cdot \mathcal L =\{ca^i:i \geq 1\}\cup \{cba^ib\}$ and there exists a monotone sequence in ${\text{\emph{Pref}}(c\cdot \mathcal L)}$ which is not eventually constant modulo $c\cdot \mathcal L$:
  \[ca\succ cba\succ  caa\succ\ldots\succ cba^i \succ ca^{i+1}\succ\ldots\]
From Lemma  \ref{infinite_nonwh},  it follows that $c\cdot \mathcal L$ is not Wheeler. 
\end{example}

\begin{lemma} If $\mathcal L$ is Wheeler and $F$ is a finite set, then  ${\mathcal L}\cdot F$  is  Wheeler.
\end{lemma}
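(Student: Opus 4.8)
The plan is to use the monotone-sequence characterisation of Wheelerness (Lemma~\ref{infinite_nonwh}), noting first that $\mathcal L\cdot F$ is regular (concatenation of a regular language with a finite set). I must then show that every monotone sequence $(\gamma_i)_{i\ge 1}$ in $(\pf{L\cdot F},\prec)$ is eventually constant modulo $\equiv_{\mathcal L\cdot F}$. Set $M=\max\{|w| : w\in F\}$. After collapsing equal consecutive terms (if the sequence is eventually constant we are trivially done) I may assume $(\gamma_i)$ is strictly monotone; since $\Sigma$ is finite there are only finitely many words of length $\le M$, so after discarding a finite prefix of the sequence I may assume $|\gamma_i|>M$ for all $i$.

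First I would isolate the tail. Because the co-lexicographic order compares words from the right, along a strictly monotone sequence whose terms all have length $>M$ the length-$M$ suffix is weakly monotone inside the finite set $\Sigma^M$, hence eventually constant; I would prove this small fact and then discard a further finite prefix so that every $\gamma_i$ factors as $\gamma_i=\delta_i s^\ast$ with a \emph{fixed} $s^\ast$ of length $M$. By the cancellation property of $\prec$ (Remark~\ref{co-lex-pro}, the equivalence $\xi\prec\zeta\Leftrightarrow\xi\rho\prec\zeta\rho$), removing the common suffix $s^\ast$ preserves strict monotonicity, so $(\delta_i)$ is itself strictly monotone in the same direction.

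The crucial step is to show that each $\delta_i$ lies in $\pf L$, so that $(\delta_i)$ becomes a monotone sequence in $(\pf L,\prec)$. Since $\gamma_i\in\pf{L\cdot F}$ there are $\mu,\alpha,\beta$ with $\gamma_i\mu=\alpha\beta$, $\alpha\in\mathcal L$, $\beta\in F$; as $|\beta|\le M=|s^\ast|$, the cut cannot fall strictly inside $\delta_i$ (otherwise $\beta$ would contain a nonempty suffix of $\delta_i$ together with all of $s^\ast$, forcing $|\beta|>M$). Hence $\delta_i$ is a prefix of $\alpha\in\mathcal L$ and $\delta_i\in\pf L$. Applying Lemma~\ref{infinite_nonwh} to the Wheeler language $\mathcal L$ then yields an $N$ with $\delta_h\equiv_{\mathcal L}\delta_k$ — that is $\delta_h^{-1}\mathcal L=\delta_k^{-1}\mathcal L=:C^\ast$ — for all $h,k\ge N$.

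Finally I would transfer this stabilisation to $\mathcal L\cdot F$. For an arbitrary $\eta$, any factorisation $\gamma_i\eta=\delta_i s^\ast\eta=\alpha\beta$ with $\beta\in F$ again forces $|\alpha|\ge|\delta_i|$ (else $|\beta|>M$), so $\alpha=\delta_i\rho$ with $\rho$ a prefix of $s^\ast\eta$, and the requirement $\alpha\in\mathcal L$ reads $\rho\in\delta_i^{-1}\mathcal L$. Therefore
\[
\gamma_i\eta\in\mathcal L\cdot F\ \Longleftrightarrow\ \big(\exists\,\rho,\beta\big)\ \big(\,s^\ast\eta=\rho\beta\ \wedge\ \rho\in\delta_i^{-1}\mathcal L\ \wedge\ \beta\in F\,\big),
\]
and for $i\ge N$ the right-hand side depends only on $C^\ast$, $s^\ast$ and $\eta$, not on $i$. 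Hence $\gamma_h^{-1}(\mathcal L\cdot F)=\gamma_k^{-1}(\mathcal L\cdot F)$, i.e.\ $\gamma_h\equiv_{\mathcal L\cdot F}\gamma_k$, for all $h,k\ge N$, so the sequence is eventually constant modulo $\equiv_{\mathcal L\cdot F}$ and Lemma~\ref{infinite_nonwh} gives the claim. I expect the main obstacle to be exactly the verification that $\delta_i\in\pf L$: one must rule out factorisations of $\gamma_i$ that cut inside the body $\delta_i$. It is this fixed-suffix bookkeeping — rather than any attempt to control the contexts $\gamma_i^{-1}\mathcal L$ directly, which may genuinely oscillate as $\gamma_i$ enters and leaves $\pf L$ — that makes the argument work.
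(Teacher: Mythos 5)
Your proof is correct and follows essentially the same route as the paper's: both fix a common suffix of length tied to $\max\{|w|:w\in F\}$ so that no $\mathcal L$-vs-$F$ cut can intrude into the stripped prefix, reduce to a monotone sequence in $\pf L$, apply Lemma~\ref{infinite_nonwh} to $\mathcal L$, and transfer the stabilised right contexts back to $\mathcal L\cdot F$. The only (cosmetic) difference is that you peel a single suffix of length $M$ where the paper peels a $2n$-suffix in two halves $\gamma_1\gamma_2$; your single-layer bookkeeping is slightly leaner but the argument is the same.
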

  \begin{proof}
  Suppose $\mathcal L$ is Wheeler, $F$ is a finite set,  and $n=max\{|w|: w\in F\}$ is the maximum of all lengths of words in $F$.   If $(\alpha_i)_{i\geq 1}$ is a  monotone  sequence    in $\pf {L\cdot F}$,  then by possibly  erasing an initial finite sequence we may suppose w.l.o.g. that $|\alpha_i|\geq 2n$, 
 and  all $\alpha_i$ end with the same $2n$-suffix  $\gamma_1\gamma_2$, with $|\gamma_1|=|\gamma_2|=n$. Let $\alpha_i', \alpha_i''\in \pf L$ be such that 
  \[\alpha_i=\alpha_i''\gamma_1\gamma_2=\alpha_i'\gamma_2.\]
 Then  both $(\alpha_i')_{i\geq 1}$ and  $(\alpha_i'')_{i\geq 1}$  are   monotone  sequences    in $\pf {L}$ and, since $\mathcal L$ is Wheeler,  there exists $k$ such  that $\alpha_i'\equiv_{\mathcal L} \alpha_j'$ and $\alpha_i''\equiv_{\mathcal L} \alpha_j''$, for all $i,j\geq k$. We  next  prove that, for all $i,j\geq k$, we also have $\alpha_i'\equiv_{\mathcal L\cdot F} \alpha_j'$, from which $\alpha_i\equiv_{\mathcal L\cdot F} \alpha_j$ follows. We must prove that for all $ \beta $, $ \alpha'_{i}\beta \in \mathcal L\cdot F \Leftrightarrow \alpha'_{j}\beta \in \mathcal L\cdot F $.  
 Suppose   $\alpha_i'\beta \in \mathcal L\cdot F$. Then \[\alpha_i'\beta = \alpha_i'' \cdot \beta' \cdot f,\] with  $\alpha_i'' \cdot \beta' \in \mathcal L$ and $f\in F$. 
From $\alpha_i''\equiv_{\mathcal L} \alpha_j''$ it follows  $\alpha_j'' \cdot \beta' \in \mathcal L$,  so that 
  \[\alpha_j'\beta = \alpha_j'' \cdot \beta' \cdot f\in \mathcal L\cdot F\] 
  Summarizing, we proved that all elements of  the monotone sequence  $(\alpha_i)_{i\geq 1}$ end eventually in the same $\equiv_{\mathcal L\cdot F}$-class, hence $\mathcal L\cdot F$ is Wheeler. 

  \end{proof}
 
 \subsection{Kleene Star} 
In general, Wheeler languages are not closed for Kleene star, as the following example shows.

\begin{example}
 The language $\mathcal L=\{aa\}$ is Wheeler (as any finite language), but  $\mathcal L^*= \{a^{2i+2}: i\geq 0\}$ is not Wheeler (see Example \ref{nw}). 
\end{example}

On the other hand,  we can characterise    which words $\alpha$ have a Kleene star $\alpha^*$ which is Wheeler, and, more generally, when a regular language  of the form $\alpha_1\alpha^*\alpha_2$ is Wheeler.

\begin{definition}
We say that $ \alpha \in \Sigma^* $ is \emph{primitive} if there exists no $ \beta \neq \epsilon $ and $ i>1 $, such that $ \alpha = \beta^{i} $.  
\end{definition}

Primitive words are important for Wheeler automata and languages as  seen in the following results. 

\medskip

 \begin{lemma} If $\mathcal A=(Q,s,\delta, F,<)$ is a WDFA  and  $\alpha $ is the label of a simple cycle in $\mathcal A$, then $\alpha$ is primitive.
 \end{lemma}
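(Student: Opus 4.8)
The plan is to argue by contradiction, exploiting two facts: a simple cycle visits pairwise distinct states, and in a WDFA ``reading a fixed string'' is an order-preserving operation on states. First I would fix notation: write the simple cycle as $u_0, u_1, \ldots, u_k = u_0$ with $u_0, \ldots, u_{k-1}$ pairwise distinct, so that $\alpha = \lambda(u_1)\cdots\lambda(u_k)$ and $|\alpha| = k$. Suppose, for contradiction, that $\alpha$ is not primitive, say $\alpha = \beta^{i}$ with $\beta \neq \epsilon$ and $i > 1$, and set $m = |\beta| = k/i < k$. Since the prefix of the cycle labeled by $\beta$ leads $u_0$ to $u_m$ and the automaton is deterministic, it follows that $\delta(u_{jm}, \beta) = u_{(j+1)m}$ for every $j$, where indices are read so that $u_{im} = u_k = u_0$. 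Thus reading $\beta$ cyclically shifts the states $u_0, u_m, u_{2m}, \ldots, u_{(i-1)m}$.

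The real engine of the argument is a monotonicity property that I would record as a small claim: for any two states $x < y$ and any string $\gamma$ such that both $\delta(x,\gamma)$ and $\delta(y,\gamma)$ are defined, one has $\delta(x,\gamma) \leq \delta(y,\gamma)$. This follows by induction on $|\gamma|$. The base case $\gamma = \epsilon$ is immediate; for $\gamma = \gamma' c$, the inductive hypothesis gives $\delta(x,\gamma') \leq \delta(y,\gamma')$, and if these two intermediate states coincide then determinism forces equality after reading $c$, while if they are strictly ordered then Wheeler property (ii) (applied with $a_1 = a_2 = c$) yields $\delta(x,\gamma' c) \leq \delta(y,\gamma' c)$.

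Now I would consider the set $S = \{u_0, u_m, \ldots, u_{(i-1)m}\}$, which has exactly $i$ elements (they are distinct because the cycle is simple and $0, m, \ldots, (i-1)m$ are distinct indices in $\{0, \ldots, k-1\}$), linearly ordered by $<$. Let $f : S \to S$ be the map $f(x) = \delta(x, \beta)$. By the first paragraph, $f$ is a well-defined cyclic permutation of $S$ of order $i > 1$; by the claim, $f$ is weakly order-preserving on $S$, and since $f$ is injective on $S$ distinct elements have distinct images, so in fact $x < y$ implies $f(x) < f(y)$, i.e.\ $f$ is \emph{strictly} order-preserving. But a strictly order-preserving bijection of a finite linear order onto itself must fix its minimum, and then (inductively) is the identity; a cyclic permutation of $i > 1$ elements has no fixed point. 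This contradiction shows that $\alpha$ must be primitive.

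The main obstacle is isolating and proving the order-preservation of reading a fixed string; this is essentially an iteration of Wheeler property (ii), with determinism handling the degenerate case of coinciding intermediate states. Once that is in hand, the conclusion rests on the clean combinatorial observation that a monotone self-bijection of a finite chain is the identity, so no genuinely difficult computation is needed; the rest is bookkeeping about periods and about the determinism of the cycle.
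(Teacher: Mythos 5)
Your proof is correct, but it takes a genuinely different route from the paper's. The paper argues on the \emph{string} side: writing $\alpha=\beta^i$ and picking $\gamma$ with $\delta(s,\gamma)=u$, it observes that $(\gamma\beta^h)_{h}$ is co-lexicographically monotone, so $\gamma\beta^{i}$ and $\gamma\beta^{2i}$ (both in $I_u$) sandwich $\gamma\beta^{i+1}$, and the convexity of $I_u$ in $(\pff{L},\prec)$ (Lemma \ref{convex_sets}) forces $\delta(u,\beta)=u$, contradicting simplicity of the cycle. You instead argue on the \emph{state} side: you establish from scratch, by induction on $|\gamma|$ using Wheeler (ii) and determinism, that $x<y$ implies $\delta(x,\gamma)\leq\delta(y,\gamma)$, and then observe that $\delta(\cdot,\beta)$ restricted to $\{u_0,u_m,\dots,u_{(i-1)m}\}$ would be a strictly order-preserving bijection of a finite chain that is simultaneously an $i$-cycle with $i>1$ --- impossible, since a monotone self-bijection of a finite chain is the identity. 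Both arguments are sound; yours is self-contained (it does not invoke the convexity machinery of Lemma \ref{convex_sets} or the monotone-sequence lemmas) and isolates a reusable state-level monotonicity fact that is the natural dual of Corollary \ref{monotone}, at the cost of being somewhat longer than the paper's two-line appeal to convexity. One small point worth keeping explicit in a final write-up is the justification that strict order preservation follows from weak order preservation plus injectivity of the restriction of $\delta(\cdot,\beta)$ to $S$; you do state this, and it is exactly what rules out the degenerate case where intermediate states collapse.
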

 \begin{proof}
  Suppose, by way of a contradiction, that there exists a  simple cycle labelled by $\alpha$ and there exists  $ i>1 $, such that $ \alpha = \beta^{i} $.  Then there exists $n<m<r$ such that   $\beta^n, \beta^r$ are both  labels of   cycles starting from the same  vertex $u$, while $\delta(u,\beta^m)\neq u$.  Let $\gamma$ be a word such that $\delta(s,\gamma)=u$. Consider the sequence 
  $(\gamma\beta^h)_{h\in \mathbb N}$, and  note that it  is monotone: if $\gamma\prec \gamma \beta$, then $\gamma\beta^k \prec \gamma\beta^{k+1}$ holds for any $k$, and similarly by transitivity of $\prec$ we obtain that $\gamma\beta^k \prec \gamma\beta^h$ holds for any $h>k$. Thus, the sequence is monotonically increasing. Conversely, if $\gamma \succ \gamma\beta$ then the sequence is monotonically decreasing.
  It follows that either $\gamma\beta^n\prec \gamma\beta^m\prec  \gamma\beta^r$, or $\gamma\beta^n\succ \gamma\beta^m \succ \gamma\beta^r$.
  Since    
$\gamma\beta^n, \gamma\beta^r \in I_u$,  by Lemma  \ref{convex_sets} we should also have $\gamma\beta^m\in I_q$. 
\end{proof}

We shall use the following:

 \begin{notation}
 $ \alpha'  \vdash \alpha $ stands for $ \alpha' $ is a prefix of $ \alpha $   and
$ \alpha'  \dashv \alpha $ stands for $ \alpha' $ is a  suffix of $ \alpha $.
 \end{notation}

\begin{lemma} Let $\alpha_1, \alpha, \alpha_2\in \Sigma^*$. Then 
\[ \alpha_1  \alpha^* \alpha_2~ \hbox{ is Wheeler}~ \Leftrightarrow \alpha~ \hbox{is primitive}.\]
\end{lemma}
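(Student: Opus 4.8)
The plan is to characterise Wheelerness through Lemma~\ref{infinite_nonwh} (every monotone sequence in $(\pf L,\prec)$ must stabilise modulo $\equiv_{\mathcal L}$) for the ``only if'' implication, and through the minimal-DFA criterion of Theorem~\ref{decidability} for the ``if'' implication, isolating primitivity into a single classical fact about rotations of primitive words.

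For the contrapositive of $\Rightarrow$, assume $\alpha$ is not primitive, say $\alpha=\beta^{d}$ with $\beta\neq\epsilon$ and $d\geq 2$, so that $\mathcal L=\alpha_1\alpha^{*}\alpha_2=\{\alpha_1\beta^{dk}\alpha_2 : k\geq 0\}$. I would work with the sequence $\gamma_m=\alpha_1\beta^{m}$, $m\geq 0$, all of which lie in $\pf L$ (each extends to some $\alpha_1\beta^{dk}\alpha_2\in\mathcal L$). First I would show $(\gamma_m)$ is strictly monotone: cancelling the common suffix $\beta^{m}$ and using property~(1) of Remark~\ref{co-lex-pro}, the comparison of $\gamma_m$ and $\gamma_{m+1}$ reduces to the comparison of $\alpha_1$ and $\alpha_1\beta$, whose sign is independent of $m$ and nontrivial because $\beta\neq\epsilon$. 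Next I would show consecutive terms are $\equiv_{\mathcal L}$-inequivalent: choosing $k$ with $dk>m+1$ and $w=\beta^{dk-m}\alpha_2$ gives $\gamma_m w=\alpha_1\beta^{dk}\alpha_2\in\mathcal L$, whereas $\gamma_{m+1}w=\alpha_1\beta^{dk+1}\alpha_2\notin\mathcal L$, since a length count shows that membership would force $dk+1\equiv 0 \pmod d$, which is impossible. Hence the monotone sequence is not eventually constant modulo $\equiv_{\mathcal L}$, and $\mathcal L$ is not Wheeler by Lemma~\ref{infinite_nonwh}.

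For the $\Leftarrow$ direction, assume $\alpha$ primitive and let $\mathcal A$ be the minimal DFA of $\mathcal L$. By Theorem~\ref{decidability} it suffices to rule out two \emph{distinct} states $u\neq v$ admitting cycles labelled by one and the same word $\gamma$. The core is the combinatorial claim that every word labelling a cycle of $\mathcal A$ is a power of a conjugate (rotation) of $\alpha$: if $\gamma$ labels a cycle at a state reached by $\mu$ and co-reachable via $\nu$, then $\mu\gamma^{i}\nu\in\mathcal L=\alpha_1\alpha^{*}\alpha_2$ for all $i$, and a pumping/length analysis forces $\gamma$ to be built out of $\alpha$-periods. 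Granting this, if $\gamma$ labelled cycles at both $u$ and $v$ it would be a common power of the rotations $r_u,r_v$ of $\alpha$; by Lyndon--Sch\"utzenberger two words with a common power are powers of a single word, and since rotations of a primitive word are themselves primitive this yields $r_u=r_v$, whence $u=v$. Thus no witness of non-Wheelerness exists and $\mathcal L$ is Wheeler.

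The main obstacle is exactly this last combinatorial claim: pinning down the cycle structure of the minimal DFA of $\alpha_1\alpha^{*}\alpha_2$ and proving that every cycle label is a power of a rotation of $\alpha$, where primitivity is what forbids a ``shorter period'' cycle and keeps distinct rotations distinct and primitive. A more self-contained route avoiding the DFA would apply Lemma~\ref{infinite_nonwh} directly: an infinite monotone sequence eventually consists of words $\alpha_1\alpha^{j}\rho$ with $j\to\infty$ and $\rho$ a proper prefix of $\alpha$ (the \emph{phase}); for fixed $\rho$ and large $j$ the right context $(\alpha_1\alpha^{j}\rho)^{-1}\mathcal L$ depends only on $\rho$, while for distinct phases the co-lex comparison of $\alpha_1\alpha^{j}\rho$ and $\alpha_1\alpha^{j'}\rho'$ stabilises (for large $j,j'$) to the comparison of the left-infinite periodic words ${}^{\omega}\alpha\,\rho$ and ${}^{\omega}\alpha\,\rho'$. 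Primitivity guarantees these left-infinite words are pairwise distinct, so distinct phases occupy disjoint, consistently ordered co-lex intervals; a monotone sequence then traverses the finitely many phases monotonically, settles in one, and there becomes $\equiv_{\mathcal L}$-constant. In either route the delicate point is the same, namely controlling the overlaps of $\alpha$ with the fixed words $\alpha_1,\alpha_2$ so that ``phase'' and ``cycle label'' are well defined, which is precisely what primitivity of $\alpha$ secures.
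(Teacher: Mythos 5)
Your right-to-left reduction ($\Rightarrow$, via the contrapositive) is correct and is essentially the paper's argument: the same monotone sequence $\alpha_1\beta^{m}$, with the length count $dk+1\not\equiv 0 \pmod d$ making consecutive terms $\equiv_{\mathcal L}$-inequivalent (a detail the paper leaves implicit and you rightly spell out).

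The $\Leftarrow$ direction, however, has a genuine gap, and it is exactly the one you flag yourself: both of your routes rest on an unproved combinatorial claim --- in route one, that every cycle label of the minimal DFA of $\alpha_1\alpha^{*}\alpha_2$ is a power of a rotation of $\alpha$ \emph{and} that two distinct states cannot carry cycles with the same label (your ``whence $u=v$'' needs an argument: $r_u=r_v$ only identifies the phases, not the states, so you must still rule out two distinct same-phase states both lying on cycles); in route two, that the ``phase'' decomposition of prefixes is well defined despite possible overlaps of $\alpha$ with $\alpha_1$ and $\alpha_2$. Since you explicitly name this as ``the main obstacle'' and do not resolve it, the proof is incomplete as written. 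The paper's device for dissolving precisely this obstacle is worth noting: it first proves that $\alpha_1\alpha^{*}$ is Wheeler (no $\alpha_2$ in sight), by taking a monotone non-stabilising sequence in $\mathrm{Pref}(\alpha_1\alpha^{*})$, normalising its elements to the form $\alpha_1\alpha^{h_i}\beta_i$ with $\beta_i$ a prefix of $\alpha$ and all terms sharing the same last $3|\alpha|$ characters, and extracting from two distinct phases $\beta\neq\beta'$ a factorisation $\alpha=\delta\gamma=\gamma\delta$ into proper prefixes/suffixes, contradicting primitivity by Lyndon--Sch\"utzenberger. It then appends $\alpha_2$ for free using the already-established closure of Wheeler languages under \emph{right} concatenation with a finite set. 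That second step is what makes the ``overlap with $\alpha_2$'' issue vanish entirely; if you adopt it, your route-two sketch for $\alpha_1\alpha^{*}$ alone (where prefixes really are $\alpha_1\alpha^{j}\rho$ with $\rho$ a proper prefix of $\alpha$) can be completed along the same lines as the paper's.
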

\begin{proof}
($\Rightarrow$) Suppose $\alpha$ is not primitive, say $\alpha =\beta^k$ with $k>1$, $\beta\neq \epsilon$, and consider the sequence
\[\alpha_1\beta^k , \alpha_1\beta^{k+1},~  \alpha_1\beta^{2k} , \alpha_1\beta^{2k+1},~\alpha_1\beta^{3k}, \alpha_1\beta^{3k+1}, \ldots\]
in $\pf {\alpha_1  \alpha^* \alpha_2}$. As in the previous lemma, we can prove that the sequence is monotone: if $\alpha_1\prec \alpha_1\beta$,   the sequence is monotonically increasing, while, if $\alpha_1 \succ \alpha_1\beta$ then the sequence is monotonically decreasing.
However, this sequence does  not become eventually constant modulo $\equiv_{\alpha_1  \alpha^* \alpha_2}$, because, for all $n$, 
 \[ \alpha_1\beta^{nk}\alpha_2\in \alpha_1  \alpha^* \alpha_2 ~\hbox{ while }~~ \alpha_1\beta^{nk+1}\alpha_2\not  \in \alpha_1  \alpha^* \alpha_2\]
 From the above and Lemma \ref{infinite_nonwh} it follows that $ \alpha_1  \alpha^* \alpha_2$ is not Wheeler. 
 
 \medskip
 
 ($\Leftarrow$)  If $\alpha$ is primitive we first show  that   $ \alpha_1  \alpha^*  $ is   Wheeler. Suppose not. Then there is a monotone sequence in $\pf {\alpha_1  \alpha^* }$ which does not become eventually constant modulo $\equiv_{\alpha_1\alpha^*}$. By erasing an opportune prefix  of the sequence we may suppose that it has  the form
 \[\alpha_1\alpha^{h_1}\beta_1,~ \alpha_1\alpha^{h_2}\beta_2,~ \alpha_1\alpha^{h_3}\beta_3, \ldots\]
 with $\beta_i\vdash \alpha$, for all $i$, and  that  all elements of the sequence end with the same $3|\alpha|$ characters.
  Notice that, since the sequence is not eventually constant modulo $\equiv_{\alpha_1\alpha^*}$, there  must be  infinite $i$'s such that $\beta_i\neq \beta_{i+1}$. 
 Hence,   there are two different $\alpha$-prefixes, $\beta, \beta'$ such that  $\alpha^3 \beta$  and $\alpha^3 \beta'$  end with the same $3|\alpha|$-characters, which implies that there exists an $\alpha$-prefix $\gamma$ such that $\alpha$ and $\alpha\gamma$ end with the same $|\alpha|$-characters; but then  there exists $\delta$ such that 
 $\alpha=\delta \gamma$, where $\delta, \gamma$  are both proper prefixes and proper suffixes of $\alpha$. This implies $\alpha=\delta\gamma=\gamma\delta$ which in turn implies (see (\cite{lyndon1962})) that $\alpha$ is not primitive, a contradiction. 
 
 Hence,   If $\alpha$ is primitive  then  $ \alpha_1  \alpha^*  $ is   Wheeler, and $\alpha_1\alpha^*\alpha_2$ is also Wheeler, being a concatenation of a Wheeler language with a finite set on the right. 
\end{proof}

 \subsection{Factors, Suffixes, and Inverses}
 Wheeler Languages are not closed for factors, suffixes, or inverses:
  \begin{example} \label{neg2}
{\bf Factors and Suffixes}. The language $\mathcal L_1=ax^*b~|~zx^*d$ is   Wheeler (see Example \ref{exw}), but $\mathcal L=\fc {L_1}$ is not:
 consider the monotone sequence in $(\pf L, \preceq)$ given by
  
   \[
\alpha_i=
\begin{cases}
x^i &\text {if $i$ is odd;}\\
 ax^i&\text{if $i$ is even.}
\end{cases}
\]

 if $i$ is odd,  $\alpha_i\not \equiv_{\mathcal L} \alpha_{i+1}$, because $\alpha_id=x^id\in \mathcal L$ whereas $\alpha_{i+1}d=ax^{i+1}d \not \in \mathcal L$; 
hence 
$\mathcal L=\fc {L_1}$ is not Wheeler by Lemma \ref{infinite_nonwh}. 
Similarly, $\sff L$ is not Wheeler: considering the same  monotone  sequence    above  we have $\alpha_i \in {\text{\em Pref}( {\sff L})}$ and $\alpha_i\not \equiv_{\sff L} \alpha_{i+1}$,  for odd $i$'s, because $\alpha_id=x^id\in \sff L$ whereas $\alpha_{i+1}d=ax^{i+1}d \not \in \sff L$.

\medskip

{\bf Inverses}. Suppose, by way of a contradiction, that Wheeler languages were closed under inverses, and consider again the Wheeler language $\mathcal L=ax^*b~|~zx^*d$; then, by Lemma \ref{bool_pos},   ${\text{\em Pref}(  {  {\mathcal L}^{-1}})}^{-1}=\sff L$  would be Wheeler, while we proved the  opposite  in the previous point.
\end{example}

\subsection{Morphisms}
 
 We now consider preservation under inverse image of monoid morphisms.     Wheeler Languages are not closed in general under   inverse images of   morphisms. E.g.   consider \[\Sigma=\{a,b,c,d,x\},~~ \Sigma'=\{a,b,d,x,z\}, ~~ \mathcal L=ax^*b~|~zx^*d\subseteq \Sigma'^*,\] and the morphism  $\phi$  defined by 
  $\phi(c)=z$  and the identity on the other letters. Then $ \mathcal L$ is   Wheeler (see Example \ref{exw}), while $ \phi^{-1}(\mathcal L)=ax^*b~|~cx^*d$ is not Wheeler (see Example  \ref{nw}). We next prove that Wheeler languages are  closed under inverse images of co-lex monoid morphisms:

 \begin{definition}   Let $\Sigma, \Sigma'$ be two finite alphabet.
 A  {\em co-lex morphism}  between $(\Sigma^*,\cdot,\preceq)$,   $(\Sigma'^*, \cdot,\preceq)$ is a monoid morphism $\phi:\Sigma^*\rightarrow \Sigma'^*$  such that 
 \[\alpha\preceq \alpha' \Rightarrow \phi(\alpha) \preceq \phi(\alpha')\]
 \end{definition}

\begin{lemma}Suppose $\Sigma, \Sigma'$ are finite alphabets and $\phi: (\Sigma^*,\preceq) \rightarrow (\Sigma'^*,\preceq)$ is a co-lex morphism.  If $\mathcal L\subseteq \Sigma'^*$ is a Wheeler language, then  $\phi^{-1}(\mathcal L)\subseteq \Sigma^*$ is a Wheeler language.
\end{lemma}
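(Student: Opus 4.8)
The plan is to use the automata-free characterisation of Wheelerness provided by Lemma \ref{infinite_nonwh}: a regular language is Wheeler if and only if every monotone sequence in its prefix set becomes eventually constant modulo the Myhill--Nerode equivalence. First I would record that $\phi^{-1}(\mathcal L)$ is regular, since regular languages are closed under inverse monoid morphisms and $\mathcal L$, being Wheeler, is regular; this is what makes Lemma \ref{infinite_nonwh} applicable to $\phi^{-1}(\mathcal L)$.

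Next, fix an arbitrary monotone sequence $(\alpha_i)_{i\geq 1}$ in $(\text{\em Pref}(\phi^{-1}(\mathcal L)), \prec)$; say it is increasing, the decreasing case being symmetric. I would push it forward through $\phi$ and argue that $(\phi(\alpha_i))_{i\geq 1}$ is a monotone (increasing) sequence in $(\text{\em Pref}(\mathcal L), \prec)$. Monotonicity is immediate from the co-lex property $\alpha\preceq\alpha'\Rightarrow\phi(\alpha)\preceq\phi(\alpha')$. The fact that each $\phi(\alpha_i)$ lies in $\text{\em Pref}(\mathcal L)$ follows because $\alpha_i\in\text{\em Pref}(\phi^{-1}(\mathcal L))$ means $\alpha_i\beta\in\phi^{-1}(\mathcal L)$ for some $\beta$, whence $\phi(\alpha_i)\phi(\beta)=\phi(\alpha_i\beta)\in\mathcal L$ and so $\phi(\alpha_i)\in\text{\em Pref}(\mathcal L)$.

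Now I would invoke the hypothesis that $\mathcal L$ is Wheeler: applying Lemma \ref{infinite_nonwh} to the monotone sequence $(\phi(\alpha_i))_{i\geq 1}$ yields an index $n$ with $\phi(\alpha_h)\equiv_{\mathcal L}\phi(\alpha_k)$ for all $h,k\geq n$. The central step is then to transfer this equivalence back across $\phi$, i.e.\ to show $\alpha_h\equiv_{\phi^{-1}(\mathcal L)}\alpha_k$ for the same $h,k$. For any $\gamma\in\Sigma^*$ one has the chain $\alpha_h\gamma\in\phi^{-1}(\mathcal L)\iff\phi(\alpha_h)\phi(\gamma)\in\mathcal L\iff\phi(\alpha_k)\phi(\gamma)\in\mathcal L\iff\alpha_k\gamma\in\phi^{-1}(\mathcal L)$, where the middle equivalence is the instance $\delta=\phi(\gamma)$ of $\phi(\alpha_h)\equiv_{\mathcal L}\phi(\alpha_k)$. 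Hence $(\alpha_i)_{i\geq 1}$ is eventually constant modulo $\equiv_{\phi^{-1}(\mathcal L)}$, and by Lemma \ref{infinite_nonwh} the language $\phi^{-1}(\mathcal L)$ is Wheeler.

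I expect no serious obstacle here; the only point that deserves care is the transfer step, where it matters that the Myhill--Nerode equivalence $\phi(\alpha_h)\equiv_{\mathcal L}\phi(\alpha_k)$ quantifies over \emph{all} right contexts $\delta\in\Sigma'^*$, so that in particular it applies to the contexts of the restricted form $\delta=\phi(\gamma)$ that arise from testing $\equiv_{\phi^{-1}(\mathcal L)}$ over $\Sigma^*$. This one-sided direction is exactly what the morphism property $\phi(\alpha\gamma)=\phi(\alpha)\phi(\gamma)$ delivers, so the argument goes through.
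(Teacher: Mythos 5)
Your proof is correct and is essentially the paper's argument in contrapositive-free form: the paper assumes $\phi^{-1}(\mathcal L)$ is not Wheeler, pushes a non-stabilising monotone sequence forward through $\phi$, and concludes $\mathcal L$ is not Wheeler, while you push an arbitrary monotone sequence forward, use Wheelerness of $\mathcal L$ to stabilise it, and transfer the equivalence back — the same two ingredients (Lemma \ref{infinite_nonwh} and the right-context transfer via $\phi(\alpha\gamma)=\phi(\alpha)\phi(\gamma)$) in both cases. Your write-up actually makes the key transfer step more explicit than the paper does, but the route is the same.
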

\begin{proof}
  If   $\phi:(\Sigma^*,\preceq) \rightarrow (\Sigma'^*,\preceq)$ is a morphism of ordered monoids and   $\phi^{-1}(\mathcal L)$ is not Wheeler, we prove that  $\mathcal L$ is not Wheeler. Since regular languages are closed by inverse images of morphisms,  $\phi^{-1}(\mathcal L)$ is a regular,  non Wheeler language;  by Lemma \ref{infinite_nonwh} there exists a strictly  monotone sequence $(\gamma_i)_{i\in \mathbb N}$ in $\pf {\phi^{-1}(\mathcal L)}$ with $\gamma_i\not \equiv_{\phi^{-1}(\mathcal L)} \gamma_{i+1}$. Since $\phi$ is a morphism,   we obtain $\phi(\gamma_i)\in \pf L$. Moreover, since $\phi$ is a co-lex morphism, we obtain that  $(\phi(\gamma_i))_{i\in \mathbb N}$ is monotone  and  $\phi(\gamma_i)\not \equiv_{\mathcal L} \phi(\gamma_{i+1})$ for all $i$.    Hence, $(\phi(\gamma_i))_{i\in \mathbb N}$ is strictly monotone and  Lemma \ref{infinite_nonwh} implies that $\mathcal L$ is not Wheeler.
\end{proof}

 The closure of Wheeler languages under the inverse image of co-lex morphisms may  suggest a natural generalization of the algebraic characterization of regular languages.   Remember that  a language $\mathcal L\subseteq \Sigma^*$ is said to be {\em recognized by a monoid morphism} $\phi:(\Sigma^*, \cdot)\rightarrow (M, \cdot)$ if $\mathcal L=\phi^{-1}(\phi(\mathcal L))$ (or, equivalently, if $\alpha \in \mathcal L$ and $\phi(\alpha)=\phi(\beta)$ implies $\beta\in \mathcal L$).   The algebraic characterization of regular languages states that these languages are exactly    the ones which are recognized by  morphisms over finite monoids.
 
 Suppose now    we add a total order $\leq$ over the elements of the monoid $M$; we say that a  monoid morphism $\phi:(\Sigma^*, \cdot)\rightarrow (M, \cdot)$ {\em respect } the corresponding orders $\prec, \leq$ 
if,  for all $\alpha, \beta\in \Sigma^*$ it holds:
\[ \alpha\preceq \beta \Rightarrow \phi(\alpha)\leq \phi(\beta).\]

\begin{lemma}\label{pseudochar}
If a language $\mathcal L\subseteq \Sigma^*$ is  recognized by   a morphism  over a  finite monoid $(M,\cdot)$ and $\leq$ is an order over $M$ such that $\phi$ respect  the orders $\prec, \leq$, then $\mathcal L$ is Wheeler. 
\end{lemma}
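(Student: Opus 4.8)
The plan is to invoke the automata-free characterization of Wheelerness in Lemma~\ref{infinite_nonwh}, so that the claim reduces to showing that every monotone sequence in $(\pf L,\prec)$ becomes eventually constant modulo $\equiv_{\mathcal L}$. To apply that lemma I first need $\mathcal L$ to be regular, which holds because $\mathcal L$ is recognized by a morphism onto a \emph{finite} monoid: this is precisely the algebraic characterization of regular languages recalled just before the statement.

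First I would take an arbitrary monotone sequence $(\alpha_i)_{i\geq 1}$ in $\pf L$ and transport it through $\phi$. Since $\phi$ respects the orders, $\alpha_i\preceq\alpha_{i+1}$ implies $\phi(\alpha_i)\leq\phi(\alpha_{i+1})$ (and dually in the decreasing case), so $(\phi(\alpha_i))_{i\geq 1}$ is a monotone sequence in the totally ordered set $(M,\leq)$. Because $M$ is finite, any monotone sequence in $(M,\leq)$ stabilizes; hence there exists $n$ such that $\phi(\alpha_h)=\phi(\alpha_k)$ for all $h,k\geq n$.

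The key remaining step is to show that equality of $\phi$-images forces Myhill--Nerode equivalence, i.e.\ $\phi(\alpha)=\phi(\beta)\Rightarrow \alpha\equiv_{\mathcal L}\beta$. Here I would use that $\phi$ is a monoid morphism together with the recognition hypothesis $\mathcal L=\phi^{-1}(\phi(\mathcal L))$: for every $\gamma\in\Sigma^*$ we have $\phi(\alpha\gamma)=\phi(\alpha)\phi(\gamma)=\phi(\beta)\phi(\gamma)=\phi(\beta\gamma)$, and since membership in $\mathcal L$ depends only on the $\phi$-image, $\alpha\gamma\in\mathcal L\Leftrightarrow\beta\gamma\in\mathcal L$. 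This gives $\alpha^{-1}\mathcal L=\beta^{-1}\mathcal L$, that is $\alpha\equiv_{\mathcal L}\beta$. Combining this with the stabilization above yields $\alpha_h\equiv_{\mathcal L}\alpha_k$ for all $h,k\geq n$, which is exactly the eventual constancy modulo $\equiv_{\mathcal L}$ demanded by Lemma~\ref{infinite_nonwh}, so $\mathcal L$ is Wheeler.

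The argument is short and the only points needing care are the observation that a monotone sequence in a finite linear order is eventually constant (immediate) and the passage from $\phi$-equality to $\equiv_{\mathcal L}$-equivalence. The latter is the genuine crux: it relies essentially on $\phi$ being a \emph{morphism} (so that right concatenation is compatible with the $\phi$-image) rather than merely order-preserving, and on the recognition hypothesis. I expect no substantive obstacle beyond correctly marshaling these two facts.
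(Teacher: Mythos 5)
Your proof is correct and follows essentially the same route as the paper's: both reduce to Lemma~\ref{infinite_nonwh}, push a monotone sequence through $\phi$ into the finite linearly ordered monoid, and use the recognition hypothesis to pass from equality of $\phi$-images to $\equiv_{\mathcal L}$-equivalence. The only (inessential) difference is that you argue directly that every monotone sequence stabilizes, whereas the paper runs the contrapositive, assuming a non-stabilizing sequence and deriving a contradiction with the finiteness of $M$.
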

 \begin{proof}  $\mathcal L$   is regular, since it is  recognized by   a morphism  over a  finite monoid $(M,\cdot)$. Suppose it is not Wheeler. Then by Lemma  \ref{infinite_nonwh} there exists a monotone (say increasing) sequence $(\alpha_i)_{i\in \mathbb N}$  in $\pf L$  which is not eventually constant.
Since the morphism respect the order, we have $\phi(\alpha_i)\preceq \phi(\alpha_{i+1})$, for all $i$. Moreover, $\phi(\alpha_i)\neq\phi(\alpha_{i+1})$, for every $i$ such that  $\alpha_i\not \equiv_{\mathcal L} \alpha_{i+1}$:   from the previous inequality it follows that  there exists $\delta\in \Sigma^*$ with $\alpha_i \delta\in \mathcal L$ and $\alpha_{i+1}  \delta\not \in \mathcal L$  (or viceversa); if $\phi(\alpha_i)=\phi(\alpha_{i+1})$ then 
\[\phi(\alpha_i\delta)=\phi(\alpha_i)\phi(\delta)=\phi(\alpha_{i+1})\phi(\delta)=\phi(\alpha_{i+1}\delta),\]
and from $\alpha_i \delta\in \mathcal L$   it then follows $\alpha_{i+1}\delta\in  \mathcal L$, a contradiction. 
Hence, $( \phi(\alpha_i)_{i\in \mathbb N}$ should be a monotone sequence which is strictly increasing for infinitely many index $i$, which is impossible, since $M$ is finite. 

\end{proof}

Unfortunately, Lemma \ref{pseudochar} is too strong and cannot be reversed:  the class of  languages which are recognized by morphism as in Lemma  \ref{pseudochar} is closed under complements and factors, while Wheeler languages are not.

\subsection{Intervals} 
 \begin{definition}  If  $\alpha_0\preceq \alpha_1 \in \Sigma^+$,   we define    the  {\em intervals }  $(\alpha_0, \alpha_1), [\alpha_0, \alpha_1), (-\infty , \alpha_1)$ \ldots based on $\alpha_0, \alpha_1$  as usual, e.g.:
 \[ (\alpha_0,\alpha_1 )=\{\beta\in \Sigma^*: \alpha_0\prec \beta \prec \alpha_1\},~~ [\alpha_0,\alpha_1 )=\{\beta\in \Sigma^*: \alpha_0\preceq \beta \prec \alpha_1\}, ~~  (-\infty,\alpha_1 )=\{\beta\in \Sigma^*: \beta \prec \alpha_1\}, \ldots \] 
\end{definition}

\begin{lemma}\label{intervals} Suppose $\alpha_0\preceq \alpha_1 \in \Sigma^+$. Then  all intervals based on  $\alpha_0, \alpha_1$  are Wheeler.
\end{lemma}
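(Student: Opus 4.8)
The plan is to reduce everything to the automata-free criterion of Lemma \ref{infinite_nonwh}: a regular language is Wheeler if and only if every monotone sequence of its prefixes is eventually constant modulo $\equiv_{\mathcal L}$. I will carry out the argument for the generic two-sided open interval $\mathcal L = (\alpha_0,\alpha_1)$; the half-open, closed, and one-sided variants follow from the identical reasoning, since only the strict/non-strict comparisons against the endpoints change, and for the ``long'' strings that ultimately matter these comparisons coincide anyway (such strings differ in length from the endpoints, so strict and non-strict agree).

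Set $m = \max\{|\alpha_0|,|\alpha_1|\}$ and write $\mathit{suff}_m(\beta)$ for the length-$m$ suffix of a string $\beta$ with $|\beta|\ge m$. The first step, which also yields regularity, is a right-context computation: I claim that if $|\beta|,|\beta'|\ge m+1$ and $\mathit{suff}_m(\beta)=\mathit{suff}_m(\beta')$, then $\beta \equiv_{\mathcal L}\beta'$. The point is that for any $\gamma$ the word $\beta\gamma$ is strictly longer than both $\alpha_0$ and $\alpha_1$, so whether $\alpha_0 \prec \beta\gamma \prec \alpha_1$ holds is decided entirely by the last $m$ characters of $\beta\gamma$; and those characters are determined by $\gamma$ together with a suffix of $\beta$ of length at most $m$, hence by $\mathit{suff}_m(\beta)$ alone. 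Since there are finitely many length-$m$ suffixes and finitely many strings of length at most $m$, this shows $\equiv_{\mathcal L}$ has finite index, so $\mathcal L$ is regular and Lemma \ref{infinite_nonwh} is applicable.

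The key step is the observation that the map $\beta \mapsto \mathit{suff}_m(\beta)$ is co-lexicographically monotone on $\{\beta : |\beta|\ge m\}$: if $\beta \prec \beta'$ then $\mathit{suff}_m(\beta)\preceq \mathit{suff}_m(\beta')$. This is immediate from the fact that $\prec$ is decided by scanning from the right: the first position (from the right) at which $\beta$ and $\beta'$ differ either lies within the last $m$ characters, in which case the two length-$m$ suffixes differ first exactly there and with the same inequality, or it lies strictly to the left, in which case the two suffixes coincide.

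With these two facts the proof closes quickly. Take a monotone, say $\preceq$-increasing, sequence $(\beta_i)_{i\ge 1}$ in $\mathit{Pref}(\mathcal L)$. If it takes only finitely many values it is eventually constant. Otherwise, since only finitely many strings have length at most $m$, eventually every $\beta_i$ has length $\ge m+1$; by the monotonicity step the strings $\mathit{suff}_m(\beta_i)$ form a $\preceq$-non-decreasing sequence inside the finite set of length-$m$ strings, hence are eventually constant, and then by the right-context step the $\beta_i$ are eventually all $\equiv_{\mathcal L}$-equivalent. The decreasing case is symmetric. Thus every monotone sequence in $\mathit{Pref}(\mathcal L)$ stabilizes modulo $\equiv_{\mathcal L}$, and $\mathcal L$ is Wheeler by Lemma \ref{infinite_nonwh}. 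I expect the routine length bookkeeping in the right-context step to be the only fiddly part; the co-lex monotonicity of the length-$m$ suffix map is the clean idea driving the whole argument.
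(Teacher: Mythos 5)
Your proof is correct, but it takes a genuinely different route from the paper's. The paper derives the result from the closure properties it has already established in this section: it writes $(-\infty,\alpha_1)$ as $\Sigma^+\cdot F$ union a finite set (where $F$ is the finite set of words of length $|\alpha_1|$ preceding $\alpha_1$), similarly writes $(\alpha_0,+\infty)$ as $\Sigma^*\cdot F'$, and then invokes the lemmas on right-concatenation with a finite set, intersection, and union with a finite set. You instead verify the automata-free criterion of Lemma \ref{infinite_nonwh} from scratch, via two clean observations: membership of a string of length $>m=\max\{|\alpha_0|,|\alpha_1|\}$ in the interval depends only on its length-$m$ suffix, and the length-$m$ suffix map is co-lexicographically monotone. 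Both observations check out (in the second, the first position of disagreement from the right either falls within the last $m$ characters, transferring the strict inequality to the suffixes, or falls beyond them, making the suffixes equal), and your reduction of the closed/half-open/one-sided variants to the open case is legitimate since long strings cannot equal the endpoints. The paper's argument is shorter given the surrounding machinery and exposes the structural decomposition of intervals as concatenations; yours is self-contained modulo Lemma \ref{infinite_nonwh}, establishes regularity directly via the finite index of $\equiv_{\mathcal L}$, and isolates the monotonicity of the suffix map as the driving idea, which is arguably more illuminating about \emph{why} intervals are Wheeler.
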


 \begin{proof}
Let $\alpha_1\in \Sigma^+$, and  consider the interval   $(-\infty,\alpha_1 )$.  If  $F=\{\beta : \beta \prec \alpha_1, |\beta|= |\alpha_1|\}$ we have 
\[(-\infty,\alpha_1 )=\Sigma^+\cdot F \cup \{\gamma: \gamma\prec \alpha_1, |\gamma|\leq |\alpha_1|\}\]
which is   Wheeler  by Lemma \ref{bool_pos}.

Similarly, 
\[(\alpha_0, +\infty )=\Sigma^*\cdot  \{\beta: \alpha_0\prec \beta,  |\beta|\leq |\alpha_0|\}, \] 
is Wheeler.
 If  $\alpha_0\prec \alpha_1$,  then the interval $(\alpha_0, \alpha_1)=(\alpha_0, +\infty)\cap (-\infty,\alpha_1)$ is   Wheeler, as intersection of Wheeler languages. 
Finally, the (half-)closed intervals $(-\infty,\alpha_0]$, $[\alpha_0, \alpha_1),  (\alpha_0, \alpha_1], \ldots$ are obtained from the  open versions by adding  one or two words, hence they are Wheeler by  Lemma \ref{bool_pos}.
\end{proof}

Note that Wheelerness does not generalize from interval to convex sets, as the following example shows.

\begin{example} 
The regular language \[ \mathcal L~=~ ax^*a~|~bx^*b~|~ b\]
is convex in $\pf L$ but it is not Wheeler. 
\end{example}

\section{Conclusions and Open Problems}

 Wheeler Languages represent a formal tool to elegantly and fruitfully cast the notion of ordering of strings of a regular language $ \mathcal L $ on an ordering of the states of an automaton $ \mathcal A $ recognising $ \mathcal L $. The key property, made explicit by the definition of Wheeler graphs, allows to doubly-link the co-lexicographic order of strings read while reaching a state $ q $ with the position of $ q $ in the Wheeler order of $ \mathcal A $'s states. This is obtained by the initial fixing of an ordering of the alphabet $ \Sigma $, which is the marking difference between the approach on ordering of states developed here and the work on ordered automata carried out in \cite{thierrin1974ordered_aut}.
 
 \medskip        
 
Many questions remain open, especially on the operational characterisation of Wheeler languages. Among the  problems left open we mention:
 
\begin{enumerate}
\item   Theorem \ref{decidability} allow us to prove that the problem of deciding a regular language  accepted by a given finite deterministic automaton   is Wheeler in polynomial time.  Can we generalise this theorem to NFA's, in order to 
show that we can decide in polynomial time if a regular languages accepted by a $NFA$  is Wheeler?
\item Is there a natural fragment of $FO(<)$ describing Wheeler Languages, or, is there a natural \emph{logic} describing Wheeler Languages?
\item Can we find a finite number of ``Wheeler operations'' and a finite number of  ``basic Wheeler Languages'' such that all Wheeler languages are obtained from the basic ones using the Wheeler operations?
\item Can we characterise Wheeler languages using monoids or other algebraic structures?
\end{enumerate}   

\

{\bf Acknowledgements}. We thank Davide Martincigh for careful reading of this paper and for elegant suggestions.

\bibliographystyle{alpha}
\bibliography{Arxiv_Wheeler_Languages}

\begin{thebibliography}{EGMT19}

\bibitem[ADPP20]{alanko2020regular}
Jarno Alanko, Giovanna D'Agostino, Alberto Policriti, and Nicola Prezza.
\newblock Regular languages meet prefix sorting.
\newblock In {\em Proceedings of the 2020 ACM-SIAM Symposium on Discrete
  Algorithms}, pages 911--930, 2020.

\bibitem[BW94]{Burrows94ablock-sorting}
Michael Burrows and David Wheeler.
\newblock A block-sorting lossless data compression algorithm.
\newblock Technical report, DIGITAL SRC RESEARCH REPORT, 1994.

\bibitem[DG08]{DBLP:conf/birthday/DiekertG08}
Volker Diekert and Paul Gastin.
\newblock First-order definable languages.
\newblock In J{\"{o}}rg Flum, Erich Gr{\"{a}}del, and Thomas Wilke, editors,
  {\em Logic and Automata: History and Perspectives [in Honor of Wolfgang
  Thomas]}, volume~2 of {\em Texts in Logic and Games}, pages 261--306.
  Amsterdam University Press, 2008.

\bibitem[EGMT19]{equi2019complexity}
Massimo Equi, Roberto Grossi, Veli M{\"a}kinen, and Alexandru~I Tomescu.
\newblock On the complexity of string matching for graphs.
\newblock In {\em 46th International Colloquium on Automata, Languages, and
  Programming (ICALP 2019)}. Schloss Dagstuhl-Leibniz-Zentrum fuer Informatik,
  2019.

\bibitem[EMT20]{equi2020conditional}
Massimo Equi, Veli M{\"a}kinen, and Alexandru~I Tomescu.
\newblock Conditional indexing lower bounds through self-reducibility.
\newblock {\em arXiv preprint arXiv:2002.00629}, 2020.

\bibitem[GMS17]{gagie2017wheeler}
Travis Gagie, Giovanni Manzini, and Jouni Sir{\'e}n.
\newblock {Wheeler graphs: A framework for BWT-based data structures}.
\newblock {\em Theoretical computer science}, 698:67--78, 2017.

\bibitem[GT19]{DBLP:conf/esa/GibneyT19}
Daniel Gibney and Sharma~V. Thankachan.
\newblock On the hardness and inapproximability of recognizing wheeler graphs.
\newblock In Michael~A. Bender, Ola Svensson, and Grzegorz Herman, editors,
  {\em 27th Annual European Symposium on Algorithms, {ESA} 2019, September
  9-11, 2019, Munich/Garching, Germany.}, volume 144 of {\em LIPIcs}, pages
  51:1--51:16. Schloss Dagstuhl - Leibniz-Zentrum f{\"{u}}r Informatik, 2019.

\bibitem[LS62]{lyndon1962}
R.~C. Lyndon and M.~P. Schützenberger.
\newblock The equation $a^m=b^nc^p$ in a free group.
\newblock {\em Michigan Math. J.}, 9(4):289--298, 12 1962.

\bibitem[NM07]{DBLP:journals/csur/NavarroM07}
Gonzalo Navarro and Veli M{\"{a}}kinen.
\newblock Compressed full-text indexes.
\newblock {\em {ACM} Comput. Surv.}, 39(1):2, 2007.

\bibitem[ST74]{thierrin1974ordered_aut}
H.-J. Shyr and G.~Thierrin.
\newblock Ordered automata and associated languages.
\newblock {\em Tamkang J. Math}, 5:9--20, 1974.

\end{thebibliography}
\end{document}